\pdfoutput=1
\documentclass[conference,10pt,letterpaper]{IEEEtran}
\usepackage{color, url, verbatim}
\usepackage{algorithm,algorithmic,booktabs}
\usepackage{graphicx, bbm, latexsym, amsfonts, verbatim, amsmath, amssymb}
\newcommand{\zblue}{\textcolor{black}}
\newcommand{\tred}{\textcolor{black}}
\newtheorem{observation}{Observation}
\newtheorem{define}{Definition}
\newtheorem{definition}{Definition}
\newtheorem{assumption}{Assumption}
\newtheorem{proposition}{Proposition}
\newtheorem{theorem}{Theorem}
\newtheorem{lemma}{Lemma}
\newtheorem{proof}{Proof}

\def\Pe{{\mathcal P}}
\newcommand{\nproof}{\noindent{\em Proof: }}
\newcommand{\proofend}{\hfill$\Box$ \newline\noindent}

\title{\zblue{Robust Network Function Virtualization}}

\author{
\IEEEauthorblockN{Tachun~Lin}
\IEEEauthorblockA{Department of Computer Science and Information Systems\\
Bradley University, Peoria, IL USA}
Email: djlin@bradley.edu%}
\and
\IEEEauthorblockN{Zhili~Zhou}
\IEEEauthorblockA{United Airlines\\
Chicago, IL USA}
Email: zhili.zhou@united.com%}
}

\begin{document}
\maketitle

\begin{abstract}
\tred{Network function virtualization (NFV) enables on-demand network function (NF) deployment providing agile and dynamic network services. Through an evaluation metric that quantifies the minimal reliability among all NFs for all demands, service providers and operators may better facilitate flexible NF service recovery and migration, thus offer higher service reliability.}
\zblue{In this paper, we present evaluation metrics on NFV reliability and solution approaches to solve robust NFV under random NF-enabled node failure(s). 
We demonstrate how to construct an auxiliary NF-enabled network and its mapping onto the physical substrate network. With constructed NF-enabled network, we develop pseudo-polynomial algorithms to solve the robust NF and SFC $s-t$ path problems -- subproblems of robust NFV. We also present approximation algorithms for robust NFV with the SFC-Fork as the NF forwarding graph. Furthermore, we propose exact solution approaches via mixed-integer linear programming (MILP) under the general setting. Computational results show that our proposed solution approaches are capable of managing robust NFV in a large-size network.}

\end{abstract}

\section{Introduction}\label{sec:introduction}
The development of 5G networks targets to deliver ultra-reliable and super low latency communication~\cite{alliance20155g,osseiran2014scenarios}, which supports dynamic requests over large-scale cross-domain networks. Through network function virtualization (NFV), a 5G-enabling technique, network functions (NFs) are decoupled from costly proprietary networking hardware and are realized through their software implementation of virtual network functions (VNFs) running on industry-standard commercial off-the-shelf (COTS) hardware~\cite{abdelwahab2016network,han2015network}. Radio signal processing~\cite{peng2015fronthaul} and mobile/optical networking~\cite{mijumbi2016network} are also applying the NFV and deploy VNFs on NF-enabled physical infrastructures, such as virtual machines and containers, and provide on-demand NF services~\cite{etsi20175G}.

NF service providers provision, manage, and orchestrate VNFs with NFV management and orchestration architectures (MANO)~\cite{etsi15nfvi} for end users which request a sequence of NFs called ``\textit{service function chaining}'' (SFC). An instance of SFC is (firewall $\rightarrow$ intrusion prevention system $\rightarrow$ load balancer). We let ``\textit{non-chained}'' NF requests denote the NF requests without a specified sequence. With NFV, network operators allocate and reallocate VNF instances and route network traffic between service functions. Hence, NFV does not only provide more flexibility but also shortens the enabling time of new NF services~\cite{etsi15nfvi}. To realize end-user demands with NF requests, the NF provisioning problem, which determines the physical infrastructure to deploy VNF instances to fulfill NF requests, arises. We illustrate an instance of NF provisioning problem in Fig.~\ref{fig:nfDply}.
\begin{figure}[!h]
\centering
\includegraphics[scale=0.5]{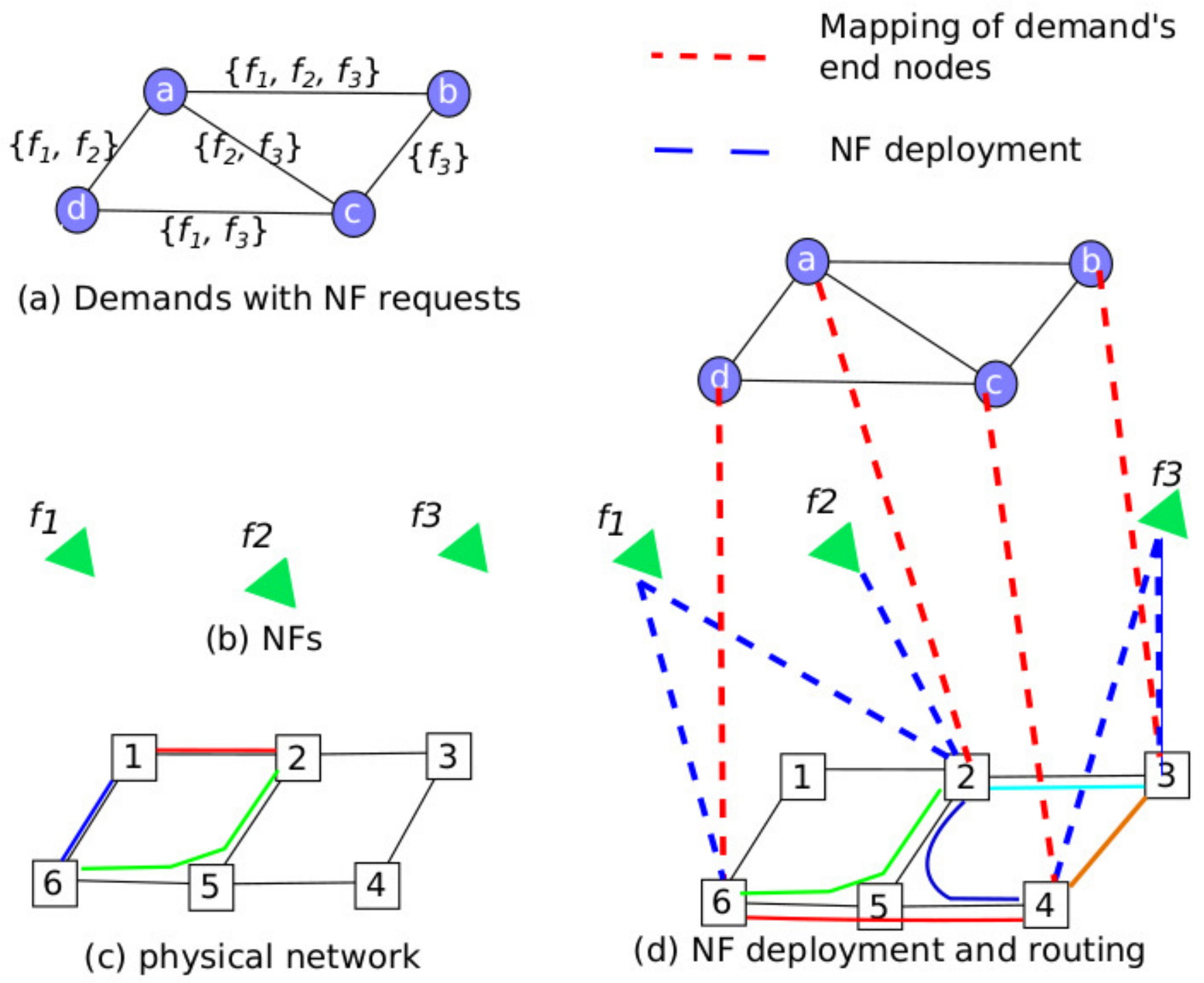}
\caption{NF provisioning with NFs deployment on NF-enabled nodes}\label{fig:nfDply}
\end{figure}

NFV MANO also supports NF recovery and migration, the major approaches to guarantee continuity, resilience, and security of NF services~\cite{eramo2017approach,mijumbi2016network,wang2017consistent}. When a VNF instance is not reachable~\cite{etsi2014nfvMO}, MANO initiates the fail-over to other available NF instances and automatically recovers NF services, or instantiates new VNFs~\cite{xia2014VPool}. Meanwhile, dynamic and flexible VNF migration also reduces power consumption and the burden on hardware capacity~\cite{mijumbi2016network}. To support NF recovery and migration, reserving physical resources should also be considered in the NF provisioning problem.

While the above studies provide valuable insights from different aspects of NF services, they cannot be used to quantify system capability and reliability to support NF recovery and migration~\cite{eramo2017migration,hawilo2017orchestrating,han2017resiliency} as well as seamless NFV state transitions~\cite{nobach2017statelet,wang2016transparent} under component failure(s). Motivated by the objective of providing ultra-reliable 5G services, we \tred{studied} the robust NF provision problem \tred{in~\cite{RNDM18}}
which takes into account the uncertain failures of NF-enabled nodes from the network operators' perspective. \tred{Robust evaluation metrics proposed in~\cite{RNDM18} on robust VNF provisioning target to} provide VNF managers/orchestrators a way to evaluate the strategies to instantiate VNFs on available NF-enabled nodes (NF resource pools) based on the information of the physical infrastructure and resource utilization. 

\zblue{This study addresses robust NF provisioning and related network design and routing problems. The robust NF provisioning problem determines the location of VNF instance deployment, and possible NF request fulfillment is determined with a robust evaluation metric as the objective to handle random NF-enabled node failures. Extended from ~\cite{RNDM18}, we study three sets of network design and routing problems for both considering and not considering NF-enabled node failure: (1) the minimal weighted SFC $s-t$ path problem, which determines the minimal weighted path for SFC requests visiting all required NFs in sequences through NF-enabled nodes; (2) VNF provisioning with SFC-Fork as the forwarding graph structure, which determines deployment locations of VNF instances realizing all SFC requests; and (3) VNF provisioning with general NF forwarding graph.}
The first set of problems is a fundamental problem in NFV which helps establish the end-to-end route to fulfill NF requests. The second set has SFC-Fork as the NF forwarding graph, which is the commonly established forwarding graph in NFV 5G implementation (see in~\cite{cheng2015enabling, jalalitabar2018service, nfv2014001, openSFC}). To address these problems, we construct an auxiliary NF-enabled network that serves as an intermediate layer between the NF forwarding graph and the physical substrate network and provides all possible connections among NF-enabled nodes. We present pseudo-polynomial algorithms for the NF and SFC $s-t$ path problems and approximation algorithms for the NF provisioning problem with SFC-Fork. We also validate our proposed solution approaches with computational results over small and large scale national-wide physical networks.

\tred{We highlight our contributions in this paper as follows.}

\begin{enumerate} 
\item We propose robust evaluation metrics~\cite{RNDM18} on robust VNF provisioning to provide VNF managers/orchestrators a way to evaluate strategies in instantiating VNFs on available NF-enabled nodes (in NF resource pools) based on the physical infrastructure and resource utilization.
\item We construct multi-layer graphs and establish their corresponding mapping relationships, which provide network structures to solve the NFV design problems. 
\item We provide pseudo-polynomial algorithms for NF and SFC $s-t$ path problems in both deterministic and robust settings.
\item We demonstrate the existence of bi-factor approximation algorithms on NF provisioning with SFC-Fork and propose corresponding algorithms. 
\item We propose a two-step parameterized path reduction technique in approximation algorithm design to manage branching structures in tree networks.
\item \tred{With the insight obtained from the approximation algorithm for the optimal NF provisioning problem, we develop an approximation algorithm for the robust NF provisioning problem.}
\end{enumerate}
\zblue{In short, the robust NF provisioning problems address the problem of sequential location selection/resource allocation and the routing through selected locations, which add new variants and dimensions in the traditional location and location-routing problems. Our proposed evaluation metrics and solution approaches serve the purpose of dealing with these new variants.}

The rest of the paper is organized as follows. We first present the related works, problems, and their solution approaches in section~\ref{sec:literature}. \tred{Especially, we summarize the approximation algorithms for $k$-level facility location problems and their robust/fault-tolerant relatives.} In section~\ref{sec:prDescription}, we review the evaluation metric for NF services, the robust NF-service evaluation metric, for non-chained NFs and SFCs and define the robust NF provisioning problem and related subproblems. We present corresponding solution approaches in Section~\ref{sec:formulation}, where an auxiliary NF-enabled network is constructed. We introduce the (robust) SFC path algorithms and the mixed-integer programming formulations to solve robust NF provisioning for non-chained NF and SFC requests, respectively. The experiment setting and computational results to validate our proposed approaches are given in section~\ref{sec:computation}. We also demonstrate the low bound benchmark of the robust NF-service evaluation metrics, followed by the conclusions and future research directions in section~\ref{sec:conclusion}. \tred{We also include in the appendix a bit theoretical study to entertain the audience with the fondness of theorems, proofs, and approximation ratio of the algorithm.}

\section{Literature Review}\label{sec:literature}
\subsection{Network Function Virtualization Techniques}\label{subsec:nfv}
We review in this section the works on NFV resource allocations and existing solution approaches and focus on the approximation algorithms for facility location problems and relevant location routing problems. Related topics have been reviewed in survey papers, such as NFV architectures~\cite{bonfim2018integrated}, mobile applications~\cite{nguyen2017sdn}, and NF deployment/provisioning related resource allocation~\cite{nguyen2017resource,xie2016service}. Most recent works on NFV MANO focus on NFV instantiation~\cite{carpio2017vnf,zhang2018theory}, orchestration~\cite{zeng2016orchestrating}, management~\cite{d2017game}, and scheduling~\cite{alameddine2017interplay}. Newly developed technologies are capable to support NFV in various telecommunication systems.

\subsection{Related Problems and Solution Approaches}\label{subsec:relatedProblem}
Most resources allocation problems for NF deployment/provisioning~\cite{nguyen2017resource} are under the setting that given a physical substrate network (an available NFVI), the set of NF-enabled nodes is a subset of physical nodes and the end-to-end NF demands are established/realized through paths in the physical substrate network. NF deployment problem determines locations and copies of VNF instances deployed and generates end-to-end routes (static/dynamic) for NF requests. On top of the NF deployment problem, NF provisioning problem further estimates physical resources considering also the quality-of-service (QoS). Hence, NF deployment and provisioning problems belong to location-routing problems.

SFC route generation~\cite{sallam2018shortest} and NF forwarding graphs embedding~\cite{gupta2018scalable} are the two approaches to manage SFC requests either individually or jointly. Hence, even with the simplest case, the minimal weighted SFC path problem is different from the shortest path problem, which requires visiting VNF instantiated physical nodes in the desired sequence. \cite{cohen2015near} decomposes non-chained VNF deployment into two stages: (1) VNF instance deployment via facility location problem, and (2) VNF instance assignment for NF service requests via assignment problem, which allows NF service requests to be fulfilled by splittable flows. They provide a mixed integer linear programming (MILP) formulation for each stage with the objective to minimize the system-wide operation costs and build an upper-bounded heuristic algorithm. \cite{bari2015orchestrating} studies SFC deployment and presents a MILP formulation which realizes SFC requests via multiple virtual network embeddings~\cite{chowdhury2009virtual}. A heuristic algorithm based on a multi-stage directed graph and Viterbi algorithm is proposed, which takes each SFC as a virtual network and maps each virtual network onto the given physical substrate network.
~\cite{rost2016service} studies multiple SFCs (forwarding graph) embedding and proposes a polynomial-time approximation algorithm based on random routing techniques with linear programming relaxation.~\cite{even2016approximation} provides a randomized approximation algorithm leveraging multi-commodity flows for path computation and function placement.~\cite{sallam2018shortest} focuses on the SFC counterpart of standard graph theory problems, such as the minimal weighted SFC path and the SFC maximal flow. The proposed pseudo-polynomial algorithms solve the minimal weighted SFC paths on a transferred network graph and a special case of SFC maximal flow problem.

With the capacity limitation on the physical substrate network and single-path realization of end-user demands, \cite{lin2016demand} deploys VNF instances in an optical backbone network and formulates the problem as unit flow multicommodity flow problem. The resource competition between NF instantiation and demand routing is captured using game theory setting, which reveals that a system-wide optimal solution may not be preferable for network service providers and individual end-users.~\cite{d2017game} takes into account the selfishness and competitiveness of end-users' behavior and formulates an atomic weighted congestion game for SFC routing. It proposes a polynomial-time algorithm to achieve Nash equilibrium with a bounded price of anarchy. Besides planning and operation costs, QoS is another key evaluation metric for network services, especially in a 5G environment.
One of the 5G-PPP research projects funded by the European Union, 5G-NORMA~\cite{5gNormaQoS}, provides QoS requirements on NF service during 5G implementations considering deployment failure due to lack of infrastructure resources.~\cite{fan2015grep} demonstrates that controlled redundancy provides extra protection and recovery capability for network services when a physical failure occurs. It develops an online heuristic algorithm minimizing physical resource consumption and guaranteeing service reliability with backup resources.~\cite{qu2016reliability} studies the reliability-aware NF provisioning problem and proposes MILP formulation and greedy based heuristic algorithm, in which extra backup VNF instances for NF requests are deployed.
From the operators' perspective on managing the QoS of NF services, in this study, we fill in the gap with the study of the QoS-controlled NF provisioning problem. We first propose a new evaluation metric on NF-service reliability under the worst-case scenario, followed by studying \textit{robust NF provisioning} under the failure uncertainty of NF-enabled node. As mentioned earlier, two subproblems are involved in the NF provisioning problem: (1) The minimal-weighted SFC path problem: different from~\cite{sallam2018shortest}, we first propose a Dijkstra-like algorithm with an extended level of information; and (2) \zblue{robust NF provisioning: we present mixed-integer programming formulations as its exact solution approach}.

\subsubsection{Facility Location Approximations}\label{subsubsec:approximation}
\tred{We briefly summarize some existing approximation algorithms for the $k$-level facility location problem and robust fault-tolerant facility location problem with a restriction on the standard uncapacitated facility location problem where no penalty is allowed. These related works would serve as the foundation for us to discuss the potential of the approximation algorithms for robust NF provisioning.
The $k$-level facility location problem, which has a client set and $k$ types/levels of facility sets, targets to connect clients to opened facilities at each level (in the order of level 1 to level $k$) with minimal costs/weights.
Through linear programming (LP) relaxation, ~\cite{guha1999greedy} generalized the 1-level uncapacitated facility location problem and developed an algorithm with an approximation ratio of 1.463.~\cite{krishnaswamy2012inapproximability} improved the ratio to 1.61 for general $k$. Harder than the 1-level facility location problem, the $k$-level facility location problem has a currently best-known approximation with a ratio of 1.488~\cite{li20131}. The constant factor approximation was started from a 2-level problem ($k=2$) with a 3.16-approximation algorithm~\cite{shmoys1997approximation}, and a 3-approximation for general $k$~\cite{aardal19993}.~\cite{ageev2004improved} demonstrated that the $k$-level problem could be reduced to $(k-1)$-level problem and a 1-level problem and provided a 2.43-approximation, which was further improved by~\cite{zhang2006approximating} and \cite{byrka2007optimal}\cite{byrka2016improved} through LP primal-dual, randomization, and facility scaling techniques.~\cite{chechik2009robust} introduced the robust fault-tolerance problem, which has a two-stage robust optimization setting with the first stage determining the facility location and client assignment, and the second stage reconnecting clients if up to a total of $\alpha$ connected facility failed (which were opened in the first stage). Based on such setting,~\cite{chechik2009robust} developed a (7.5$\alpha$+1.5)-approximation algorithm, which was further improved to a $(k+5+k/4)$-approximation in~\cite{byrka2010lp} through LP-rounding.}

\section{Notations and Problem Description}\label{sec:prDescription}
In this section, we first provide the general notations used in the discussions. We then propose the robust NF-service evaluation metric and the robust VNF provisioning problem to minimize the number of instantiated VNFs while maximizing the robust NF-service evaluation metric.
Let $G_P=(V_P, A_P)/G_P=(V_P, E_P)$ be the physical infrastructure with node set $V_P$ and arc/edge set $A_P$/$E_P$, required NF set $F$, and end-to-end service request $\mathcal{D}=\{d_{st}\}$. Let node set $V^{f}_{P}\in V_P$ denote a physical resource pool for NF $f$ (candidate physical nodes to deploy $f$) and  $V^{F}_{P} = \cup_{f\in F}V^{f}_{P}$ be the NF-enabled node set. %We next introduce notation and parameters utilized aftermath.
Each of the NF-enabled nodes is with failure probability $\rho_{i}$, $i\in V^{F}_{P}$ and $0\leq \rho_i\leq 1$.

We assume that the NF requests $d_{st}, s, t\in V_L$ are known a priori. %, where $s,t$ are the ingress and egress points of SFC~\cite{bari2015orchestrating}.
Let $d_{st}$ be a tuple $[(s,t), \sigma_{st}, F_{st}]$, where $\sigma_{st}$ indicates whether the request is with SFC or not; if yes, $\sigma_{st}=1$, otherwise $\sigma_{st}=0$.
\zblue{We let $\widetilde{P}$ and $\overrightarrow{P}$ be the undirected and directed path sets in the physical network. A demand $(s, t)$ with NF requests is \emph{fulfilled} if it is routed through path $p_{st}\in \overrightarrow{P}$ visiting all required NFs  in the sequence defined in SFC when $\sigma_{st}=1$, or otherwise routed through undirected path $\eta_{st}\in \widetilde{P}$ visiting all required NFs with $\sigma_{st}=0$. To simplify the notation, we let $P$ represent the path set containing all undirected and directed paths of all NF requests.}

\tred{Without loss of generality, we assume that the physical substrate network is at least 2-connected in this paper.}
Notations and parameters are summarized in Table~\ref{tbl:notation}.

\begin{table}[h]
\caption{Notations and parameters}\label{tbl:notation}
\begin{tabular}{p{1.5cm}p{6.5cm}}
{Notation}  & {Description}\\
\hline
$G_P(V_P, A_P)$, $ G_P(V_P, E_P)$   & Physical infrastructure $G_P$ with $V_P, A_P/E_P$ as its node and arc/link sets, respectively\\
$G_L(V_L, E_L)$      & Logical network $G_L$ with $V_L, E_L$ as its node and link sets\\
$P$ & The undirected and directed path set in $G_P$, where $\eta\in P$ and $p\in P$ denote the undirected and directed paths, respectively\\
$F$ & The NF set, where $f\in F$ denotes a network function\\
$V^{F}_P$ &  A set of all NF-enabled nodes, $V^{F}_P \subseteq V_P$\\
$\Gamma(F)$ & NF instance deployment, denoted as a tuple $[\{f,i, n^{f}_{i}\}: f\in F, i\in V_P]$, where $n^{f}_{i}$ is the instances of $f$ deployed onto $i$ \\
\hline
{Parameter} & {Description}\\
\hline
$\mathcal{D}, d_{st}$             & $\mathcal{D}$ is a set of service requests, where each  $d_{st}\in \mathcal{D}$  is a tuple $[(s,t), \sigma_{st}, F_{st}]$ representing one service request; here $F_{st}$ is the set of required NFs for demand $(s,t)$, and $\sigma_{st}=1$ if demand $(s,t)$ is SFC (i.e., $f\in F_{st}$ should be executed in a fixed sequence), otherwise, $\sigma_{st}=0$.\\ % ; if $\sigma_{st}=1$, required NFs should be carried out only in a fixed sequence in $F_{st}$.\\
$\rho_i$ & The failure probability of NF-enabled physical node $i, i\in V^{F}_P$\\
\hline  % Please only put a hline at the end of the table
\end{tabular}
\end{table}

\subsection{Robust NF Service Evaluation Metric}
Our robust NF-service evaluation metric is based on the following observations.

\begin{observation}\label{ob:rqFulfill}
Given an NF-enabled node pool $V^{F}_{P}$ and requests $\mathcal{D}=\{d_{st}\}$, where request $d_{st}$ is realized through a path $\eta_{st}$. $d_{st}$ \emph{cannot} be fulfilled if and only if $V^{f}_P \cap \eta_{st}=\emptyset$, $f\in F_{st}$. %(with $\sigma_{st}=1$), or $V_P(\Gamma(F)) \cap \eta_{st}=\emptyset$ (with $\sigma_{st}=0$).
\end{observation}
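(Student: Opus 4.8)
The plan is to prove the biconditional in Observation~\ref{ob:rqFulfill} by unpacking the definition of ``fulfilled'' given just above the statement. Recall that a demand $(s,t)$ is fulfilled exactly when it is routed through a path (the undirected $\eta_{st}$ in the non-chained case, or the directed $p_{st}$ for an SFC) that visits, for every required NF $f \in F_{st}$, some physical node on which an instance of $f$ has been deployed. Since any node hosting an instance of $f$ must belong to the resource pool $V^f_P$, visiting an instance of $f$ along $\eta_{st}$ is only possible if $\eta_{st}$ contains at least one node of $V^f_P$, i.e. $V^f_P \cap \eta_{st} \neq \emptyset$. So the whole argument is essentially a contrapositive restatement of this observation for a single function $f$.

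First I would prove the ``if'' direction in contrapositive form: assume $d_{st}$ is fulfilled along $\eta_{st}$. Then by definition every $f \in F_{st}$ is executed at some node $i$ lying on $\eta_{st}$; that node has $n^f_i \geq 1$, hence $i \in V^f_P$, hence $i \in V^f_P \cap \eta_{st}$, so this intersection is nonempty for every $f \in F_{st}$. This shows that if there exists $f \in F_{st}$ with $V^f_P \cap \eta_{st} = \emptyset$, then $d_{st}$ cannot be fulfilled along $\eta_{st}$. Second, for the ``only if'' direction, I would also argue contrapositively: suppose that for every $f \in F_{st}$ we have $V^f_P \cap \eta_{st} \neq \emptyset$. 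Pick for each $f$ a node $i_f \in V^f_P \cap \eta_{st}$; deploying (or using an already-deployed) instance of $f$ at $i_f$ lets the path $\eta_{st}$ realize all required NFs, so $d_{st}$ is fulfilled. Hence if $d_{st}$ cannot be fulfilled, some required NF $f$ must satisfy $V^f_P \cap \eta_{st} = \emptyset$. Combining the two directions gives the stated equivalence.

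The only real subtlety — and the part I would be most careful about — is the SFC case ($\sigma_{st} = 1$), where the required NFs must appear in a prescribed order along the \emph{directed} path. There, nonemptiness of each $V^f_P \cap \eta_{st}$ is necessary but one should note that the ordering constraint is automatically accommodated once we are free to choose which intersecting node realizes each $f$ and, if needed, to traverse the relevant subpaths; in the auxiliary NF-enabled network construction introduced later in Section~\ref{sec:formulation} this ordering is handled structurally, so at the level of this observation the statement is about \emph{existence} of a fulfilling routing given the path, and the per-function intersection condition is exactly the obstruction. I would phrase the observation's proof for the non-chained reading (as the notation $\eta_{st}$ suggests) and remark that the SFC analogue follows identically with $p_{st}$ in place of $\eta_{st}$. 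I do not expect any genuine obstacle here; the claim is definitional, and the proof is a two-line contrapositive in each direction.
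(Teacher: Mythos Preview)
Your proposal is correct and matches the paper's own justification, which is a single sentence immediately following the observation stating that it ``is derived from the fact that $d_{st}$ can only be fulfilled if and only if (all) the required NFs are deployed onto physical node(s) in its selected path $\eta_{st}$.'' Your contrapositive unpacking of the definition is exactly this, only spelled out more carefully; your side remark on the SFC case is unnecessary here since the observation is stated for the non-chained path $\eta_{st}$, and indeed the ordering issue you flag would make the biconditional fail for a fixed directed $p_{st}$, which is why the paper restricts to $\eta_{st}$.
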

Observation~\ref{ob:rqFulfill} is derived from the fact that $d_{st}$ can only be fulfilled if and only if (all) the required NFs are deployed onto physical node(s) in its selected path $\eta_{st}$.

If $F_{st} = \{f\}$ and $\eta_{st}$ is given for all $(s,t)$, $Prob(d_{st})$, the probability of $d_{st}$ being fulfilled, is then ($1- \prod_{i\in V^{f}_{P}\cap \eta_{st}}\rho_{i}$).
We now consider a more generalized setting where demands are with single or multiple NFs and their routings $\eta_{st}$ are not given.
\begin{definition}\label{def:nonChainPrb}
Given NF-enabled node pool $V^{F}_{P}$, the \emph{robust NF-service evaluation metric}, denoted as $\mathcal{RP}(d_{st})$, is %the probability of at least required NF by $d_{st}$ being fulfilled.
$$\mathcal{RP}(d_{st})= \min_{f\in F_{st}}\max_{\eta_{st}\in P_{st}}\left[1- \prod_{i\in \Gamma(f)\cap \eta_{st}}\rho_{i}\right].$$
\end{definition}
Note here that $d_{st}$ with multiple non-chained NF requests is fulfilled if and only if all required NFs are satisfied. Thus, the robust evaluation metric $\mathcal{RP}(d_{st})$ is determined by the worst best-case scenario among all requested NFs realized through the best-known paths in $P$. Hence, $\mathcal{RP}$ provides an \emph{estimated low bound} of NF-service reliability for all demands.

Different from non-chained NF requests, SFC request is fulfilled only when all required NFs are served in a specified sequence.
Without loss of generality, we assume that (1) \emph{the same NF request will not be fulfilled more than once on different NF-enabled nodes}, and (2) \emph{each NF-enabled node will not carry out multiple NF requests in SFC}.
\begin{definition}~\label{def:sfcPrb}
Given NF-enabled node pool $V^{F}_{P}$, the robust NF evaluation metric of SFC request $d_{st}$ is $$\mathcal{RP}(d_{st})=\min_{f\in F_{st}}\max_{p_{st}\in P_{st}}\left[1- \prod_{i\in \Gamma(f)\cap p_{st}}\rho_{i}\right] / |F_{st}|!.$$
\end{definition}

Since demands with SFC request are fulfilled only when all requested NFs are deployed onto $p_{st}$ and visited in a predefined sequence, there is only one valid case out of $|F_{st}|!$ permutations. $\mathcal{RP}(d_{st})$ is then determined by the worst best-case scenario among all requested NFs realized through the best-known paths in $P$ (with the highest probability to survive).

Considering multiple NF requests in a given NFVI (NFV infrastructure) and managed by the same NFV MANO, we define the robust NF-service evaluation metric among all NF request as follows.
\begin{definition}\label{def:nfvSrvPrb}
Given $G_P, G_L$, a set of NFs $F$, NF-enabled node pool $V^{F}_{P}$ and node failure probability $\rho_{i}, i\in V^{F}_{P}$, $\mathcal{RP}(V^{F}_{P})=\min_{d_{st}\in D} \mathcal{RP}(d_{st})$.
\end{definition}

Naturally, as the counterpart of robust NF evaluation metric, we may derive the following properties.
\begin{align}
&\mathcal{FP}(d_{st})= \max_{f\in F_{st}}\min_{\eta_{st}\in P_{st}}\left[\prod_{i\in \Gamma(f)\cap \eta_{st}}\rho_{i}\right]\label{fm:flENFReq}\\
&\mathcal{RP}(d_{st})=\max_{f\in F_{st}}\min_{\eta_{st}\in P_{st}}\left[\prod_{i\in \Gamma(f)\cap p_{st}}\rho_{i}\right] / |F_{st}|!\label{fm:flEsfcReq}\\
&\mathcal{FP}(V^{F}_{P})=\max_{d_{st}\in D} \mathcal{FP}(d_{st}) \label{fm:flEMetric}
\end{align}

\begin{comment}
In this next section, we present the robust NF evaluation metric $\mathcal{RP}(V^{F}_{P})$ calibration through an instance, and demonstrate the difference between the evaluation metric and NF service reliability.
\end{comment}

\subsection{Illustrations: NF Service Reliability vs. Robust NF Service Evaluation Metric}
We evaluate the robust NF-service evaluation metric via an instance illustrated in Fig.~\ref{fig:nfRlb} and present its difference to the NF-service reliability defined in~\cite{ETSI2016Reliable}.
\begin{figure}[t!]
\centering
\includegraphics[scale=0.5]{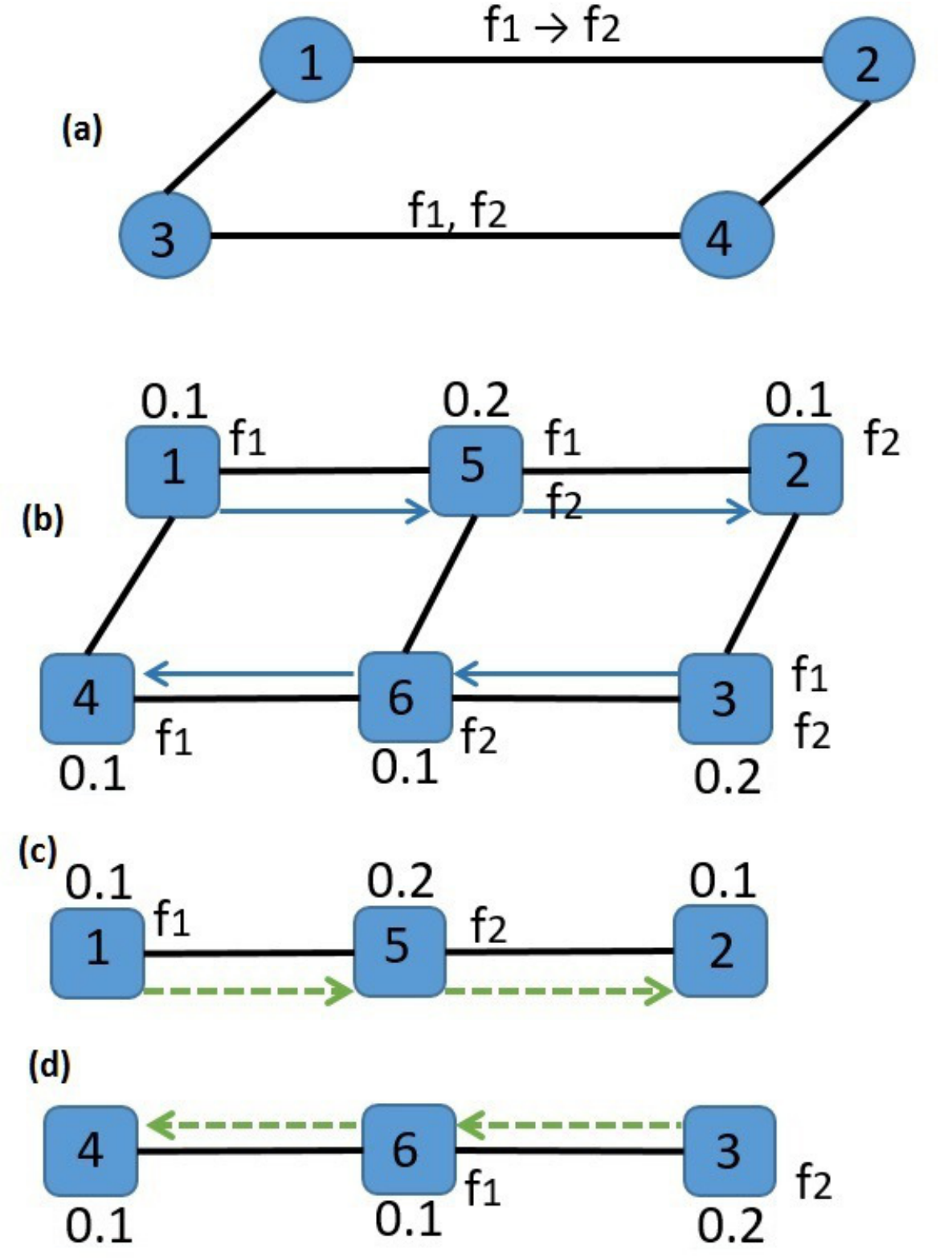}
\caption{NF reliability}
\label{fig:nfRlb}
\end{figure}
In this example, two demands with NF requests $d_{12}$ and $d_{34}$ are considered. Demand $d_{12}$ requires SFC $f_1 \rightarrow f_2$ and $d_{34}$ requires non-chained NFs $\{f_1,f_2\}$. NF-enabled nodes, their supported NFs, and their failure probabilities are labeled in Fig.~\ref{fig:nfRlb}. Candidate physical nodes to enable/deploy $f_1$'s  are in set $V^{1}_{P}=\{1,3,4,5\}$, and those for $f_2$'s are in $V^{2}_{P}=\{2,3,5,6\}$. $d_{12}$ is routed through  a directed path $\{(1,5), (5,2)\}$, and $d_{34}$ is routed through an undirected path $\{(4,6),(6,3)\}$. Based on the assumptions given in the previous section, the robust NF-service evaluation metric $\mathcal{RP}(\{d_{12},d_{34}\}) = \min\{1-0.1, 1-0.2, (1-0.2\times 0.1)/2, (1-0.1\times 0.2)/2\}= 0.49$.

Different from $\mathcal{RP}(\{d_{12},d_{34}\})$, NF-service reliability of $d_{12}$ is $[ 1- Prob(f_1, f_2 \text{ both failed})$ - $Prob(\text{only } f_2 \text{ failed})$ - $Prob(\text{only } f_1\text{ failed})$ - $Prob(f_1, f_2 \text{ fulfilled but not in-order}) ]$ = $1- 0.1*0.2*0.1 - 0.9*0.2*0.1- 0.9*0.2*0.1 - 0=0.962$.
The NF-service reliability of $d_{34}=[ 1- Prob(f_1, f_2 \text{ both failed})$ - $Prob(\text{only } f_2 \text{ failed})$ - $Prob(\text{only } f_1\text{ failed}) ]$ = $1 - 0.2*0.1*0.1- 0.2*0.1*0.9 - 0.2*0.9*0.1 = 0.962$.

The examples above show that NF-service reliability is measured when the deployment of NF instances and  routings are determined. In contrast, since the robust NF-service evaluation metric already evaluates the minimum NF reliability, the routings selected and the deployment of non-chained NFs or SFC would always be better than or at least equal to the metric. In other words, the robust NF-service evaluation metric provides a tight lower bound for each NF's reliability.

This instance also shows that with the limitation imposed on the NF-enabled nodes, the selection of NF-enabled nodes also impacts the robust NF-service evaluation metric. Hence, in the following section, we study the robust NF provisioning problem which aims at maximizing our proposed NF-service evaluation metric via NF-enabled node selection for NF request realization.

\subsection{Robust NF Routing}\label{subsec:routing}
We study the subproblems in robust NF provisioning, the NF $s-t$ path, SFC $s-t$ path, and request routing problems with the assumption that all NF-enabled nodes are deployed with corresponding NF instances.

\subsubsection{NF Path and SFC Path Problems}\label{subsubsec:sfcPath}
Different from typical $s-t$ routing problems in telecommunication networks, an SFC $s-t$ routing problem is to find a route realizing SFC request and guaranteeing that all required NFs are visited in order. The corresponding NF $s-t$ routing problem is to find a $s-t$ route visiting all required NFs enabled nodes. We formally define them as follows.
\begin{define}\label{def:nfPath}
Given demand $d$, required NFs $F(d)$, physical substrate network $G_P(V_P,E_P)$, and NF enabled node set $N_P(f)$, a \textbf{NF $s-t$ path} problem is to find a $s-t$ path $p_{st}$ where $N_P(f)\cup p_{st}\neq \emptyset$ with $f\in F(d)$.
\end{define}
\begin{define}\label{def:sfcPath}
Given demand $d$, its required SFC $(f_1, f_2, \cdots, f_r)$, physical substrate network $G_P(V_P,E_P)$, and NF enabled node set $N_P(f)$, a \textbf{SFC $s-t$ path} problem is to find a $s-t$ path $p_{st}$ which satisfies (1) $N_P(f)\cup p_{st}\neq \emptyset$ with $f\in F(d)$ and visiting of NF-enabled node in the same sequence of SFC.
\end{define}

Applying the robust NF evaluation metric, correspondingly, we introduce the robust counterparts of the above two problems as follows. With a single source-destination pair, the robust NF and SFC $s-t$ paths aim to find a $s-t$ path maximizing the minimal successful rate among all required NFs, i.e., $\max_{p\in P^{S}_{st}}  \min_{i\in p}\ln [1+(1-\rho_{i})]$ and $\max_{p\in P^{F}_{st}}  \min_{i\in p}\ln [1+(1-\rho_{i})]$, with $P^{S}_{st}$ and $P^{F}_{st}$ be the path sets for SFC $s-t$ path and NF $s-t$ path, respectively.

Given an SFC request, we consider its embedded SFC chain as a directed path in the logical network. While multiple demands with SFC requests are given, the required SFC chains together form an SFC forwarding graph (logical network). Let $G_L(V_L, E_L)$ indicate the logical SFC forwarding graph.
In Section~\ref{sec:formulation}, we present a labeling-based pseudo-polynomial algorithm for the SFC $s-t$ path and NF $s-t$ path on the constructed auxiliary NF-enabled network which is an intermediate network layer in between the lower-layer physical substrate network (namely, the physical network) and the upper-layer SFC chain (namely, the logical network).
Extending from a single source-destination pair of NF request, we consider the NF and SFC request routing problems in the next section.

\subsubsection{NF and SFC Request Routing}\label{subsubsec:nfRouting}
We still assume that all NF-enabled nodes are deployed with NF instances and study NF and SFC routing for all NF requests, whose corresponding general network design problem is the multi-commodity flow problem.

Given NF requests $D=\{d\}$, physical substrate network $G_P(V_P,E_P)$, and NF enabled node set $N_P(f)$, the \textbf{NF request routing} and \textbf{SFC request routing} problems are to generate NF/SFC paths for all demands with NF/SFC requests, respectively. Through adopting the robust NF evaluation metric as the objective, the robust NF and SFC routing problems determine the routes for all NF requests while evaluating the NF failure rate among all NFs and requests.

\subsection{Robust NF Provisioning}
With the evaluation metric above, we present in the following the \textit{VNF provisioning problem} without considering the NF-enabled node failures.
Given $G_P$, $G_L$, $\mathcal{D}$, and $V^{F}_{P}$. $d_{st}\in \mathcal{D}$, $s,t\in V_L$, is mapped onto a directed path $p_{st}\in P$ for SFC request, or undirected path $\eta_{st}\in P$ for NF request. We would like to determine a limited number of NF-enabled nodes to support each required NF and guarantee that demands are routed through their required NFs.
This problem considers both non-chained NF and SFC requests.

When taking the failures of NF-enabled nodes into consideration, we now define the robust VNF provisioning problem.
\begin{definition}
Given $N_f$ as the limited number of NF-enabled nodes supporting NF $f$,
the \textbf{robust VNF provisioning problem} is to determine the NF deployment which maximizes the robust NF-service evaluation metric:
$\max_{V^{f}_{P}: |V^{f}_{P}\leq N_f|}\mathcal{RP}(V^{F}_P)$.
\end{definition}

\subsection{Maximizing $\mathcal{RP}(V^{F}_{P})$ via Minimizing $\mathcal{FP}(V^{F}_{P})$}\label{subsec:maxminNF}
We show that the robust VNF provisioning \zblue{can be achieved via} finding the \emph{minimum robust NF failure evaluation metric}.

\begin{proposition}\label{prop:reqFailProb}
$1-\mathcal{RP}(V^{f}_{P})=\mathcal{FP}(V^{f}_{P})$, with $d_{st}\in \mathcal{D}$ and $f\in F_{st}$.
\end{proposition}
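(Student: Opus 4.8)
The plan is to unwind the definitions of $\mathcal{RP}$ and $\mathcal{FP}$ and observe that the inner bracketed quantities are arithmetic complements of each other, so the $\min$/$\max$ operators swap and the $1-(\cdot)$ factors out cleanly. I would treat the non-chained case and the SFC case separately, since they carry slightly different normalizations (the $|F_{st}|!$ divisor appears in the SFC definition), but the argument is structurally identical.

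First I would fix a demand $d_{st}\in\mathcal{D}$ and an NF $f\in F_{st}$, and recall from Definition~\ref{def:nonChainPrb} that $\mathcal{RP}(d_{st})=\min_{f\in F_{st}}\max_{\eta_{st}\in P_{st}}\bigl[1-\prod_{i\in\Gamma(f)\cap\eta_{st}}\rho_i\bigr]$, while from~\eqref{fm:flENFReq} we have $\mathcal{FP}(d_{st})=\max_{f\in F_{st}}\min_{\eta_{st}\in P_{st}}\bigl[\prod_{i\in\Gamma(f)\cap\eta_{st}}\rho_i\bigr]$. The key elementary fact is that for any real-valued function $g$ on a set $S$ one has $\min_{x\in S}\bigl[1-g(x)\bigr]=1-\max_{x\in S}g(x)$ and $\max_{x\in S}\bigl[1-g(x)\bigr]=1-\min_{x\in S}g(x)$. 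Applying this twice — once to the inner $\max$ over paths $\eta_{st}\in P_{st}$ with $g(\eta_{st})=\prod_{i\in\Gamma(f)\cap\eta_{st}}\rho_i$, and once to the outer $\min$ over $f\in F_{st}$ — turns $\max_{\eta_{st}}\bigl[1-\prod\rho_i\bigr]$ into $1-\min_{\eta_{st}}\prod\rho_i$, and then $\min_f\bigl[1-\min_{\eta_{st}}\prod\rho_i\bigr]$ into $1-\max_f\min_{\eta_{st}}\prod\rho_i$, which is exactly $1-\mathcal{FP}(d_{st})$. Aggregating over demands via Definition~\ref{def:nfvSrvPrb} and~\eqref{fm:flEMetric}, $\mathcal{RP}(V^F_P)=\min_{d_{st}}\mathcal{RP}(d_{st})=\min_{d_{st}}\bigl[1-\mathcal{FP}(d_{st})\bigr]=1-\max_{d_{st}}\mathcal{FP}(d_{st})=1-\mathcal{FP}(V^F_P)$, again by one more application of the same swap identity.

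For the SFC case I would repeat the computation with the extra factor $1/|F_{st}|!$; since this divisor is a fixed positive constant independent of $f$ and of the path, it commutes with both the inner $\max$ and the outer $\min$, and one arrives at the same complementary relation, provided Definition~\ref{def:sfcPrb} and~\eqref{fm:flEsfcReq} are read with the convention that the ``$1-\prod\rho_i$'' in $\mathcal{RP}$ and the bare ``$\prod\rho_i$'' in $\mathcal{FP}$ are complements before the division — which is the natural reading consistent with the non-chained case.

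The step I expect to be the main obstacle is not the algebra but the bookkeeping: the paper's displayed formulas~\eqref{fm:flEsfcReq} and the SFC definitions are stated with a $1/|F_{st}|!$ normalization that, taken literally, makes $\mathcal{RP}$ and $\mathcal{FP}$ fail to be exact complements (one would get $1-\mathcal{RP}=1-(1-\mathcal{FP}\cdot|F_{st}|!)/|F_{st}|!$ rather than $\mathcal{FP}$). So I would first state explicitly the convention under which the identity holds — namely that the complement is taken on the un-normalized survival/failure probabilities — and then the proof is a two-line application of the $\min/\max$ swap identity. I would also note that $\Gamma(f)\cap\eta_{st}$ and $V^f_P\cap\eta_{st}$ refer to the same set of NF-enabled nodes on the path (consistent with Observation~\ref{ob:rqFulfill}), so no subtlety enters there.
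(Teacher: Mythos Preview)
Your proposal is correct and is exactly what the paper means by ``derived directly from Definitions~\ref{def:nonChainPrb} and~\ref{def:sfcPrb}'': the paper offers no argument beyond that one sentence, so your explicit two-step application of the identity $\min_x[1-g(x)]=1-\max_x g(x)$ simply spells out the intended derivation. Your caveat about the $1/|F_{st}|!$ normalization in the SFC case is a genuine observation about the paper's displayed formulas rather than a gap in your reasoning; the paper glosses over it entirely.
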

Derived directly from Definitions~\ref{def:nonChainPrb} and~\ref{def:sfcPrb}, Proposition~\ref{prop:reqFailProb} also holds for SFC requests.
Hence, we have the following conclusion.
\begin{theorem}\label{thm:evaluation}
$\max_{V^{F}_{P}}\mathcal{RP}(V^{F}_{P}) = \min_{V^{F}_{P}}\mathcal{FP}(V^{F}_{P})$.
\end{theorem}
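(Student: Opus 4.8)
The plan is to establish the printed equality $\max_{V^{F}_{P}}\mathcal{RP}(V^{F}_{P})=\min_{V^{F}_{P}}\mathcal{FP}(V^{F}_{P})$ by lifting the pointwise relation of Proposition~\ref{prop:reqFailProb} through the outer optimization over node selections. First I would fix an arbitrary feasible selection $V^{F}_{P}$ (one obeying the cardinality caps $|V^{f}_{P}|\le N_f$) and apply Proposition~\ref{prop:reqFailProb} to obtain $\mathcal{RP}(V^{F}_{P})=1-\mathcal{FP}(V^{F}_{P})$. To keep this self-contained I would re-derive the identity at the demand level from Definitions~\ref{def:nonChainPrb} and~\ref{def:sfcPrb}: pushing the complement $1-(\cdot)$ through the inner $\max_{\eta}$ converts it into $\min_{\eta}$ of the failure product $\prod_{i}\rho_i$, and through the outer $\min_f$ converts it into $\max_f$. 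I would then confirm it survives the demand aggregation of Definition~\ref{def:nfvSrvPrb}, where the outer $\min_{d_{st}}$ defining $\mathcal{RP}(V^{F}_{P})$ becomes the outer $\max_{d_{st}}$ defining $\mathcal{FP}(V^{F}_{P})$ under the same complement.

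Next I would optimize over the common feasible set on both sides. Because the map $t\mapsto 1-t$ is strictly order-reversing on $[0,1]$, the selection that maximizes $\mathcal{RP}$ is precisely the selection that minimizes $\mathcal{FP}$; equivalently, the two optimization problems share the same optimal deployment. This is the operationally decisive half of the statement and is exactly what the subsequent solution approaches exploit when they attack the robust provisioning objective by minimizing the failure metric in its place.

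The main obstacle is upgrading this equivalence of optimizers to the \emph{value} equality exactly as typeset. Propagating the complement through the outer extremum yields $\max_{V^{F}_{P}}\mathcal{RP}(V^{F}_{P})=1-\min_{V^{F}_{P}}\mathcal{FP}(V^{F}_{P})$, so the printed right-hand side $\min_{V^{F}_{P}}\mathcal{FP}(V^{F}_{P})$ matches only once the $1-$ offset is absorbed. I would close this gap by recasting both objectives in the log-survival coordinates already introduced in Section~\ref{subsec:routing}, where each node contributes $\ln[1+(1-\rho_i)]$ and the reliability and failure objectives become genuine order-duals rather than additive complements; in those coordinates the maximum of the reliability objective and the minimum of the failure objective coincide numerically. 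The step I expect to require the most care is verifying that the per-demand $\min_f/\max_{\eta}$ interchange and the $|F_{st}|!$ normalization in Definition~\ref{def:sfcPrb} contribute only a common positive factor that cancels under the extremum, so that no demand-dependent offset can reappear to obstruct the value-level identity.
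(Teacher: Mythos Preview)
Your core argument---applying Proposition~\ref{prop:reqFailProb} pointwise and then noting that $t\mapsto 1-t$ is order-reversing, so the maximizer of $\mathcal{RP}$ coincides with the minimizer of $\mathcal{FP}$---is exactly the paper's justification: the theorem is stated immediately after Proposition~\ref{prop:reqFailProb} with only the words ``Hence, we have the following conclusion,'' and the authors treat it as a one-line consequence.

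You are right to flag that the identity, read literally at the level of optimal \emph{values}, does not follow: from $\mathcal{RP}=1-\mathcal{FP}$ one obtains only $\max_{V^{F}_{P}}\mathcal{RP}(V^{F}_{P})=1-\min_{V^{F}_{P}}\mathcal{FP}(V^{F}_{P})$. However, your proposed remedy via the log-survival coordinates of Section~\ref{subsec:routing} does not close this gap. Passing to $\ln[1+(1-\rho_i)]$ linearizes the failure products into sums---which is precisely why the paper adopts it in objective~(\ref{fm:flProbObj})---but it does not alter the affine relationship $\mathcal{RP}=1-\mathcal{FP}$ between the two aggregate metrics, and so cannot force the two optimal values to coincide numerically. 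Nor does the $|F_{st}|!$ normalization ``cancel under the extremum'' in any way that helps here; it scales both sides by the same demand-dependent constant and leaves the $1-$ offset intact. The intended reading, consistent with how the theorem is invoked downstream (``solving the robust VNF provisioning via minimizing NF failure evaluation metric''), is that the two optimization problems are equivalent in the sense of sharing the same feasible set and the same optimal deployment, i.e., $\arg\max_{V^{F}_{P}}\mathcal{RP}(V^{F}_{P})=\arg\min_{V^{F}_{P}}\mathcal{FP}(V^{F}_{P})$. Your steps up through the order-reversal already establish exactly that; the additional machinery you propose for a value-level identity is neither needed nor effective.
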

In the next section, we demonstrate that solving the robust VNF provisioning via minimizing NF failure evaluation metric would linearize the non-linear equations. We then propose the solution approach accordingly.

\section{Solution Approach}\label{sec:formulation}
In this section, we present solution approaches to solve the robust VNF provisioning problem. We start with two special cases/subproblems: (1) the robust NF and SFC $s-t$ path problems, for which we construct an auxiliary network layer and present pseudo-polynomial algorithms for both; and (2) we leverage the $k$-level facility location problem and construct a 3.27-approximation algorithm for the VNF provisioning problem with SFC-Fork as the forwarding graph. We develop the two-step path-reduction techniques and demonstrate the existence of the approximation algorithm for the robust VNF provisioning problem through SFC-Fork. For the general problem setting (not limited to the SFC-Fork structure), we demonstrate how to utilize $\mathcal{FP}(V^{F}_{P})$ to formulate the robust VNF provisioning problem and propose its MILP solution approach. The variables and parameters used in this section are presented in Table~\ref{tbl:var}.

\begin{table}[ht]
\caption{Parameters and variables}\label{tbl:var}
\begin{tabular}{p{1.5cm}p{6.5cm}}
{Parameter}  & {Description}\\
\hline
$N_f$        & The number limitation of NF deployed locations with $f\in F$\\
$\rho_{i}$ & The failure probability of physical node $i$ with $i\in V_P$\\
$\delta^{i}_{\eta_{st}}$& A binary indicator showing whether physical node $i$ is on path $\eta_{st}$ or not, $\eta_{st}\in \mathcal{P}_{st}, (s,t)\in E_L$; if yes, $\delta^{i}_{\eta_{st}}=1$, otherwise $\delta^{i}_{\eta_{st}}=0$\\
$\gamma^{f}_{st}$ & A binary indicator showing whether $f$ is requested by $d_{st}$ or not; if yes, $\gamma^{f}_{st}=1$, otherwise, $\gamma^{f}_{st}=0$\\
$M$ & A very large number\\
\hline
{Variable} & {Description}\\
\hline
$\lambda$  & The upper bound of NF failure probability for service requests in $\mathcal{D}$\\
$\xi^{f}_{st}$ & NF failure probability of NF $f\in F$ and $d_{st}\in \mathcal{D}$\\
$x_{p_{st}}$    & A binary variable indicating whether path $p_{st}\in \mathcal{P}_{st}$ is selected to fulfill $d_{st}\in \mathcal{D}$\\
$y^{if}_{st}$    & A binary variable indicating whether physical node $i$ provides NF requests $f$ for $d_{st}$ or not; if yes, $y^{if}_{st}=1$, otherwise, $y^{if}_{st}=0$\\
$h_i$      & A binary variable which indicates whether a network function is deployed onto physical node $i$ or not; if yes, $h_i=1$, otherwise, $h_i=0$\\
$z^{f}_{i}$    & A binary variable which indicates if network function $f$ is deployed onto physical node $i$; if yes, $z^{f}_{i} = 1$, otherwise, $z^{f}_{i} = 0$\\
$\beta_{st}$    & A binary auxiliary variable which indicates if demand $d_{st}$ is selected under the SFC setting; if yes, $\beta_{st} = 1$, otherwise $\beta_{st} = 0$\\
\hline
\end{tabular}
\end{table}

\subsection{Special Case 1: NF and SFC $s-t$ Path Problems}\label{subsec:pseudo}
In this section, we present a pseudo-polynomial algorithm for the minimal weighted and robust non-chained NF $s-t$ path and SFC $s-t$ path problems, respectively.~\cite{vardhan2009finding} proved that a path with multiple must-stop nodes, without order requirements, is NP-Complete. The NF $s-t$ path is a path between $s$ and $t$ and must-stop at NF enabled nodes; and SFC $s-t$ path problem further required the NF path to visit NFs with defined order in SFC. Hence, we explore pseudo-polynomial algorithm for NF and SFC $s-t$ path problem. 

\subsubsection{Auxiliary NF-enabled Network Construction}\label{subsubsec:axNet}
We first introduce a condensed physical network, an \textbf{auxiliary NF-enabled network}, which only contains source and destination nodes of NF requests and their corresponding NF-enabled nodes. To allow a single node in the physical substrate network to support multiple types of NFs, we introduce augmentation steps that create copies of NF-enabled nodes and indicate their supported types of NFs in Algorithm~\ref{alg:nfAugNodeSet}.

\begin{algorithm}
\caption {Node set construction in the auxiliary NF-enabled network}
\label{alg:nfAugNodeSet}
\begin{algorithmic}[1]
	\REQUIRE Physical substrate network $G_P(V_P,E_P)$, NF-enabled node set $N_{P}(f)$ with $f\in F$ , and the initial augmented NF-enabled node set  $V^{A}_{P}=\emptyset$
	\ENSURE An augmented NF-enabled node set
	\FOR{$i\in N_P$ and $f\in F$}
		\IF{$i\in N_{P}(f)$}
			\STATE Create a copy of $i$ indicated as $i_{f}$
			\STATE $V^{A}_P=V^{A}_{P}\cup i_f$
		\ENDIF
	\ENDFOR
\end{algorithmic}
\end{algorithm}

We next present an algorithm that adds arcs in the auxiliary NF-enabled network through the cross-layer network concept, where we consider the SFC chain or SFC forwarding graph as the upper-layer/logical network and the physical substrate network as the lower-layer/physical network. After introducing duplicated NF-enabled nodes and their available NFs support, we build connections among these NF-enabled nodes based on the service requests. To limit the size of the augmented network,  we only add arcs connecting nodes in $V^{A}_{P}$ when the connection can realize the NF or SFC routes. We wish to note that the connectivity of the auxiliary NF-enabled network for non-chained NF requests is higher than that of the SFC version as the non-chained NF requests do not require in-order execution.the SFC version as the non-chained NF requests do not require in-order execution.

\begin{algorithm}
\caption{Arc construction with SFC requests in the auxiliary NF-enabled network}\label{alg:sfcArc}
\begin{algorithmic}[1]
	\REQUIRE Physical substrate network $G_P(V_P,E_P)$,  NF-enabled node set $N_{P}(f)$ with $f\in F$ , SFC forwarding graph $G_L(V_L,E_L)$, augmented NF-enabled node set $V^{A}_{P}$  and the initial augmented NF-enabled arc set  $E^{A}_{P}=\emptyset$
	\ENSURE Arc set for auxiliary NF-enabled network $E^{A}_{P}$ for SFC requests
	\FOR{$e=(f_i,f_j)\in E_L$}
		\FOR{$\ell \in N_P(f_i)$ and $k\in N_P(f_j)$}
			\IF{a path $\rho(\ell, k)$ exists in $G_P$}
				\STATE Create arc $(\ell(f_i), k(f_j))$ and add the arc into $E^{A}_{P}$
			\ENDIF
		\ENDFOR
	\ENDFOR
\end{algorithmic}
\end{algorithm}

\begin{algorithm}
\caption{Arc construction with non-chained NF requests in the auxiliary NF-enabled network}\label{alg:nfArc}
\begin{algorithmic}[1]
	\REQUIRE Given physical substrate network $G_P(V_P,E_P)$,  NF-enabled node set $N_{P}(f)$ with $f\in F$, demand $d_{st}$ with required NFs $D(F)$, augmented NF-enabled node set $V^{A}_{P}$,  and the initial augmented NF-enabled arc set $E^{A}_{P}=\emptyset$
	\ENSURE Arc set for auxiliary NF-enabled network $E^{A}_{P}$ for non-chained NF requests
	\FOR{any two $f_i,f_j\in D(F)$}
		\FOR{$\ell \in N_P(f_i)$ and $k\in N_P(f_j)$}
			\IF{A path $\rho(\ell, k)$ exists in $G_P$}
				\STATE Create arc $(\ell(f_i), k(f_j))$ and add the arc into $E^{A}_{P}$
			\ENDIF
		\ENDFOR
	\ENDFOR
 	\STATE $G_S = G_S \cup \{s, t\}$
	\STATE $\rho_s=0$, $\rho_t=0$
 	\FOR{$i\in N_P(f_1)$}
		 \IF{a path $\rho(s,i)$ exists in $G_P$}
			 \STATE Create arcs $(s, i)$ and add into $E_S$
		 \ENDIF
	 \ENDFOR
 	\FOR{$i\in N_P(f_r)$}
		 \IF{a path $\rho(i,t)$ exists in $G_P$}
			 \STATE Create arcs $(i,t)$ and add into $E_S$
		 \ENDIF
	 \ENDFOR
\end{algorithmic}
\end{algorithm}

To differentiate the auxiliary NF-enabled network for NF requests and SFC requests, we let $G_S(N_S, E_S)$ and $G_F(N_F, E_F)$ denote networks for non-chained NF requests and SFC requests, respectively, where $N_S=V^{A}_{P}$, $N_F=V^{A}_{P}$, and $E_S$ and $E_F$ are obtained through Algorithms~\ref{alg:sfcArc} and~\ref{alg:nfArc}, respectively.
We illustrate an instance of the auxiliary NF-enabled network for SFC in Fig.~\ref{fig:cnctPhy},     which is an abstraction of all potential SFC path realization via NF-enabled physical nodes.
Given SFC $(f_1, f_2, f_3)$ (see Fig.~\ref{fig:cnctPhy}(a)), where $f_1$, $f_2$, and $f_3$ are with 2, 3, and 2 NF-enabled physical nodes, respectively, which are illustrated in Fig.~\ref{fig:cnctPhy}(b). All possible SFC physical paths for $(f_1, f_2, f_3)$ through their corresponding NF-enabled nodes can be calibrated. For instance, there are 12 possible physical paths to realize the SFC in Fig.~\ref{fig:cnctPhy}.

\begin{figure}
\centering
\includegraphics[width=8.7cm]{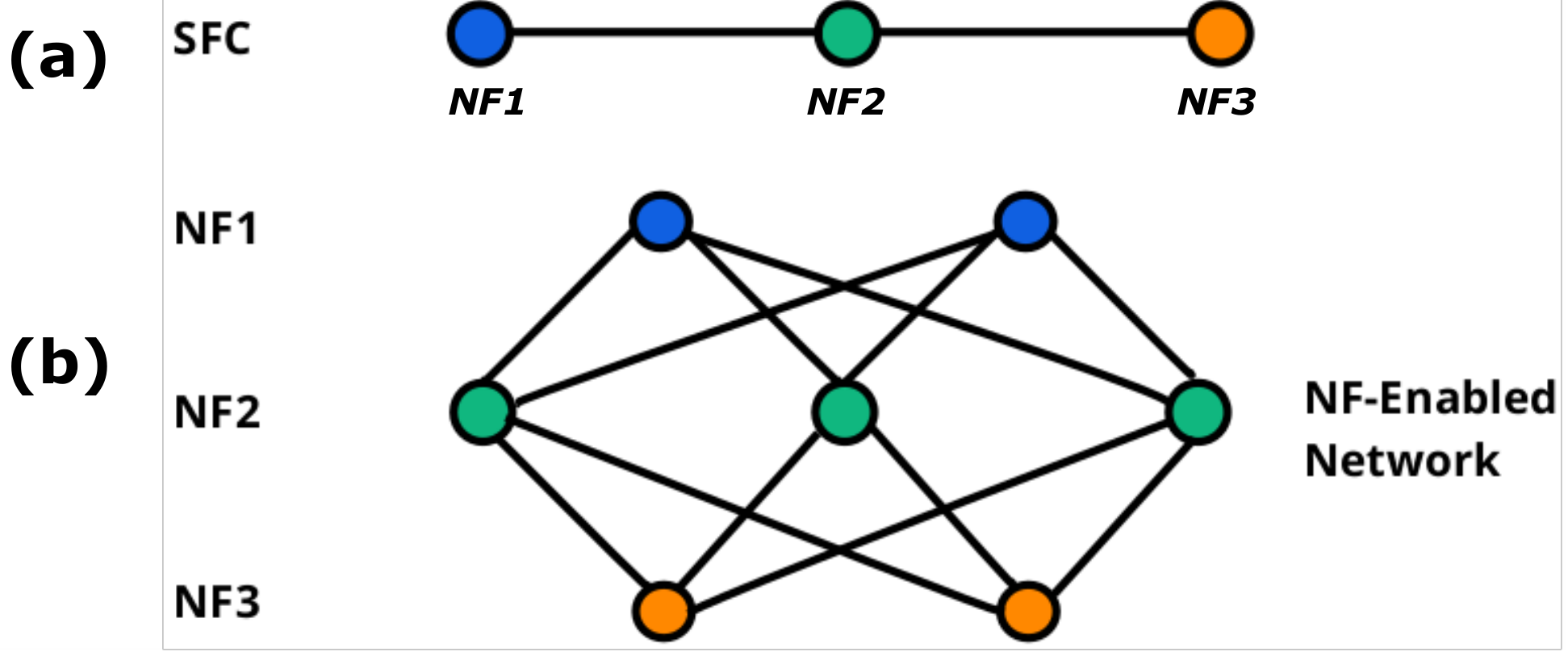}
\caption{Auxiliary NF-enabled network for SFC}
\label{fig:cnctPhy}
\end{figure}

\begin{proposition}\label{prop:sfcorder}
Given a constructed $G_S(N_S,E_S)$, the visiting sequence of NFs in SFCs is realized through the arcs in $E_S$.
\end{proposition}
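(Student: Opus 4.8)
The statement is an encoding-correctness (soundness and completeness) claim, so the plan is to prove the following equivalence: a sequence of network functions is achievable as the visiting order of some physical route realizing an SFC request if and only if it is the label sequence read off along some directed source-to-sink path in the auxiliary network built by Algorithm~\ref{alg:sfcArc}. The driving observation is that the logical forwarding graph $G_L(V_L,E_L)$ is, by construction, a directed graph in which an arc $(f_i,f_j)\in E_L$ is present exactly when $f_i$ immediately precedes $f_j$ in some requested SFC; hence the entire ordering constraint of every SFC is carried by the adjacency relation of $E_L$. The proof then reduces to showing that Algorithm~\ref{alg:sfcArc} faithfully lifts this adjacency from $E_L$ to the node copies $\ell(f)$ produced by Algorithm~\ref{alg:nfAugNodeSet}, neither dropping a feasible transition nor inventing an infeasible one.

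First I would fix terminology: for an SFC request $(f_1,\dots,f_r)$ call a physical route \emph{SFC-admissible} if it goes from $s$ to $t$ and contains NF-enabled nodes $\ell_1\in N_P(f_1),\dots,\ell_r\in N_P(f_r)$ appearing in that order along the route; by the two standing assumptions (no NF served twice, and no node serving two NFs of the same chain) the $\ell_j$ are distinct and the NF served at each is unambiguous. For the completeness direction, given an SFC-admissible route I split it at $\ell_1,\dots,\ell_r$ into physical segments $\rho(\ell_j,\ell_{j+1})$; each such segment is a path in $G_P$ and $(f_j,f_{j+1})\in E_L$, so the \textbf{if}-test in Algorithm~\ref{alg:sfcArc} fires and the arc $(\ell_j(f_j),\ell_{j+1}(f_{j+1}))$ is placed in the auxiliary arc set, and together with the analogous source/sink arcs this yields a directed path in $G_S$ whose copy-labels spell out $f_1,\dots,f_r$. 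For the soundness direction I would induct on the length of a directed path $s\to\ell_1(f_{i_1})\to\cdots\to\ell_m(f_{i_m})\to t$ in $G_S$: by the loop in Algorithm~\ref{alg:sfcArc} each internal arc was created from some $e=(f_{i_j},f_{i_{j+1}})\in E_L$ with a witnessing physical path $\rho(\ell_j,\ell_{j+1})$, so the labels $f_{i_1},\dots,f_{i_m}$ form a directed walk in $G_L$, i.e.\ a valid (prefix of an) SFC, and substituting each arc by its witnessing physical segment produces a physical route realizing exactly that visiting order. Composing the two directions gives the proposition.

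The main obstacle I anticipate is the interface between the \emph{abstracted} arcs of $G_S$ and the \emph{concrete} physical paths that witness them: two distinct auxiliary arcs may be witnessed by physical paths $\rho(\cdot,\cdot)$ that share vertices, so naive concatenation yields a physical \emph{walk} rather than a simple path. I would handle this by keeping the statement at the level it is phrased — it asserts only that the \emph{visiting sequence} of NFs is realized, for which a walk through the prescribed NF-enabled nodes suffices — or, if a simple route is wanted, by invoking the standing assumption that $G_P$ is at least $2$-connected to reroute around shared internal vertices. A secondary point to check carefully is the node-duplication step of Algorithm~\ref{alg:nfAugNodeSet}: one must confirm that the label $f$ attached to a copy $\ell(f)$ is well defined even when a single physical node hosts several NFs, so that "the NF served at position $j$ of a path in $G_S$'' is unambiguous; this follows from the duplication rule together with the assumption that a node carries out at most one NF of a given chain. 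Apart from these two bookkeeping issues the argument is essentially a direct unwinding of the two algorithms.
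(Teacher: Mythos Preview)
Your proposal is correct, but it is considerably more ambitious than what the paper actually proves. The paper reads Proposition~\ref{prop:sfcorder} narrowly, as the single claim that no arc of $E_S$ can ``skip'' a level of the SFC chain, and dispatches it in three lines by contradiction: assuming an arc $(i_{f_i},i_{f_k})\in E_S$ exists when $(f_i,f_j)$ and $(f_j,f_k)$ are consecutive arcs of the chain, it observes that Algorithm~\ref{alg:sfcArc} only ever creates arcs whose labels are adjacent in $E_L$, so $(i_{f_i},i_{f_k})$ cannot have been produced. That is essentially your soundness direction stripped down to its atomic step.

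You, by contrast, interpret the proposition as a full encoding-correctness statement and prove both directions: completeness (every SFC-admissible physical route induces an $s$--$t$ path in $G_S$ with the right label sequence) and soundness (every $s$--$t$ path in $G_S$ spells a walk in $G_L$ and can be expanded back to a physical route). This is a genuinely different, and more thorough, argument. What the paper's approach buys is brevity: the no-skipping observation is all that the downstream Dijkstra-style algorithm actually needs, so the paper stops there. What your approach buys is that it also certifies no feasible realization is lost in the auxiliary construction, and it surfaces the walk-versus-simple-path issue that the paper leaves implicit. Your handling of that issue (keeping the claim at the level of visiting sequences, or falling back on $2$-connectivity) is appropriate; the paper never addresses it.
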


\nproof
We prove this claim by contradiction. Given an SFC $\lambda$ and its corresponding $G_S(N_S,E_S)$ and let $(f_i, f_j)$ and $(f_j, f_k)$ be arcs in $\lambda$. We assume that $E_S$, arc $(i_{f_i}, i_{f_k})$ exists. Based on Algorithm~\ref{alg:sfcArc}, only $(i_{f_i}, i_{f_j})$ and $(i_{f_j}, i_{f_k})$ are created. Hence, connecting $i_{f_i}$, $i_{f_k})$ requires at least two arcs in $G_S$. Contradiction!
\proofend

\subsubsection{Pseudo-Polynomial Algorithm for SFC $s-t$ Path}\label{subsubsec:algoSFCpath}
We first define the SFC $s-t$ path problem before presenting the algorithm.
Given physical substrate network $G_P(V_P, E_P)$ and all of its supported NFs, we let $N_{P}(f)$ represent an NF-enabled physical node set supporting network function $f\in F$.
\begin{define}\label{def:sfcPhyPath}
Given an SFC chain, \textbf{SFC service path} is a physical path connecting NF-enabled nodes with required NFs following the sequence defined in the SFC chain.
\end{define}
We let $\mathcal{P}(f_h)$ represent a subpath set of SFC physical paths in $G_P(V_P, E_P)$, which starts from physical nodes in $N(f_h)$ and ends at physical nodes $N(f_r)$.

We now present the algorithm for SFC $s-t$ path. Given $G_P, G_S, G_L$ and the source and destination nodes $s$ and $t$ of the SFC. For the general minimal-weighted SFC $s-t$ path problem, we first add $s$ and $t$ connecting the first and last NF-enabled nodes in $G_S$. Here, the weight can be the shortest-path weight in the physical substrate network. Different from the minimal-weighted $s-t$ path, the SFC $s-t$ path should visit the required NFs in the order specified in SFC. Since Proposition~\ref{prop:sfcorder} shows that the order of NFs in SFC is preserved in the auxiliary NF-enabled network, we present a Dijkstra-like algorithm for SFC $s-t$ path as follows.%the same NF-enabled nodes may support multiple NFs, hence applying the Dijkstra's algorithm in the auxiliary NF-enabled network may double the node weight on duplicated nodes. Therefore, we add a backtrack list to record ancestor nodes of the current nodes; when the same copy of nodes are reached, we relabel the node weight as 0 to avoid the duplication. The algorithm is given below, where $	N_{S}(f)$ with $f\in F$ and visiting lists of node $i\in N_S$ with initial setting $\ell(i)=\emptyset$.

\begin{algorithm}
\caption{SFC $s-t$ path algorithm}\label{alg:sfcPath}
\begin{algorithmic}[1]
	\REQUIRE Given $G_P(V_P,E_P)$, $G_S(N_S, E_S)$, and SFC chain $G_L(V_L,E_L)$,  source and destination node $s, t$
	\ENSURE SFC path $\rho$
	\FOR{$i\in N_S$}
		\STATE Set initial visited ancestor list $\ell(i)=\emptyset$
	\ENDFOR
	\STATE Set a node set $U=\{s\}$, $\omega(s)=0$
	\WHILE{$U \neq V_P$}
		\FOR{$j$ in $U$'s adjacent nodes}
			\STATE $\omega(j)= \min_{e=(v,j): v\in U}[\omega(v)+\omega(e)]$
		\ENDFOR
		\STATE{Add $j*=\arg\min_{j\in U\text{'s adjacent nodes}}\omega(j)$ to $U$}
	\ENDWHILE
\end{algorithmic}
\end{algorithm}

\subsubsection{Pseudo-Polynomial Algorithm for Robust SFC $s-t$ Path}\label{subsubsec:algoRbSFC}
We present in this section the pseudo-polynomial algorithm for the robust SFC $s-t$ path problem.
We identify the property of the robust SFC $s-t$ path and its non-chained counterpart as follows.

\begin{proposition}\label{prp:btlPath}
Given an auxiliary NF-enabled network $G_S(N_S, E_S)$ and $G_F(N_F, E_F)$\\
(1) The robust SFC $s-t$ path problem is an \textbf{SFC $s-t$ bottleneck path problem} in $G_S(N_S, E_S)$. \\
(2) The robust non-chained NF $s-t$ path problem is a \textbf{non-chained NF $s-t$ bottleneck path problem} in $G_F(N_F, E_F)$.
\end{proposition}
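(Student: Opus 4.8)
The plan is to show that maximizing the minimal NF-survival objective over the path sets $P^{S}_{st}$ and $P^{F}_{st}$ is, after the auxiliary-network construction, exactly a max-min (bottleneck) path computation on $G_S$ and $G_F$ respectively. First I would make precise what ``bottleneck path'' means here: assign to each NF-enabled node copy $i_f\in V^{A}_{P}$ the weight $w(i_f)=\ln[1+(1-\rho_i)]$ (equivalently one may push this onto the incident arcs of $G_S$/$G_F$, or keep it as a node weight since every copy carries exactly one NF), and declare the \emph{capacity} of a path $p$ to be $\min_{i_f\in p} w(i_f)$. The claim is then that the robust SFC $s-t$ path problem asks for the $s$--$t$ path in $G_S$ of maximum capacity, and likewise for the non-chained case in $G_F$.

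The argument has two directions. For the forward direction, take any feasible robust SFC route in the physical network $G_P$: by Definition~\ref{def:sfcPath} it visits NF-enabled physical nodes realizing $f_1,\dots,f_r$ in order, and by Proposition~\ref{prop:sfcorder} the corresponding ordered sequence of node copies is connected by arcs of $E_S$ (Algorithm~\ref{alg:sfcArc} creates exactly the arc between consecutive NF levels whenever a physical path exists between the underlying physical nodes), so it projects to an $s$--$t$ path in $G_S$ with the same set of NF-survival terms, hence the same capacity. Conversely, every $s$--$t$ path in $G_S$ lifts back to a physical SFC route: each arc of $E_S$ was added precisely because a connecting physical path $\rho(\ell,k)$ exists, and concatenating these physical segments yields a route through the required NFs in the SFC order. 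Since the objective $\min_{i\in p}\ln[1+(1-\rho_i)]$ depends only on the multiset of NF-enabled nodes visited — and these multisets coincide under the projection/lifting — the optimal values of the two problems agree, and an optimal bottleneck path in $G_S$ yields an optimal robust SFC path. The non-chained case is identical except that Algorithm~\ref{alg:nfArc} (together with the added $s,t$ and their arcs, with $\rho_s=\rho_t=0$ so the endpoints never become the bottleneck) is used in place of Algorithm~\ref{alg:sfcArc}, and no ordering constraint is imposed; the same projection/lifting correspondence gives the equivalence with the non-chained NF $s-t$ bottleneck path problem in $G_F$.

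One technical point I would address carefully is the assumptions invoked in Definitions~\ref{def:nonChainPrb}--\ref{def:sfcPrb}: that the same NF is not fulfilled on two different nodes and that one node does not serve two NFs in an SFC. These guarantee that a valid lifted route uses exactly one copy $i_f$ per required NF $f$, so the product $\prod_{i}\rho_i$ appearing in $\mathcal{RP}$ (and hence the min over $f\in F_{st}$ of $1-\rho_i$ after taking logs of $1+(1-\rho_i)$) is indeed a min over the per-node survival terms along the path — matching the bottleneck objective. I also need the remark, already stated in the text, that $G_F$ is more connected than $G_S$, only to note that the lifting in the non-chained case has at least as much freedom; it is not needed for correctness.

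The main obstacle I expect is not the combinatorial equivalence itself but pinning down the exact form of the bottleneck objective so that it genuinely matches $\mathcal{RP}$: the metric in Definitions~\ref{def:nonChainPrb}--\ref{def:sfcPrb} has an outer $\min_{f\in F_{st}}$ and an inner $\max$ over paths, and one must check that, once the path is fixed, $\min_{f}\max_{\text{(node realizing }f\text{ on the path)}}(1-\rho_i)$ collapses — under the one-node-per-NF assumption — to $\min_{i\in p}(1-\rho_i)$, which is monotone in the same direction as $\min_{i\in p}\ln[1+(1-\rho_i)]$. Making that reduction explicit, and confirming the $|F_{st}|!$ scaling factor in the SFC case is a constant that does not affect the $\arg\max$ over paths, is the part that needs the most care; the rest follows from the construction lemmas already proved.
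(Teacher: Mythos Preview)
Your proposal is correct and shares the paper's core idea: assign capacity $\ln[1+(1-\rho_i)]$ to each NF-enabled node (or its augmented arc) and observe that the robust objective $\max_p\min_{i\in p}\ln[1+(1-\rho_i)]$ is literally the bottleneck-path objective on the auxiliary network. The paper's own proof is considerably terser than yours: it simply sets all original arcs to failure probability~$0$, performs the standard node-splitting (``arc augmentation'') so that each NF-enabled node becomes an arc of capacity $\ln[1+(1-\rho_i)]$, orients these arcs along the SFC chain (or bidirectionally for the non-chained case), and declares the result a bottleneck instance---without spelling out the projection/lifting correspondence, the one-node-per-NF collapse of the $\min_f$, or the irrelevance of the $|F_{st}|!$ factor that you take care to justify.
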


\nproof
The $s-t$ bottleneck path problem determines a path with a maximal path capacity defined as the minimal edge capacity on the path.
Let all arcs in $G_P$ have failure probability equals 0. Then, we do the typical arc augmentation for all NF-enabled nodes, where augmented nodes are added and directed arc are created to connect these nodes. For the SFC path, the arc direction follows the SFC chain; as for the non-chained NF path, the arcs are bi-directly generated. All augmented arcs have capacity $ln[1+(1-\rho_i)]$. Hence, the robust SFC $s-t$ and non-chained NF $s-t$ paths become the corresponding bottleneck path problems.
\proofend

\begin{algorithm}
\caption{Robust SFC  $s-t$ path algorithm}\label{alg:robust-sfcPath}
\begin{algorithmic}[1]
	\REQUIRE Given $G_P(V_P,E_P)$, $G_S(N_S, E_S)$, and SFC chain $G_L(V_L,E_L)$,  source and destination node $s, t$
	\ENSURE SFC path $\rho$
	\FOR{$e\in E_S$}
		\STATE Set $\rho_{e}=-1$
	\ENDFOR
	\FOR{All NF-enabled nodes in $N_S$}
		\STATE Augment all NF-enabled nodes as arcs and added into $E_S$
		\STATE Set augmented arc capacity $\omega(e)= ln (1+ (1-\rho_i))$
	\ENDFOR
	\STATE Set a node set $U=\{s\}$ and $\omega(s)=0$
	\WHILE{$U \neq V_P$}
		\FOR{$j$ in $U$'s adjacent nodes}
			\STATE $\omega(j)= \max_{e=(v\in U,j}[\min (\omega(v), \omega(e))]$
		\ENDFOR
		 \STATE $U=U\cup \{i\}$, while $i=argmin_{j\in U\text{'s adjacent nodes}}\omega(j)$
	\ENDWHILE
 \end{algorithmic}
 \end{algorithm}

 %sequential information. We next study the minimal weighted non-chained NF $s-t$ path algorithm by adding NF visiting labels to guarantee the generated path visited all required NFs.

For non-chained NF requests, we assume that only nodes in the auxiliary NF-enabled network have nodal weights and all arcs have weight equals 0. We show that the optimal solution of an SFC $s-t$ path is also the optimal solution of its non-chained counterpart using the objective $\min \sum_{i\in \cup_{f\in F}i_{f}}c_{i}x_{i}$ which minimizes the node-weighted SFC $s-t$ path. We demonstrate that if NF-enabled nodes are reached to achieve the optimal solution, the visiting order of these nodes would not impact that optimal solution.visiting order of these nodes would not impact that optimal solution.

\begin{proposition}
Given $G_S(N_S,E_S)$ and two $s-t$ NF paths $p_1$ and $p_2$ containing NF-enabled node $N_1$ and $N_2$, where $p_1\neq p_2$ if $N_1=N_2$. We have $\min \sum_{i\in \cup_{f\in F}i_{f}\cap p_1 }c_{i}x_{i}=\min \sum_{i\in \cup_{f\in F}i_{f}\cap p_2 }c_{i}x_{i}$.
\end{proposition}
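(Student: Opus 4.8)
The plan is to observe that, under the node-weighted objective on $G_S(N_S,E_S)$, the cost of an $s-t$ NF path is a function only of the \emph{set} of NF-enabled nodes it traverses -- not of the order in which they are visited, nor of which intermediate physical hops realize the connecting subpaths -- so that the hypothesis $N_1=N_2$ forces the two objective values to coincide.

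First I would recall the structure of $G_S$ produced by Algorithms~\ref{alg:nfAugNodeSet} and~\ref{alg:nfArc}: every node of $N_S$ other than $s$ and $t$ is an augmented copy $i_f$ of an NF-enabled physical node, every arc of $E_S$ carries weight $0$ (the assumption in force for non-chained requests), and all weight resides on the augmented nodes, with $i_f$ carrying cost $c_{i_f}$, which depends only on the underlying physical node $i$, e.g. through $\rho_i$. For a fixed $s-t$ NF path $p$, the inner minimization $\min\sum_{i\in\cup_{f\in F}i_f\cap p}c_i x_i$ ranges over $x$ feasible for $p$, and feasibility forces $x_i=1$ for every node on $p$; hence the minimum equals $\sum_{i\in\cup_{f\in F}i_f\cap p}c_i$, the total nodal weight of the NF-enabled nodes lying on $p$. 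Under the standing assumptions that each NF request is fulfilled at most once and that no NF-enabled node serves two requests of the same chain, no node is double-counted, so $\cup_{f\in F}i_f\cap p$ is a genuine set. Since intersecting a path with $\cup_{f\in F}i_f$ merely removes $s$ and $t$, which carry no weight, we get $\cup_{f\in F}i_f\cap p_1=N_1$ and $\cup_{f\in F}i_f\cap p_2=N_2$; invoking $N_1=N_2$ yields $\sum_{i\in\cup_{f\in F}i_f\cap p_1}c_i=\sum_{i\in\cup_{f\in F}i_f\cap p_2}c_i$, which is the claimed identity. The same argument with $c_i$ replaced by the bottleneck weight $\ln(1+(1-\rho_i))$ gives the robust analogue.

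I do not expect a real obstacle; the only point needing care is the meaning of ``$N_1=N_2$'' in the augmented network, since a physical node supporting several NFs owns several copies $i_f$. I would read $N_1,N_2$ as sets of augmented nodes, in which case the above is clean, and note that even comparing the underlying physical NF-enabled nodes changes nothing because the weights depend on $i$ alone. As an immediate consequence, the non-chained NF $s-t$ path problem attains its optimum on the same node set as some SFC ordering, so Algorithm~\ref{alg:sfcPath} also solves the non-chained variant under this objective -- the conclusion the surrounding text is aiming at.
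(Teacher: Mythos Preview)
Your argument is correct and is precisely the reasoning the paper intends: the proposition is stated immediately after the sentence ``we assume that only nodes in the auxiliary NF-enabled network have nodal weights and all arcs have weight equals 0 \ldots\ the visiting order of these nodes would not impact that optimal solution,'' and the paper offers no further proof beyond this observation. Your write-up simply makes that observation rigorous---the objective depends only on the set $\cup_{f\in F}i_f\cap p$, and $N_1=N_2$ forces equality---which is exactly what the surrounding text asserts informally.
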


Hence, we define an SFC for the non-chained NF $s-t$ path and apply Algorithms~\ref{alg:sfcPath} and~\ref{alg:robust-sfcPath} to obtain the (robust) non-chained NF $s-t$ path.

\subsection{Special Case 2: Robust NF Provisioning with SFC-Fork}\label{sec:appAlgNF}
In this section, we present an approximation algorithm for robust VNF provisioning with \textbf{SFC-Fork} (also denoted as \textbf{SFork}). Note here that SFC-Fork is a common NF forwarding graph defined in practice, which has a rooted tree structure with a single branching point. We first review the existing $k$-level facility location bi-factor approximation algorithm and demonstrate that VNF provisioning with a single SFC can be reduced to a $k$-level facility location problem. We then design an approximation algorithm for the problem. To manage the SFC-Fork as the NF forwarding graph, we apply a two-step parameterized path reduction in the bi-factor approximation algorithm and proof that it is a 3.27-approximation. Leveraging the approximation algorithm for robust facility location problems, we further demonstrate the existence of the approximation algorithm for the robust VNF provisioning problem.

\tred{We let $G_S(V_S, A_S)$ represent an SFC forwarding graph, $(f_1, f_2, \cdots, f_r)$ be an SFC chain, and $\Lambda=\{\lambda\}$ denote a set of SFC chains with $\lambda\in\Lambda$ as an SFC chain.}
\tred{Given physical substrate network $G_P(V_P, E_P)$ and all of its supported NFs, we let $N_{P}(f)$ represent an NF-enabled physical node set supporting network function $f\in F$.}
\tred{We let $\mathcal{P}(f_h)$ represent a subpath set of SFC physical paths in $G_P(V_P, E_P)$, which starts from physical nodes in $N_{P}(f_h)$ and ends at physical nodes $N_{P}(f_r)$.}

\begin{assumption}\label{aspt:2conn}
We assume that $G_P(V_P,E_P)$ is at least two-connected. 
\end{assumption}
With Assumption~\ref{aspt:2conn}, the networks created through Algorithms~\ref{alg:nfAugNodeSet} - \ref{alg:nfArc} are at least two-connected. 
\begin{proposition}\label{prop:2connNFNet}
With Assumption \ref{aspt:2conn}, if $|N_P(f)|\geq 2$ for $f\in F$, $G_S(V_S,E_S)$ is at least two-connected.
\end{proposition}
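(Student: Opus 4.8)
The plan is to show that $G_S(V_S, E_S)$ inherits $2$-connectivity from $G_P$ by tracking how Algorithms~\ref{alg:nfAugNodeSet}--\ref{alg:nfArc} transform vertices and arcs. First I would recall the standard characterization: a graph on at least three vertices is $2$-connected if and only if every pair of vertices lies on a common cycle, equivalently there are two internally vertex-disjoint paths between any two vertices. So it suffices to exhibit, for any two nodes $u, v \in V_S$, a pair of internally disjoint $u$--$v$ paths in $G_S$.

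The construction of $V_S$ replaces each NF-enabled physical node $i$ by one copy $i_f$ for every NF $f$ it supports (Algorithm~\ref{alg:nfAugNodeSet}), and $E_S$ places an arc between two copies $\ell(f_i), k(f_j)$ whenever a connecting path $\rho(\ell, k)$ exists in $G_P$ (Algorithms~\ref{alg:sfcArc},~\ref{alg:nfArc}). The key observation is that, under Assumption~\ref{aspt:2conn} and the hypothesis $|N_P(f)| \geq 2$ for all $f$, any two distinct physical NF-enabled nodes $\ell, k$ admit two internally vertex-disjoint paths in $G_P$ (by $2$-connectivity of $G_P$), and each such path certifies the existence of an arc between the corresponding copies in $G_S$. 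Then I would argue in two cases. If $u = \ell(f_i)$ and $v = k(f_j)$ with $\ell \neq k$, the two disjoint physical paths yield (at least) an arc $(\ell(f_i), k(f_j))$; to get a genuinely second internally disjoint route in $G_S$, route through a third NF-enabled node $m$ (available since $|N_P(\cdot)| \geq 2$ together with $2$-connectivity gives enough distinct nodes), using $\ell \to m \to k$ via arcs whose underlying $G_P$-paths avoid each other. If instead $u = i_{f}$ and $v = i_{g}$ are two copies of the same physical node $i$, I would use two distinct other NF-enabled nodes $m_1, m_2$ and form the routes $i_f \to (m_1) \to i_g$ and $i_f \to (m_2) \to i_g$, which are internally disjoint in $G_S$ because $m_1 \neq m_2$.

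The main obstacle I anticipate is the bookkeeping around \emph{internal} disjointness at the auxiliary-network level versus the physical level: two arcs of $G_S$ may each be "witnessed" by a $G_P$-path, but those witnessing paths could share physical nodes even though the $G_S$-arcs are distinct — however this does not matter for $2$-connectivity of $G_S$ itself, since connectivity is a property of $G_S$ as an abstract graph, so I must be careful to phrase the argument entirely in terms of $G_S$'s own vertices and arcs and only invoke $G_P$ to guarantee that the needed arcs exist. A secondary subtlety is the directedness in the SFC case (Algorithm~\ref{alg:sfcArc} produces arcs oriented along the SFC chain): here "$2$-connected" should be read in the sense appropriate to the forwarding graph (e.g.\ $2$-connectivity of the underlying undirected graph, or the existence of two disjoint directed routes consistent with the chain order given by Proposition~\ref{prop:sfcorder}), and I would state explicitly which notion is meant before completing the case analysis. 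Finally I would note the degenerate small cases ($r = 1$, or exactly two NF-enabled nodes per level) reduce to checking that the resulting $G_S$ is a cycle or contains one spanning the relevant copies, which is immediate from the hypotheses.
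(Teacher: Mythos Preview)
The paper states this proposition without proof, so there is nothing to compare against; I can only assess your plan on its own terms.

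Your outline can be completed, but it carries a small conceptual confusion that makes the argument harder than necessary. You invoke the two internally disjoint $G_P$-paths between $\ell$ and $k$ as though they were the source of two distinct routes in $G_S$, but two $G_P$-paths between the same endpoints certify the \emph{same} single arc $(\ell(f_i),k(f_j))$: the test in Algorithms~\ref{alg:sfcArc}--\ref{alg:nfArc} is only ``does \emph{some} path exist in $G_P$.'' In fact the $2$-connectivity of $G_P$ is never used --- mere connectivity already makes that test succeed for every candidate pair, so \emph{every} possible arc between adjacent NF layers is present in $G_S$. Once you see this, $G_S$ is simply a layered graph in which consecutive layers are joined by a complete bipartite graph and each layer has at least two vertices by the hypothesis $|N_P(f)|\ge 2$; deleting any single vertex leaves every layer nonempty and the inter-layer complete bipartite structure intact, so the remaining graph is connected. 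That is the whole argument, and it dispatches both of your cases at once. Your detour through a third node $m$ is vague about which layer $m$ lives in --- between consecutive $f_i,f_j$ in an SFC chain there is no intermediate layer, so the second $G_S$-route must zig-zag back through adjacent layers, which you do not spell out. Your caveat on directedness is correct (read the claim for the underlying undirected graph); but your remark that the case $r=1$ ``reduces to checking that the resulting $G_S$ is a cycle'' is off --- with a single NF, $E_L=\emptyset$ and $G_S$ has no arcs at all, so the proposition tacitly assumes $r\ge 2$.
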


\subsubsection{Review on $k$-level Facility Location Problem}\label{subsec:klevel}
Given a client set $D$ and a facility set $F_{\ell}$ at level $\ell$, the $k$-level facility location problem determines the sets of facilities $X_{\ell} \in F_{\ell}$ to be opened at level $1\leq \ell \leq k$ and connects client $d\in D$ to a \textbf{facility service path} $(i_k(d), i_{k-1}(d),\cdots, i_{1}(d))$ with facility location at $i_{k}(d)$. The 1-level facility location problem only has a single level of facility set, where all clients are directly connected to the facility location without a service path.

\cite{ageev2004improved} demonstrates that an instance of $k$-level facility location problem can be reduced to an instance of 1-level facility location problem as follows: the 1-level problem takes the client set in the $k$-level problem as its client set, and the facility location set is determined by the potential facility service paths from level $k$ to level $1$, which is denoted as 
\begin{align*}
\rho(i_k, t) = \arg\min_{\rho\in \mathcal{P}_{F}}\{t\times \beta \times C(\rho)+ \alpha \times O(\rho)\},
\end{align*} 
where $\mathcal{P}_{F}$ is an NF service path set and $t=1, \cdots, |D|$, $i_{k} \in X_k$.
Client $j\in D$ can be connected to these determined service paths with connection cost $C(j, i_k)+ C(\rho(i_k,t))$. Given a solution of the 1-level facility location problem (denoted as \textbf{SOLS}) constructed above, a corresponding $k$-level facility location solution (denoted as \textbf{SOLM}) can be constructed through opening all facilities on above facility services paths and connecting all clients with their corresponding service paths.
\begin{theorem}[Theorem 1 in~\cite{ageev2004improved}]\label{thm:klevel}
If SOLS is an $(\alpha, \beta)$-approximate solution of an 1-level facility location instance, then, SOLM is a $(\alpha, 3\beta)$-approximate solution of a $k$-level facility location instance.
\end{theorem}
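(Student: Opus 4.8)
\textbf{Proof proposal for Theorem~\ref{thm:klevel} (Theorem 1 in~\cite{ageev2004improved}).}

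The plan is to argue that the reduction preserves the facility-opening cost exactly while inflating the connection cost by at most a factor of $3$, so that an $(\alpha,\beta)$-approximate solution SOLS of the constructed $1$-level instance yields an $(\alpha,3\beta)$-approximate solution SOLM of the original $k$-level instance. First I would fix notation: let $\mathrm{OPT}_k$ denote an optimal $k$-level solution, with opening cost $O^*$ and connection cost $C^*$, and likewise let $\mathrm{OPT}_1$ be an optimal solution of the $1$-level instance defined via the service-path construction $\rho(i_k,t)=\arg\min_{\rho\in\mathcal{P}_F}\{t\beta C(\rho)+\alpha O(\rho)\}$. The two-sided goal is (i) to show $\mathrm{OPT}_1$ is cheap enough — i.e. $\mathrm{cost}_1(\mathrm{OPT}_1) \le \alpha O^* + 3\beta C^*$ (or rather that there is a $1$-level solution of this cost) — and (ii) to show that converting SOLS back to SOLM does not increase either cost component beyond what the $1$-level objective already accounts for.

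The key steps, in order. Step 1: build a feasible $1$-level solution from $\mathrm{OPT}_k$. For each distinct facility service path $P=(i_k,\dots,i_1)$ used in $\mathrm{OPT}_k$, if $t$ clients are routed through it in $\mathrm{OPT}_k$, open the $1$-level ``super-facility'' corresponding to $\rho(i_k,t)$; by the $\arg\min$ definition this costs no more than routing those same $t$ clients along $P$ itself. Summing over all paths of $\mathrm{OPT}_k$, the opening part contributes at most $\alpha O^*$ and the path-internal connection part at most $\beta C^*$; the client-to-$i_k$ legs contribute the remaining $\le \beta C^*$, giving a $1$-level solution of cost $\le \alpha O^* + 2\beta C^*$, hence $\mathrm{cost}_1(\mathrm{OPT}_1)\le \alpha O^*+2\beta C^*$. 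Step 2: apply the hypothesis — SOLS has cost at most $\alpha\cdot(\text{opening of }\mathrm{OPT}_1) + \beta\cdot(\text{connection of }\mathrm{OPT}_1)$, which is bounded in terms of $O^*,C^*$. Step 3: translate SOLS into SOLM by opening every facility on every chosen service path and routing each client through its assigned path; the opening cost of SOLM equals the opening cost charged in SOLS (same facilities), and the connection cost of SOLM equals the connection cost charged in SOLS (same legs and same path-internal edges). Step 4: collect the constants. Tracking the coefficients of $O^*$ and $C^*$ through Steps 1–3, the opening cost stays at factor $\alpha$, while each of the three sources of connection cost (client$\to i_k$ leg in $\mathrm{OPT}_1$'s construction, path-internal cost, and the approximation slack from SOLS) is bounded by $\beta C^*$, yielding the factor $3\beta$.

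I expect the main obstacle to be Step 1 — specifically, showing carefully that the per-path $\arg\min$ choice of $\rho(i_k,t)$, which optimizes the \emph{combined} objective $t\beta C(\rho)+\alpha O(\rho)$ rather than opening cost and connection cost separately, still admits the clean decomposition ``opening $\le \alpha O^*$, path-connection $\le \beta C^*$.'' The subtlety is that $\rho(i_k,t)$ may trade extra opening cost for reduced connection cost (or vice versa), so one cannot bound the two components of $\mathrm{OPT}_1$ individually; instead I would carry the \emph{weighted} quantity $t\beta C(\rho)+\alpha O(\rho)$ as a single unit through the charging argument, compare it path-by-path against the corresponding contribution of $\mathrm{OPT}_k$, and only split into ``$\alpha$-part'' and ``$3\beta$-part'' at the very end. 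A secondary technical point is handling the multiplicity $t$ correctly — a service path shared by many clients in $\mathrm{OPT}_k$ must be matched to the $1$-level facility indexed by exactly that multiplicity — and verifying that the client-to-$i_k$ first legs, which are \emph{not} part of the service path, are accounted for separately and contribute the third $\beta C^*$. Everything else is bookkeeping on the metric (triangle) inequality, which Assumption~\ref{aspt:2conn} and the shortest-path definition of edge weights guarantee.
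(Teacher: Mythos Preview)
Your high-level three-inequality skeleton matches what the paper does (the paper proves the analogous Theorem~\ref{thm:forkAPP} in the appendix via Lemmas~\ref{lm:forkCost}--\ref{lm:fork13app}, and says the proof of Theorem~\ref{thm:klevel} is essentially the same). But two of your steps have genuine gaps.

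\textbf{Step 1 fails without a forest lemma.} You write ``Summing over all paths of $\mathrm{OPT}_k$, the opening part contributes at most $\alpha O^*$.'' This is false in general: distinct service paths in $\mathrm{OPT}_k$ can share facilities at intermediate levels, so $\sum_P O(P)$ over the distinct paths $P$ overcounts and can strictly exceed $O^*$. The paper repairs this with a separate structural lemma (Lemma~\ref{lm:forest} for the single-SFC case, Lemma~\ref{lm:forkForest} for the fork case) showing that any $k$-level solution can be converted, without increasing opening or connection cost, into one whose service paths form a forest rooted at the level-$1$ facilities. Only \emph{after} invoking that lemma do the paths become facility-disjoint enough that $\sum_P O(P)=O^*$. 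You never mention this step; the ``main obstacle'' you identify (the mixed objective in the $\arg\min$) is a red herring by comparison.

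\textbf{The factor $3$ does not come from ``three sources of connection cost''.} You relegate the triangle inequality to ``bookkeeping'', but in the paper's proof (Lemma~\ref{lm:fork13app}) it is exactly where the $3$ appears. After the forest lemma, clients are grouped by the top-level node $\mu$ they attach to; all clients in $D(\mu)$ are then rerouted to the single cheapest service path $p(\mu)$ in the group. For a client $d\in D(\mu)$ whose original path was $p(d)$, the triangle inequality gives $C(p(\mu))+C(d,p(\mu)) \le 2\,C(p(d)) + C(p(\mu)) + C(d,p(d)) \le 3\,C(p(d)) + C(d,p(d))$, i.e.\ the rerouted path-plus-leg cost is at most three times the original path cost plus the original leg (see the paper's Fig.~\ref{fig:tEq}). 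Only \emph{after} this rerouting can one invoke the $\arg\min$ definition of $\rho(i_k,t)$ to pass to the $1$-level instance. Your accounting, which tries to read off three separate $\beta C^*$ terms directly, does not produce a valid bound; in particular ``the approximation slack from SOLS'' is not an additive $\beta C^*$ term at all---the bifactor guarantee is $F(\text{SOLS})+C(\text{SOLS})\le \alpha F(\psi)+\beta C(\psi)$ for \emph{any} feasible $1$-level solution $\psi$, so the slack is already absorbed once you compare against the constructed $\psi$, not against $\mathrm{OPT}_1$.
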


Next, we present approximation algorithms for NFP-SFork starting with a simple case, where a single SFC is the NF-forwarding graph. In other words, all SFC requests require the same SFC. Extending from this special case, we present approximation algorithms with SFork as the forwarding graph in both deterministic and robust settings. 

\subsubsection{Approximation Algorithm for NFP-1SFC}\label{subsubsec:simplecase}

We first study a special case of the NF provisioning problem, where all requests require the same SFC $\lambda=(f_1, f_2, \cdots, f_{\gamma})$ and $\gamma$ indicates the $\gamma$th NF in the SFC. We denote the problem as \textbf{NFP-1SFC}.

NFP-1SFC can be reduced to a $k$-level facility location problem through the following steps. First, we convert an SFC path set into the connection between a request and its SFC service path, where required NFs specified in the SFC should be visited in order along the path.
We calibrate the cost of request $d\in D$ to its SFC service path with a simple reduction. The connection cost for a request $d(s,t)$ to a path SFC service $\rho$ starting from $i_{f_{\gamma}}$ and ending at $i_{f_1}$ is
$$C((s,i_{f_1}), \rho, (i_{f_{\gamma}},t)) = C(s, i_{f_1}) + C(\rho)+C(i_{f_\gamma},t)$$
$$=C(d, i_{f_1},i_{f_\gamma}, \rho)+C(\rho),$$ with $C(d,i_{f_1},i_{f_\gamma}, \rho)=C(s, i_{f_1})+C(i_{f_\gamma},t)$, $i_{f_1}, i_{f_{\gamma}}\in \rho$ and $\rho \in \mathcal{P}_{F}$. The procedure is presented in Algorithm~\ref{alg:demandCon}. Based on $G_S(V_S, E_S)$'s 2-connectivity given in Proposition~\ref{prop:2connNFNet}, the connection between the source and destination nodes of a request and a service path's two-end nodes can be established.

\begin{algorithm}
\caption{SFC routing conversion to request and service path connections}\label{alg:demandCon}
\begin{algorithmic}
\STATE{Input:}
\FOR{$d=(s,t)\in D$, $i_{f_1}\in X_{1}$, and $i_{\gamma}\in X_{\gamma}$}
\FOR{$\rho\in \mathcal{P}_{F}$}
\STATE{Calibrate the shortest path between $(s,i_{f_1})$ and $(t,i_{f_r})$ in $G_P$ with $i_{f_1}, i_{f_r}\in \rho$}
\STATE{Set $C(d,i_{f_1}, i_{f_r}, \rho)=C(s, i_{f_1})+C(i_{f_r},t)$}
\ENDFOR
\ENDFOR
\end{algorithmic}
\end{algorithm}

After applying Algorithm~\ref{alg:demandCon}, the NFP-1SFC turns to a problem that (1) determines NF-enabled nodes at each level for NF instance deployment, and (2) guarantees all requests are connected to an NF service path visiting NFs in the order defined in the SFC. Hence, the problem is a $\gamma$-level facility location problem, where (1) the NF request set is the client set, and (2) the NF-enabled node set corresponds to NF $i$ are the level $i$ facility set ($1\leq i \leq \gamma$). The difference between the two is that the connection cost of a request to a service path is composed of two parts, namely $C(s, i_{f_1})+C(i_{f_r},t)$.
Different from the $k$-level facility location problem, we let $\overline{\mathcal{P}_{S}}$ be the service path set for request $d(s,t)$ which starts and ends at node $i_{f_r}$ and $i_{f_1}$, respectively, with $i_{f_r} \in X_{\gamma}$ and $i_{f_1}\in X_{1}$. We select service path $\rho(t, i_{f_1}, i_{f_\gamma}, d)$ for request $d\in D$ as
$$\arg\min_{\rho\in \overline{\mathcal{P}_{S}}}\{t\times \beta \times C(\rho)+ \alpha \times O(\rho)\},$$
where $t=1,\cdots, |D|$. We let request $d(s,t)$ connect to NF service path $\rho(t, i_{f_1}, i_{f_\gamma}, d)$.

We further reduce NFP-1SFC to the 1-level facility location problem, where the SFC request set is taken as the client set, and the selected SFC service path set represents the facility set.
Given a feasible solution of the 1-level facility location problem, denoted as $\psi_{1FL}$, we construct a solution for NFP-1SFC problem as follows -- all NF-enabled nodes on selected SFC service paths are deployed with NF instances, and the demand $d(s,t)$ is connected to a selected service path.

With Theorem~\ref{thm:klevel}, the following conclusions holds.
\begin{lemma}\label{lm:forest}
Given an SFC chain $\lambda=\{f_1,f_2,\cdots, f_{\gamma}\}$ and a feasible solution $\psi_{1FL}$, \\
(1) if there exist NF service paths $\rho_{1}(\psi_{1FL})=(i_{f_{1}}, i_{f_{2}},\cdots, i_{f_{\gamma}})$, $\rho_{2}(\psi_{1FL})=(i'_{f_{1}}, i'_{f_{2}},\cdots, i'_{f_{\gamma}})$ and $i_{f_j} = i'_{f_j}$, another solution $\phi$ also exists based on SOLS, where NF service paths are
$$\rho_{1}(\phi)=(i_{f_{1}}, i_{f_{2}},\cdots, i_{f_j},\cdots, i_{f_{\gamma}})$$ 
and 
$$\rho_{2}(\phi)=(i'_{f_{1}}, i'_{f_{2}},\cdots, i'_{f_j}, \cdots, i'_{f_{\gamma}}), \text { with } i_{f_{\ell}}=i'_{f_{\ell}}, \ell \leq j;$$ \\
(2) $C^{\text{SOLS}}_{1}=C^{\phi}_{1}$; $O^{\text{SOLS}}=O^{\phi}$; and $\sum^{\gamma}_{i=2}C^{\text{SOLS}}_{i} \leq \sum^{\gamma}_{i=2}C^{\phi}_{i}$.
\end{lemma}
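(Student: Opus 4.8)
The plan is to prove Lemma~\ref{lm:forest} by an explicit "path surgery" argument on the two NF service paths, exploiting the fact that once two service paths $\rho_1(\psi_{1FL})$ and $\rho_2(\psi_{1FL})$ share a common node $i_{f_j}=i'_{f_j}$ at level $j$, we may re-splice their tails so that the two resulting paths agree on the entire prefix up to level $j$ while still covering exactly the same set of opened facilities. Concretely, first I would fix the shared node $w:=i_{f_j}=i'_{f_j}$ and define $\rho_1(\phi)$ to be the prefix $(i_{f_1},\dots,i_{f_{j-1}},w)$ of $\rho_1(\psi_{1FL})$ concatenated with the suffix $(i_{f_{j+1}},\dots,i_{f_\gamma})$ of $\rho_1(\psi_{1FL})$ — i.e.\ $\rho_1$ is left unchanged — and define $\rho_2(\phi)$ to be the same prefix $(i_{f_1},\dots,i_{f_{j-1}},w)$ followed by the suffix $(i'_{f_{j+1}},\dots,i'_{f_\gamma})$ of $\rho_2(\psi_{1FL})$. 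The point is that the prefixes of both new paths are now identical up to level $j$ (this is the statement $i_{f_\ell}=i'_{f_\ell}$ for $\ell\le j$ in part (1)), and each new path is still a legitimate SFC service path because the two-connectivity of $G_S$ (Proposition~\ref{prop:2connNFNet}) guarantees that the required physical connections $\rho(\ell,k)$ between consecutive levels exist, so the re-spliced sequences remain realizable in $G_P$ via Algorithm~\ref{alg:sfcArc}.

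Next I would verify the three accounting identities/inequalities in part (2). For the facility-opening cost, observe that $\phi$ opens exactly the union of facilities appearing on $\rho_1(\phi)$ and $\rho_2(\phi)$; by construction this union is $\{i_{f_1},\dots,i_{f_\gamma}\}\cup\{i'_{f_1},\dots,i'_{f_\gamma}\}$, which is the same multiset of facilities (after de-duplication) that SOLS opens on $\rho_1(\psi_{1FL})$ and $\rho_2(\psi_{1FL})$, since the prefix facilities $i_{f_1},\dots,i_{f_{j-1}}$ of $\rho_1$ coincide with those we have forced onto $\rho_2(\phi)$ but those were already being opened under SOLS (they lie on $\rho_1(\psi_{1FL})$). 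Hence $O^{\text{SOLS}}=O^{\phi}$. For the level-1 connection cost $C_1$: each client $d$ is connected through some service path, and the endpoints at level $1$ are unchanged under the surgery (the re-splicing only affects levels $j+1,\dots,\gamma$ of $\rho_2$, never level $1$), so $C^{\text{SOLS}}_1=C^{\phi}_1$. Finally, for the higher-level connection costs $\sum_{i=2}^\gamma C_i$: since SOLS chooses, for each client, the cost-minimizing service path among all candidates (by the $\arg\min$ defining $\rho(t,i_{f_1},i_{f_\gamma},d)$), and $\phi$ is one particular feasible reassignment of clients to service paths, we have $\sum_{i=2}^\gamma C^{\text{SOLS}}_i\le\sum_{i=2}^\gamma C^{\phi}_i$ by optimality of the selection rule. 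This last inequality is really just "SOLS is at least as cheap as any competitor in the higher-level segments," and it is what ultimately lets the surgery argument feed back into the $(\alpha,3\beta)$ bound of Theorem~\ref{thm:klevel}.

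The main obstacle I expect is making the re-splicing argument airtight at the level of the auxiliary graph rather than hand-waving at the level of "SFC service paths." Specifically, I need to be careful that (i) the new path $\rho_2(\phi)$ does not accidentally revisit a node (creating a walk rather than a simple path), which could force a different facility set or violate the SFC ordering from Proposition~\ref{prop:sfcorder}; and (ii) the cost decomposition $C((s,i_{f_1}),\rho,(i_{f_\gamma},t))=C(d,i_{f_1},i_{f_\gamma},\rho)+C(\rho)$ behaves additively under the splice, i.e.\ the cost of the re-spliced $\rho$ is exactly the sum of the prefix cost and the swapped-in suffix cost with no cross terms. Point (i) is handled by invoking two-connectivity to reroute around any repeated node at the cost of at most the triangle-inequality detour already absorbed into the $3\beta$ factor; point (ii) follows because $C(\rho)$ is defined as a sum of per-arc (shortest-path) weights along $\rho$, which splits at the cut node $w$. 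Once these two technical points are pinned down, the rest is bookkeeping, and the lemma follows by applying the surgery repeatedly until the collection of service paths forms a forest (hence the lemma's name), at which point Theorem~\ref{thm:klevel} applies verbatim to yield the approximation guarantee.
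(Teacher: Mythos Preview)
Your path-splicing construction is exactly the paper's approach: the paper does not prove this lemma in place, but the proof of its SFork generalization (Lemma~\ref{lm:forkForest} in the Appendix) replaces the low-index portion of $\rho_2$ by that of $\rho_1$ at the shared node, just as you do, and then reads off the cost comparisons essentially by inspection.

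One slip worth fixing: in your argument for $C_1^{\text{SOLS}}=C_1^{\phi}$ you write that ``the re-splicing only affects levels $j{+}1,\dots,\gamma$ of $\rho_2$, never level $1$,'' but your own construction does the opposite---you replaced levels $1,\dots,j$ of $\rho_2$ with $\rho_1$'s prefix and left levels $j{+}1,\dots,\gamma$ intact. The level whose client-facing connection is genuinely preserved is the \emph{last} one, and indeed the paper's Lemma~\ref{lm:forkForest} records the equality as $C_{|\lambda|}(\psi_{\text{sfc}})=C_{|\lambda|}(\chi_{\text{sfc}})$ rather than at index $1$; the indexing in the statement of Lemma~\ref{lm:forest} is inconsistent with its own generalization, so the discrepancy lies in the paper's statement, not in your method. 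Similarly, your argument for $O^{\text{SOLS}}=O^{\phi}$ actually only establishes $O^{\phi}\le O^{\text{SOLS}}$ (the abandoned prefix nodes $i'_{f_1},\dots,i'_{f_{j-1}}$ need not appear on any path of $\phi$), which is again what Lemma~\ref{lm:forkForest} claims. Finally, the obstacles you anticipate (simplicity of the re-spliced walk, triangle inequality) are not needed for this lemma; the triangle inequality enters only later, in Lemma~\ref{lm:fork13app}.
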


We derive from the first claim that the solutions of the NFP-1SFC problem have SFC service paths satisfying all NF requests ($(f_1, f_2,\cdots, f_r)$), which form a forest beginning at the first NF-enabled node. Hence, a feasible solution for NFP-1SFC also has the forest structure. Meanwhile, all NF-enabled nodes and arcs in the forest, formed by its SFC service paths, are unique. The NF deployment cost for NF provisioning is the total cost to deploy all NF-enabled nodes in the forest, and the connection costs are the total costs of arcs in the forest. 
Moreover, with SFC path selection, the Claim 2 above further identifies that demands connected to a rooted tree of the forest have their source nodes connected to the root node. Hence, we have a similar conclusion on the bi-factor approximation for NF provisioning with a single SFC.

\begin{theorem}\label{thm:sfcChainApp}
If the 1-level facility location problem has a $(\alpha, \beta)$-approximation solution, a solution of NFP-1SFC can be constructed through a $(\alpha, 3\beta)$-approximation.
\end{theorem}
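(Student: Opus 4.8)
The plan is to mirror the $k$-level-to-$1$-level reduction of~\cite{ageev2004improved} summarized in Theorem~\ref{thm:klevel}, treating NFP-1SFC as a $\gamma$-level facility location instance in which the client-to-service-path connection cost carries the two extra boundary terms $C(s,i_{f_1})$ and $C(i_{f_\gamma},t)$, and then checking that these boundary terms do not degrade the $(\alpha,3\beta)$ guarantee. Concretely, from a given $(\alpha,\beta)$-approximate solution of the constructed $1$-level instance, call it $\psi_{1FL}$, I would build the NFP-1SFC solution SOLM by opening every NF-enabled node lying on a selected SFC service path and routing each request $d=(s,t)$ along the path $\rho(t,i_{f_1},i_{f_\gamma},d)$ to which it was assigned, prepending the shortest $s$--$i_{f_1}$ segment and appending the shortest $i_{f_\gamma}$--$t$ segment, both of which exist by the $2$-connectivity of $G_S$ (Proposition~\ref{prop:2connNFNet}).

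Two preliminary facts I would establish first. First, a \emph{lower bound}: every feasible NFP-1SFC solution induces a feasible solution of the constructed $1$-level instance whose opening cost equals its NF-deployment cost and whose connection cost equals its routing cost; this uses the forest structure of Lemma~\ref{lm:forest}(1), which guarantees that each opened NF-enabled node and each arc of a service path is charged exactly once. Hence the optimum of the $1$-level instance does not exceed the optimum of NFP-1SFC in either coordinate, so it suffices to compare SOLM against $\psi_{1FL}$. Second, \emph{preservation of opening cost}: by the bookkeeping in Lemma~\ref{lm:forest}(2) the set of NF-enabled nodes opened by SOLM coincides (up to the forest rerouting) with those paid for in $\psi_{1FL}$, so $O^{\text{SOLM}}=O^{\psi_{1FL}}$ and the facility factor stays at $\alpha$.

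The connection cost is where the factor $3$ appears and where the real work lies. Writing the connection cost of $d$ in SOLM as $C(s,i_{f_1})+C(\rho)+C(i_{f_\gamma},t)$, the middle term $C(\rho)$ is the ``internal'' path cost which, when the $1$-level super-facility solution is unfolded back into NF service paths, may be incurred by several requests whose paths share a prefix within a tree of the forest. Exactly as in~\cite{ageev2004improved}, I would sort the requests assigned to a common super-facility, use that the $t$-th one is selected to minimize $t\,\beta\,C(\rho)+\alpha\,O(\rho)$ over $\overline{\mathcal{P}_{S}}$, and run the telescoping/averaging argument to bound the total internal cost over all requests by $3\beta$ times the corresponding optimal internal cost; the inequality $\sum_{i=2}^{\gamma}C^{\text{SOLS}}_{i}\le\sum_{i=2}^{\gamma}C^{\phi}_{i}$ of Lemma~\ref{lm:forest}(2) is precisely the combinatorial fact that keeps this rerouting inside a tree cost-controlled. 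The two boundary terms are easier: $C(i_{f_\gamma},t)$ plays the role of the usual client-to-top-level connection already covered by Theorem~\ref{thm:klevel}, while by Lemma~\ref{lm:forest}(2) (demands on a rooted tree have their sources connected to the root $i_{f_1}$) the term $C(s,i_{f_1})$ is incurred once per request independently of how super-facilities are unfolded, hence bounded by $\beta$ times its optimal counterpart with no inflation. Collecting the three contributions gives $C^{\text{SOLM}}\le 3\beta$ times the optimal connection cost, and combined with the preserved opening cost this yields the claimed $(\alpha,3\beta)$-approximation.

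I expect the main obstacle to be this connection-cost bound: making the averaging argument of~\cite{ageev2004improved} go through in the presence of the two-sided connection term $C(d,i_{f_1},i_{f_\gamma},\rho)=C(s,i_{f_1})+C(i_{f_\gamma},t)$, and in particular verifying that the forest structure furnished by Lemma~\ref{lm:forest} is exactly strong enough that unfolding a chosen super-facility into its NF service path never forces a request onto a strictly longer prefix than the one accounted for in $\psi_{1FL}$. Everything else --- feasibility of SOLM, the $2$-connectivity-based stitching of $s$ to $i_{f_1}$ and of $i_{f_\gamma}$ to $t$, the lower-bound reduction, and the equality of opening costs --- is routine given the earlier lemmas.
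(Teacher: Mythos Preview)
Your plan is essentially the same as the paper's: reduce NFP-1SFC to a $1$-level instance \`a la~\cite{ageev2004improved}, unfold the $(\alpha,\beta)$-approximate $1$-level solution into SFC service paths, and recover the $(\alpha,3\beta)$ guarantee via the forest structure and a triangle/averaging argument on the internal path costs. The paper does not write out a separate proof for Theorem~\ref{thm:sfcChainApp}; it declares the argument ``very similar'' to Theorem~\ref{thm:klevel} and instead proves the more general NFP-SFork statement (Theorem~\ref{thm:forkAPP}) in the appendix via Lemmas~\ref{lm:forkCost}--\ref{lm:fork13app}, from which the single-chain case follows as the degenerate fork $b=\gamma$.

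One structural point deserves tightening. In the paper's chain~(\ref{forkproof:1app})--(\ref{forkproof:forkCost}) the factor $3$ enters in the \emph{last} step (Lemma~\ref{lm:fork13app}): from an optimal NFP solution one \emph{constructs} a feasible $1$-level solution whose $(\alpha,\beta)$-weighted cost is at most $\alpha O^{*}+3\beta C^{*}$, using the forest lemma and triangle inequality. The unfolding step (Lemma~\ref{lm:forkCost}) is cost-nonincreasing, with no factor at all. Your write-up puts the factor $3$ on the unfolding side and asserts, as a ``lower bound'', that any NFP-1SFC solution yields a $1$-level solution with \emph{equal} opening and connection costs --- that equality does not hold for the Ageev--Ye--Zhang parameterized super-facilities, whose opening cost is $t\beta C(\rho)+\alpha O(\rho)$ rather than $O(\rho)$. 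The ingredients you list (parameterized path selection, forest structure, triangle bound, the two boundary terms $C(s,i_{f_1})$ and $C(i_{f_\gamma},t)$ handled separately) are all correct; just relocate the factor-$3$ argument to the comparison between the $1$-level optimum and the NFP-1SFC optimum, as in Lemma~\ref{lm:fork13app}.
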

The detailed proof of Theorem~\ref{thm:sfcChainApp} %is provided in Appendix~\ref{sec:bifactSFCproof}, whose idea 
is very similar to that of Theorem~\ref{thm:klevel} in \cite{ageev2004improved}. Since NFP-1SFC is a special case of NFP-SFork, we include the proof of Theorem 10 %~\ref{thm:forkAPP} 
in Appendix~\ref{sec:bifactSFCproof} for NFP-SFork.

\subsubsection{Approximation Algorithm for NFP-SFork}~\label{subsubsec:bifactAlg}
With multiple SFCs, the corresponding SFC forwarding graph identified in \cite{cheng2015enabling, jalalitabar2018service, nfv2014001, openSFC} forms a fork network structure. Extending from NFP-1SFC, we study the approximation algorithm for SFC-Fork, where multiple SFC chains share the common structure. We define the SFC-Fork as follows.
\begin{define}\label{def:fork}
Given $d\in D$ and all requested SFCs in $\Lambda$, an SFC-Fork is a rooted tree with a single branching node $f_{b}$, where $f_{b}\in \cup_{f\in \lambda}\cup_{\lambda\in \Lambda} f$. 
\end{define}
\begin{figure}
\centering
\includegraphics[width=8.7cm]{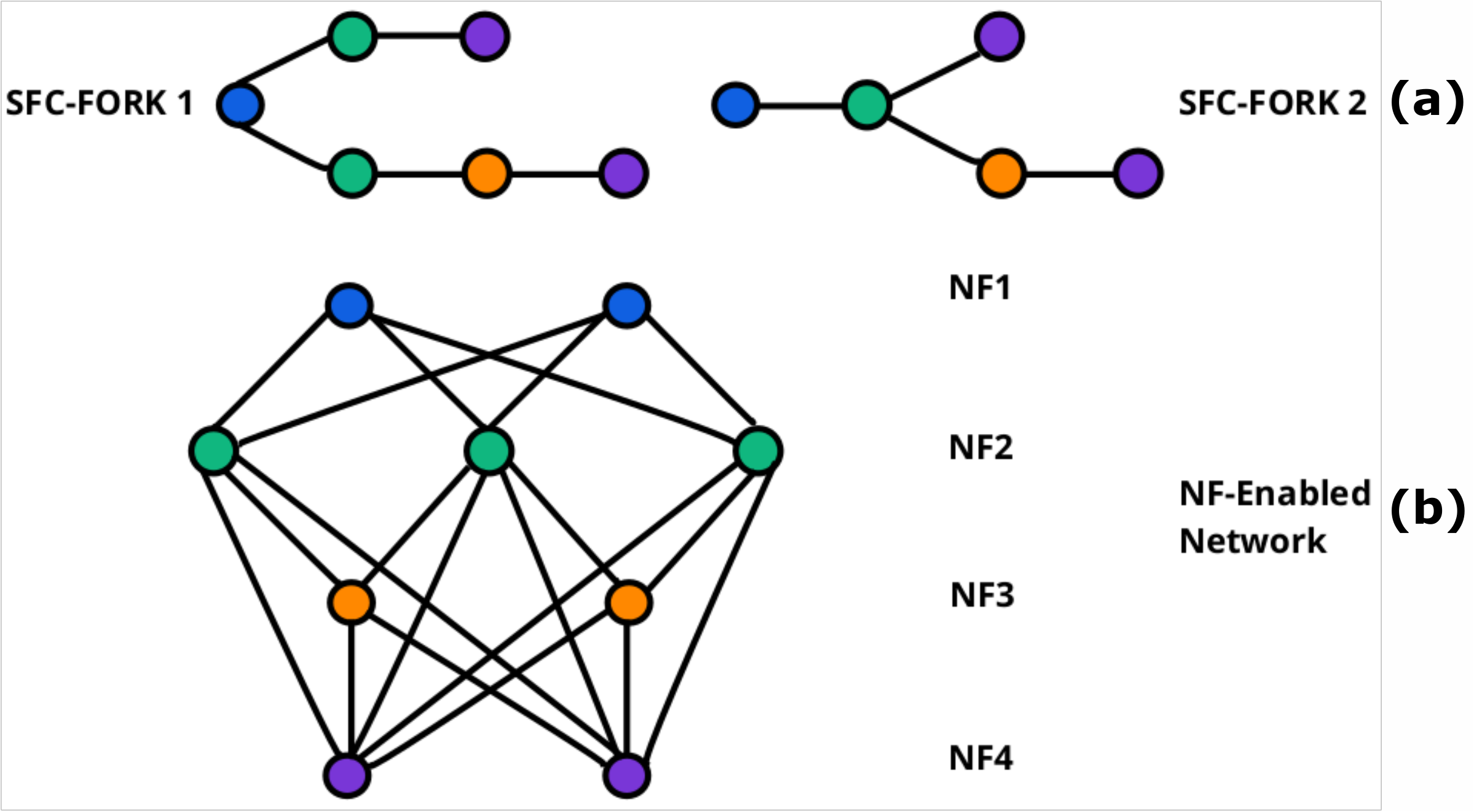}
\caption{Fork SFC forwarding graph and NF-enabled node connection graph}
\label{fig:sffork}
\end{figure}
Figure~\ref{fig:sffork} illustrates two instances of SFC-Forks, where (1) the (sub)paths in SFC-FORK1 share a common NF -- the root node of the fork; and (2) the SFC-FORK2 is branched at the NF2 (green) node. All SFCs visit the same NF with the same order. For example, NF 2 is visited before NF 4 for all SFCs. In other words, edges going from NF 4 to NF 2 are not allowed (only top-down order as in Fig.~\ref{fig:sffork}(b)). Hence, a forwarding graph of an SFC-Fork is $\cup_{\lambda\in \Lambda}\lambda$.
The difference between NFP-SFork and the $k$-level facility location problem is that the latter only contains a type of facility service paths, while NF-SFork requires multiple types of SFC paths with shared NFs. Thus, NF-SFork requires separate management to avoid duplicated counts either in its solution or its reduced 1-level facility location problem on NF deployed nodes. We now show that a bi-factor $(\alpha, 3\beta)$-approximation algorithm exists for NFP-SFork if the 1-level facility problem has a $(\alpha,\beta)$-approximation.
$(\alpha,3\beta)$-approximation algorithm.

Given an instance of NFP-SFork $\mathcal{M}$, we let the common subpath of SFCs be $\rho^{S}_{\Lambda}=(f_1,\cdots, f_{b})$. If $b=1$, only the first NF is shared; otherwise a subpath $(f_1,\cdots, f_{b})$, $2\leq b \leq \min_{\lambda\in \Lambda}\gamma(\lambda)$, is shared, where $\gamma(\lambda)$ indicates the total number of NFs in SFC $\lambda\in\Lambda$.
We reduce the NFP-SFork to the 1-level facility location problem through the following approaches -- (1) facility set: reducing the shared sub-SFC service paths, and (2) client set: aggregating SFC requests and their remaining subpaths in SFC paths.

Algorithm~\ref{alg:twoStepPathReduction} is a \textbf{two-step parameterized path reduction algorithm} for creating SFC service paths.
\begin{algorithm}
\caption{Two-step parameterized path reduction algorithm}
\label{alg:twoStepPathReduction}
\begin{algorithmic}
\STATE \textbf{Step 1:} \emph{Parameterized path reduction of SFC service subpaths from $f_{b+1}(\lambda)$ to $\gamma(\lambda)$ with $\lambda\in \Lambda$.} Let $\bar{\mathcal{P}(b+1,\gamma(\lambda))}$ be a path set on the auxiliary NF-enabled node network, $G_A$, connecting $f_{h+1}$-enabled nodes with $b+1 \leq h \leq \gamma(\lambda)$ and $$p(t,i_{f_{b+1}})=\arg\min_{p\in \bar{P(b+1,\gamma(\lambda))}} \{t \beta \left[O(p) +  C(p)\right]\}$$, $t=1,\cdots, |D|$. This path set is called the \textbf{disjoint SFC service subpath}, where $O(p)$ and $C(p)$ are NF deployment costs and connection costs for path $p$, respectively.
\STATE \textbf{Step 2:} \emph{Parameterized path reduction of the joint path among all SFCs}. We combine the joint path and subpaths in $\mathcal{P}(b+1,\gamma(\lambda))$ and construct full SFC paths for requests. We determine the shared path as $p(j, i_{f_1})=\arg\min_{p\in \bar{\mathcal{P}(1,b)}} \{\alpha O(p) + j \beta C(p)\}$ with $j=1, \cdots, |N(f_{b+1})|$, where $\bar{\mathcal{P}(1,b)}$ is the subpath set connecting  the first NF-enabled nodes all the way through the $f_{b}$-enabled nodes in $G_A$. Hence, given a request $d\in D$, a feasible solution of NFP-SFork has the following structure -- request $d$ connects to path $p(t,i(f_{b+1})$ and $p(j, i_{f_1})$, where $t$ is the request, $d$ is an index, and $j$ is the index of the $f_{b+1}$-enabled node (denoted as $i(f_{b+1})$).
\end{algorithmic}
\end{algorithm}
The shared subpath among requests is called the \textbf{shared SFC service subpath} for all $j\in N(f_{b})$.
The procedure of \textbf{NFP-SFork reduction to 1-level facility location} is now presented as follows.\\
1. Facility location set: it contains all $p(j, i_{f_1})$, $j=1,\cdots,\bar{\mathcal{P}(1,b)}$, and has setup costs equal the sum of deployment and connection costs of their corresponding paths.\\
2. Client set: NF $(b+1)$-enabled nodes, $i_{f_{b+1}}$,  where subpaths and SFC requests connecting to them are aggregated with $i_{f_{b+1}}\in X_{b+1}$. The connection cost to a facility location is the sum of (1) the connection costs from request $d$ to a disjoint subpath, (2) the connection costs from disjoint SFC subpath to shared subpath, and (3) the deployment and connection cost of disjoint SFC service subpath, which is captured through
$$C(d,t)+C(p(t,i(f_{b+1}))+C(p(i_{f_{b+1}},i_{f_1}))$$
$$+C(i_{f_{b+1}},i_{f_1})+F(p(t,i(f_{b+1})).$$

We let $O^{1}$ indicate the total NF deployment cost corresponding to 1-level facility location facility cost, and $O^{2}$ indicates the total NF deployment cost which is part of 1-level facility location connection costs.
Based on an instance of the 1-level facility location problem constructed above and one of its feasible solutions, we get an \textbf{NFP-SFork feasible solution} by \\
1. Following the facility and client connections -- connect (1) the shared SFC service subpaths and disjoint SFC service subpaths, and (2) SFC requests to its disjoint subpaths.\\
2. Deploying NFs onto NF-enabled nodes for both shared and disjoint SFC service subpaths.

We now discuss the existence of a $(\alpha, 3\beta)$-approximation algorithm for NFP-SFork. The proof of Theorem 10 %~\ref{thm:forkAPP} 
is given in the appendix.

\begin{theorem}\label{thm:forkAPP}
Given a $(\alpha, \beta)$-approximation solution for the 1-level facility location problem, there exists a $(\alpha, 3\beta)$-approximation algorithm for the NFP-SFork problem.
\end{theorem}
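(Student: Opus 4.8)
The plan is to follow the template of Theorem~\ref{thm:klevel} (the Ageev et al.\ reduction) and of its single-chain specialization Theorem~\ref{thm:sfcChainApp}, but to feed the reduction through the two-step parameterized path reduction of Algorithm~\ref{alg:twoStepPathReduction} so that the branching of the fork does not compound the connection-cost factor. Throughout I will fix an optimal NFP-SFork solution $\mathcal{O}$ and write $O^{\star}$ for its NF-deployment cost and $C^{\star}$ for its routing cost. First I would record the structural skeleton: by Lemma~\ref{lm:forest}, the SFC service paths of $\mathcal{O}$ form a forest whose common stem is the shared subpath $\rho^{S}_{\Lambda}=(f_1,\dots,f_b)$ -- every request attaches its source at an $f_1$-enabled root, runs up a shared subpath to an $f_b$-enabled node, then a disjoint subpath from $f_{b+1}$ to $f_{\gamma(\lambda)}$, and out to its destination. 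Assumption~\ref{aspt:2conn} and Proposition~\ref{prop:2connNFNet} guarantee that every source/destination-to-subpath link these paths use actually exists in $G_S$, and the forest structure means each $(b+1)$-enabled node and each disjoint subpath occurs with multiplicity one, so no NF-deployment cost is double counted when it is read off the forest.

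Next I would build, from $\mathcal{O}$, a feasible solution of the $1$-level facility location instance produced by the ``NFP-SFork reduction to $1$-level'' construction. For each $(b+1)$-enabled node $i_{f_{b+1}}$ used by $\mathcal{O}$, let $t$ be the number of requests $\mathcal{O}$ routes through it; Step~1 of Algorithm~\ref{alg:twoStepPathReduction} picks $p(t,i_{f_{b+1}})\in\arg\min\{t\beta[O(p)+C(p)]\}$, so $t\beta[O(p(t,i_{f_{b+1}}))+C(p(t,i_{f_{b+1}}))]\le t\beta[O(p^{\star})+C(p^{\star})]$ for the disjoint subpath $p^{\star}$ that $\mathcal{O}$ uses there. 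For each $f_1$-enabled root used by $\mathcal{O}$, let $j$ be the number of $(b+1)$-enabled nodes hanging off it; Step~2 picks $p(j,i_{f_1})\in\arg\min\{\alpha O(p)+j\beta C(p)\}$, so $\alpha O(p(j,i_{f_1}))+j\beta C(p(j,i_{f_1}))\le \alpha O(p^{\star\star})+j\beta C(p^{\star\star})$ for $\mathcal{O}$'s shared subpath $p^{\star\star}$. Opening the $1$-level facilities corresponding to these shared subpaths and connecting each $1$-level client $i_{f_{b+1}}$ to the facility above it gives a feasible $1$-level solution; adding the request-to-disjoint terms $C(d,t)$ and summing the two families of inequalities over all used nodes should telescope the right-hand sides into exactly $O^{\star}$ and $C^{\star}$, i.e.\ the weighted quantity $\alpha F_S+\beta C_S$ of this solution (opening $F_S$, connection $C_S$) is at most $\alpha O^{\star}+\beta C^{\star}$.

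Then I would run the given $(\alpha,\beta)$-approximation on this $1$-level instance: it returns a solution SOLS of cost at most $\alpha F_S+\beta C_S\le\alpha O^{\star}+\beta C^{\star}$. Unfolding SOLS back into an NFP-SFork solution SOLM -- deploy NFs along every selected shared and disjoint SFC service subpath, and route each request along its disjoint subpath and then the shared subpath above it -- the NF-deployment cost of SOLM equals $O^{1}+O^{2}$ (the shared-subpath deployment is SOLS's opening cost; the disjoint-subpath deployment $O^{2}$ is the $F(p(t,i(f_{b+1})))$ term that was folded into SOLS's connection cost), hence at most the cost of SOLS. The routing cost of SOLM splits, per request, into the request-to-disjoint link, the disjoint-subpath interior, the disjoint-to-shared link, and the shared-subpath interior; the first three groups are, term by term, the connection cost of SOLS, while a shared subpath $p(j,i_{f_1})$ is traversed $j$ times in SOLM but paid for once inside SOLS's opening. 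Exactly as in the proof of Theorem~\ref{thm:klevel}, this last discrepancy is absorbed by re-routing the extra copies onto a common stem and invoking the triangle inequality together with the parameterization-by-$j$ min-selection of Step~2, costing at most $2\beta C^{\star}$ beyond the $\beta C^{\star}$ already present; so the cost of SOLM is at most $\alpha O^{\star}+3\beta C^{\star}$, which is the claimed $(\alpha,3\beta)$-approximation.

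The hard part will be the accounting in the unfolding step -- arguing that chaining Step~1 (disjoint subpaths) and Step~2 (shared stem) keeps the connection blow-up at $3$ rather than compounding to $9$. The key point I would have to make precise is that Step~1 parameterizes by the number of requests at a $(b+1)$-enabled node while Step~2 parameterizes by the number of $(b+1)$-enabled children of an $f_1$-root, so the two min-selections govern disjoint cost components and the Ageev-type factor~$3$ is incurred only once, in aggregate. Alongside this, I would need to verify that the aggregation identities in Step~2 are exact, so that the sums really collapse to $O^{\star}$ and $C^{\star}$ and nothing is quietly lost at the branching node $f_b$, and that every auxiliary link invoked in the unfolding exists -- which is precisely the role of Assumption~\ref{aspt:2conn}/Proposition~\ref{prop:2connNFNet}. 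Once those are nailed down, the remaining bookkeeping parallels Theorems~\ref{thm:klevel} and~\ref{thm:sfcChainApp} essentially verbatim.
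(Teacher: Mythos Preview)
Your overall strategy matches the paper's: build the $1$-level instance via the two-step parameterized path reduction, run the $(\alpha,\beta)$-approximation, unfold, and pay a factor~$3$ only once through a triangle-inequality rerouting argument tied to the two separate parameterizations. However, you have the \emph{direction} of the factor-$3$ blow-up inverted relative to the paper.

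The paper's chain is
\[
O(\varphi_{\text{sfc}})+C(\varphi_{\text{sfc}})\;\le\;F(\varphi_{\text{1fl}})+C(\varphi_{\text{1fl}})\;\le\;\alpha F(\psi_{\text{1fl}})+\beta C(\psi_{\text{1fl}})\;\le\;\alpha O(\psi_{\text{sfc}})+3\beta C(\psi_{\text{sfc}}).
\]
The first inequality is Lemma~\ref{lm:forkCost}: unfolding a $1$-FL solution to an NFP-SFork solution is cost \emph{non-increasing}, because in the $1$-FL instance each demand's connection cost already contains its disjoint-subpath deployment $F(p(t,i_{f_{b+1}}))$ and all traversal costs, so nothing is undercounted when you unfold. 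The factor~$3$ lives entirely in the last inequality, Lemma~\ref{lm:fork13app}: starting from the NFP-SFork optimum (first normalized to forest form via Lemma~\ref{lm:forkForest}), one reroutes every $d\in D(\mu)$ onto the single cheapest disjoint subpath $p(\mu)$ and every $(b{+}1)$-node in $I(\nu)$ onto the single cheapest shared subpath $p(\nu)$; the triangle inequality (Fig.~\ref{fig:tEq}) bounds each rerouted subpath by three times the original, and the parameterized argmins of Algorithm~\ref{alg:twoStepPathReduction} then dominate these rerouted costs, yielding a feasible $1$-FL solution with $(\alpha,\beta)$-weighted cost at most $\alpha O(\psi_{\text{sfc}})+3\beta C(\psi_{\text{sfc}})$.

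Your ``telescope to $\alpha O^{\star}+\beta C^{\star}$'' step is where this breaks. In the $1$-FL reduction the disjoint-subpath deployment $O(p(t,i_{f_{b+1}}))$ sits inside the client \emph{connection} cost, so it enters $C_S$ with weight~$\beta$ and multiplicity~$t$; but in $\alpha O^{\star}+\beta C^{\star}$ that same deployment belongs to $O^{\star}$, counted once with weight~$\alpha$. Your Step-1 inequality $t\beta[O(p(t))+C(p(t))]\le t\beta[O(p^{\star})+C(p^{\star})]$ is correct, but its right-hand side sums to $\beta\sum_{\mu}|D(\mu)|\,O(p^{\star}_{\mu})$, not $\alpha\sum_{\mu}O(p^{\star}_{\mu})$, so the two families do not collapse to $\alpha O^{\star}+\beta C^{\star}$. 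Correspondingly, the ``discrepancy'' you place in the unfolding step does not exist (that is exactly what Lemma~\ref{lm:forkCost} rules out), and the extra $2\beta C^{\star}$ you add there cannot be charged against $C^{\star}$ since the paths in SOLS have nothing to do with the paths in~$\mathcal{O}$. The rerouting/triangle-inequality argument you invoke is the right tool; it just belongs in the other direction, comparing the NFP-SFork optimum to the restricted path families of the $1$-FL instance, which is precisely Lemma~\ref{lm:fork13app}.
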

If the following inequalities hold individually, it sequentially leads to Theorem~\ref{thm:forkAPP}, where $\varphi_{\text{1fl}}$ is a feasible solution of the 1-level facility location problem, and $\varphi_{\text{sfc}}$ is a feasible solution constructed based on $\varphi_{\text{1fl}}$ for NFP-SFork.
\begin{align}
O(\varphi_{\text{sfc}})+C(\varphi_{\text{sfc}})&\leq F(\varphi_{\text{1fl}})+C(\varphi_{\text{1fl}}) \label{forkproof:1app}\\
                           &\leq  \alpha F(\psi_{\text{1fl}})+ \beta C(\psi_{\text{1fl}})\label{forkproof:1fl}\\
                          &\leq  \alpha O(\psi_{\text{sfc}})+ 3\beta C(\psi_{\text{sfc}})\label{forkproof:forkCost}
\end{align}
Based on the assumption that the 1-level facility location problem has a $(\alpha, \beta)$-approximation algorithm, inequality~(\ref{forkproof:1fl}) holds. The proofs of inequalities~(\ref{forkproof:1app}) and~(\ref{forkproof:forkCost}) are given in Lemma~\ref{lm:forkCost} and~\ref{lm:fork13app}, respectively in Appendix~\ref{sec:bifactSFCproof}, where a supporting conclusion presented in Lemma~\ref{lm:forkForest} of Appendix~\ref{sec:bifactSFCproof} shows that a forest structure exists in the joint SFC paths for requests of an SFC $\lambda\in \Lambda$.

Next, we present the bi-factor algorithm for NFP-SFork as follows.
\begin{algorithm}
\caption{Bi-factor approximation algorithm for NFP-SFork}\label{alg:greedyTwoStepPath}
\begin{algorithmic}
\STATE Greedy Algorithm~\cite{ageev2004improved}
\STATE Step 1: Given a single level facility location problem, we scale the facility open cost up with a ratio $\delta$ with $\delta \geq 1$
\STATE Step 1.1: Initially, set $B_j=0$ for all clients. Assign budget $B_j$ to all clients $j$, and client $j$ offers $\max\{B_{j}-c_{ij},0\}$ to facility $i$ if $j$ is not connected, otherwise, $\max_{i'\neq i}\{c_{i'j}-c_{ij},0\}$ if the client $j$ connects to a facility $i'$.
\STATE Step 1.2: If unconnected client set $I_{U}\neq \emptyset$, increase $B_j$ at the same rate; if the total offered costs to unopened facility is equal to open costs, i.e., $\sum_{j\in I_{U}}\max\{B_{j}-c_{ij},0\}=f_i$, open facility $i$; and if the connection costs of unconnected client $j$ equals its connection cost to an opened facility $i'$ $\max_{i'\neq i}\{c_{i'j}-c_{ij},0\}=c_{i'j}$, connect client $j$ to facility $i$.
\STATE Step 2: Scale down the open costs of facilities to their original costs at the same rate; if opening a facility does not increase the total cost, the facility is open and assign clients to its closest open facility.
\STATE Algorithm~\ref{alg:twoStepPathReduction}: two-step path reduction 
\end{algorithmic}
\end{algorithm}
A greedy algorithm presented in~\cite{ageev2004improved} is a bi-factor $\gamma_{f}(\delta), \gamma_{c}(\delta)$ approximation algorithm for the single-level facility location problem, where $\gamma_{f}(\delta)=\gamma_{f}+\ln(\delta)$ and $\gamma_{c}(\delta)=1+\frac{\gamma_{c}-1}{\delta}$ with $\gamma_{f}=1.11$ and $\gamma_{c}=1.78$. %We recapped the greedy algorithm in Appendix~\ref{sec:greedy}.
Combining the greedy algorithm in~\cite{ageev2004improved} and the proposed Algorithm~\ref{alg:twoStepPathReduction}, we obtain a bi-factor approximation algorithm (Algorithm~\ref{alg:greedyTwoStepPath}) for NFP-SFork. % (see in Section~\ref{subsubsec:bifactAlg} for details).
\begin{theorem}~\label{thm:327appAlg}
Algorithm~\ref{alg:greedyTwoStepPath} is a 3.27 approximation algorithm for NFP-SFork problem regardless of forward graph's network structure.
\end{theorem}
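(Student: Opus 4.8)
The plan is to obtain the constant $3.27$ by feeding the bi-factor guarantee of the greedy single-level facility location algorithm of~\cite{ageev2004improved} into Theorem~\ref{thm:forkAPP} and then balancing the two resulting factors through the scaling parameter $\delta$. First I would instantiate the black box: running the greedy procedure (Steps~1--2 of Algorithm~\ref{alg:greedyTwoStepPath}) on the 1-level facility location instance produced by the two-step path reduction (Algorithm~\ref{alg:twoStepPathReduction}) is a $(\gamma_f(\delta),\gamma_c(\delta))$-approximation with $\gamma_f(\delta)=\gamma_f+\ln\delta=1.11+\ln\delta$ and $\gamma_c(\delta)=1+\tfrac{\gamma_c-1}{\delta}=1+\tfrac{0.78}{\delta}$ for every $\delta\ge 1$. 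By Theorem~\ref{thm:forkAPP} --- equivalently, by chaining inequalities~(\ref{forkproof:1app})--(\ref{forkproof:forkCost}) with $(\alpha,\beta)=(\gamma_f(\delta),\gamma_c(\delta))$ --- the composed algorithm returns an NFP-SFork solution $\varphi_{\text{sfc}}$ satisfying
\[
O(\varphi_{\text{sfc}})+C(\varphi_{\text{sfc}}) \;\le\; \gamma_f(\delta)\,O(\psi_{\text{sfc}}) + 3\,\gamma_c(\delta)\,C(\psi_{\text{sfc}}),
\]
where $\psi_{\text{sfc}}$ denotes an optimal NFP-SFork solution.

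Next I would collapse this bi-factor bound to a single ratio. Since the NFP-SFork objective is the sum of the deployment cost $O$ and the connection cost $C$, and $O(\psi_{\text{sfc}})+C(\psi_{\text{sfc}})$ is the optimum, the right-hand side above is at most $\max\{\gamma_f(\delta),\,3\gamma_c(\delta)\}$ times the optimum for every admissible $\delta$. Hence Algorithm~\ref{alg:greedyTwoStepPath} is a $\min_{\delta\ge 1}\max\{\gamma_f(\delta),\,3\gamma_c(\delta)\}$-approximation, and it remains to perform this one-variable minimization. Because $\gamma_f(\delta)=1.11+\ln\delta$ is strictly increasing and $3\gamma_c(\delta)=3+\tfrac{2.34}{\delta}$ is strictly decreasing, the optimal $\delta$ is the unique solution of $1.11+\ln\delta = 3+\tfrac{2.34}{\delta}$, i.e.\ of $\ln\delta-\tfrac{2.34}{\delta}=1.89$; solving numerically gives $\delta\approx 8.7$, at which both factors equal $1.11+\ln 8.7\approx 3.27$. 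This yields the claimed $3.27$-approximation.

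For the clause ``regardless of the forwarding graph's network structure'' I would check that no step in the pipeline depends on where the branching NF $f_b$ lies or on how many branches the SFC-Fork has. Algorithm~\ref{alg:twoStepPathReduction} reduces an arbitrary SFC-Fork to a 1-level instance by first contracting, per chain $\lambda\in\Lambda$, the disjoint service subpaths from $f_{b+1}(\lambda)$ through $\gamma(\lambda)$, and then contracting the shared subpath $(f_1,\dots,f_b)$; the forest property of the resulting joint SFC paths (Lemma~\ref{lm:forkForest}) and the cost accounting of Lemmas~\ref{lm:forkCost} and~\ref{lm:fork13app} all hold for any fork shape. Consequently Theorem~\ref{thm:forkAPP}, and with it the $3.27$ bound, is uniform over SFC-Fork instances.

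The step I expect to be the main obstacle is not the $\delta$-minimization (routine calculus) but ensuring the bi-factor bound survives the reduction without double-charging shared resources: the NFP-SFork deployment cost splits into the part $O^{1}$ that becomes facility-opening cost in the 1-level instance and the part $O^{2}$ that is absorbed into its connection cost, so one must verify that the factor $3\gamma_c(\delta)$ multiplies exactly the aggregate connection cost $C(\psi_{\text{sfc}})$ --- including $O^{2}$ --- and that an NF-enabled node lying on both a shared and a disjoint service subpath is charged once. Lemmas~\ref{lm:forkCost}, \ref{lm:fork13app}, and~\ref{lm:forkForest} supply exactly these checks, so once Theorem~\ref{thm:forkAPP} is in hand the remaining work is the short balancing computation above.
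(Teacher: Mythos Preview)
Your proposal is correct and follows essentially the same line as the paper: invoke Theorem~\ref{thm:forkAPP} on the $(\gamma_f(\delta),\gamma_c(\delta))$ guarantee of the greedy 1-level algorithm to obtain a $(\gamma_f(\delta),3\gamma_c(\delta))$-approximation for NFP-SFork, then balance the two factors by choosing $\delta$ (the paper sets $\delta=8.67$, you compute $\delta\approx 8.7$, and both give $3.27$). Your extra paragraphs on collapsing the bi-factor bound, on uniformity over fork shapes, and on avoiding double-charging via Lemmas~\ref{lm:forkCost}--\ref{lm:fork13app} are more explicit than the paper's two-sentence derivation but add no genuinely new ingredients.
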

Based on Theorem~\ref{thm:forkAPP} and the $(\gamma_{f}(\delta), \gamma_{c}(\delta))$-approximation algorithm for 1FL, we have a $(\gamma_{f}(\delta), 3\gamma_{c}(\delta))$-approximation algorithm for NFP-SFork problem, regardless of the forwarding graph's fork structure with any $\delta \geq 1$. When $\delta = 8.67$, we obtain a feasible solution for NFP-SFork, which is within a factor of 3.27 of the optimal solution of NFP-SFork.

\subsubsection{Extension: Robust NFP-SFork}\label{subsubsec:robustSfork}
To guarantee that there exists backup NF-enabled nodes after any NF-enabled node failure, we engage the robust fault-tolerance algorithm to ensure the availability of backup NF-enabled nodes, where the backup costs is also minimized at the level $\ell$ with $1\leq \ell \leq \gamma(\lambda)$ with $\lambda\in \Lambda$.

\cite{chechik2014robust} concluded the existence of $(1.5+7.5\alpha)$-approximation for the robust fault-tolerant facility location problem with $\alpha$ failed nodes. Hence, applying two-step parameterized path reduction to create SFC service path and backup paths after NF-enabled node failure into the approximation algorithm for robust fault-tolerant facility location problem, approximation algorithm exists for robust NFP problem with $\alpha$ NF-enabled node failure.

\subsection{Formulations for \zblue{Robust NF Provisioning with} NF Request}
We now present the mathematical formulations for the maximal reliable NF deployment problem based on the NF service failure probability. %, which enables the linearization for the non-linear objective function and corresponding MILP reformulation to solve the problem.
We first turn the non-linear objective $\min_{V^{F}_{P}}\max_{d_{st}\in \mathcal{D}}\min_{\eta_{st}\in \Pe_{st}}\Pi_{i\in V^{f}_{P}\cap \eta_{st}}\rho_i$ into its linearized counterpart
\begin{align}
\min_{V^{F}_{P}}\max_{\substack{f\in F_{st} \\ d_{st}\in \mathcal{D}}}\min_{\eta_{st}\in \Pe_{st}}\sum_{i\in V^{f}_{P}\cap \eta_{st}}\ln(\zblue{1+\rho_i})\label{fm:flProbObj}
\end{align}
by applying the $\ln(\cdot)$ function.

With Theorem~\ref{thm:evaluation}, the formulation presented below is the robust NF evaluation metric value of NF request with (\ref{fm:flProbObj}) as the objective:
\begin{align}
\min_{\lambda, x, y, z, \xi,h}\; &\lambda \nonumber \\
s.t.\; & \sum_{i\in V_P} h_i\leq N_f, f\in F \label{fm:nodeNumCon}\\
     & \lambda \geq \xi^{f}_{st}, \quad\quad\quad\quad\quad\quad\;\;\; f\in F, d_{st}\in \mathcal{D} \label{fm:lowProb}\\
     & \xi^{f}_{st} = \sum_{i\in V_P} ln(\zblue{1+\rho_{i}})y^{if}_{st},\quad\; f\in F, d_{st}\in \mathcal{D} \label{fm:cumProb}\\
     & y^{if}_{st} \geq z^{f}_{i} + \delta^{i}_{\eta_{st}}x_{\eta_{st}}+\gamma^{f}_{st} -2,\nonumber\\
     & \quad\quad\quad\;\; f\in F, d_{st}\in \mathcal{D}, \eta_{st}\in \Pe_{st}, i \in V_P \label{fm:funServ}\\
     & y^{if}_{st}\leq z^{f}_{i},\quad\quad\quad\quad f\in F, i\in V_P\label{fm:yUp1}\\
     & y^{if}_{st}\leq \delta^{i}_{\eta_{st}}x_{\eta_{st}}, \; f\in F, i\in V_P, d_{st}\in \mathcal{D},\nonumber\\
     &\qquad\qquad\qquad\qquad\qquad\qquad \eta_{st}\in \Pe_{st}\label{fm:yUp2}\\
     & y^{if}_{st}\leq \gamma^{f}_{st},\quad\quad\quad\;\; f\in F, i\in V_P, d_{st}\in \mathcal{D} \label{fm:yUp3}\\
     & h_{i} \geq z^{f}_{i}, \quad\quad\quad\quad\quad\quad\quad\quad f\in F, i\in V_P \label{fm:phyNdUsg}\\
     & \sum_{\eta_{st}\in \Pe_{st}} x_{\eta_{st}} =1, \quad\quad\quad\quad\quad\;\;  d_{st}\in \mathcal{D} \label{fm:pthSel}\\
     & \lambda,\xi^{f}_{st} \geq 0, z^{f}_{i}, y^{if}_{st}, h_{i}, x_{\eta_{st}} \in \{0,1\}, \eta_{st}\in \Pe_{st},\nonumber\\
     & \qquad (s,t)\in E_L, f\in F, d_{st}\in \mathcal{D}, i\in V_P \label{fm:region}
\end{align}
Constraint (\ref{fm:nodeNumCon}) enforces the upper bound for the number of nodes deployed with NFs. Constraint (\ref{fm:lowProb}) records the value of NF failure evaluation metric (linearized) among all demands for all NFs. Constraint (\ref{fm:cumProb}) captures %provides NF failure probability calibration for NF request $d_{st}$ of NF $f$ with $d_{st}\in \mathcal{D}$ and $f\in F$, if and only if when an NF deployed in the request route providing NF for the request,
the robust NF failure evaluation metric value (linearized, i.e., $\ln(\rho_i)$ as in constraint (\ref{fm:flProbObj})) of demand $d_{st}\in \mathcal{D}$ and $f\in F$. %, otherwise, the failure probability remains.
Based on Definition~\ref{def:nonChainPrb}, constraint (\ref{fm:funServ}) determines whether $f$ is deployed onto physical node $i$ for demand $d_{st}\in \mathcal{D}$, where (i) $z^{f}_{i}=1$ when $f$ is deployed onto physical node $i$; (ii) $\delta^{i}_{\eta_{st}}=1$ when node $i$ deployed with an NF is on a selected path $\eta_{st}$ for $d_{st}$; and (iii) $\gamma^{f}_{st}=1$ when $d_{st}$ requires NF $f$.
Constraints (\ref{fm:yUp1}) -- (\ref{fm:yUp3}) force variable $y^{if}_{st}$ to be 0 when any of the (i) to (iii) above is not satisfied.
Constraint (\ref{fm:phyNdUsg}) indicates whether physical node $i$ is deployed with any NFs. Constraint (\ref{fm:pthSel}) selects a single physical route for demand $d_{st}\in \mathcal{D}$. Constraint (\ref{fm:region}) provides feasible regions for all variables.

Note here that the variable $\lambda$ in constraint~(\ref{fm:lowProb}) records the value of the robust NF failure evaluation metric achieved by NF request through $\xi^{f}_{st}$. As the objective of the reformulation is to find the minimum $\lambda$, it also encourages evaluation metric value $\xi^{f}_{st}$ to be minimized. Therefore, the above reformulation solves the maximal reliable NF deployment problem.

We next present the formulation for SFC service reliability.

\subsection{Formulations for SFC Service Reliability}\label{subsubsec:sfcF}
Different from the non-chained NF failure probability, the \textbf{SFC failure probability} is $$1-\max_{\Gamma(F)}\min_{\substack{f\in F_{st} \\ d_{st}\in \mathcal{D}}}\max_{\eta_{st}\in \Pe_{st}}\left[1-\Pi_{i\in \Gamma(f)\cap p_{st}}\rho_i\right]/|F_{st}|!$$ with $d_{st}\in \mathcal{D}$.
\begin{proposition}\label{prop:sfcReq}
For requests with SFC, we have
$\max_{\Gamma(F)}\min_{\substack{f\in F_{st} \\ d_{st}\in \mathcal{D}}}\max_{p_{st}\in \Pe_{st}}\left[1-\Pi_{i\in \Gamma(f)\cap p_{st}}\rho_i\right]/|F^{*}_{st}|!$\\
$= 1-  \min_{\Gamma(F)}\max_{\substack{f\in F_{st} \\ d_{st}\in \mathcal{D}}}\min_{p_{st}\in \Pe_{st}}\Pi_{i\in \Gamma(f)\cap p_{st}}\rho_i /|F^{*}_{st}|!$
, where $F^{*}_{st}$ represents the requested NFs of $d^{*}_{st}=\arg\min_{d_{st}\in \mathcal{D}, f\in F_{st}}\left[\Pi_{i\in \Gamma(f)\cap p_{st}}\rho_i\right]$.
\end{proposition}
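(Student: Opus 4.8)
The plan is to establish Proposition~\ref{prop:sfcReq} essentially as a direct corollary of Proposition~\ref{prop:reqFailProb} (the identity $1 - \mathcal{RP}(V^F_P) = \mathcal{FP}(V^F_P)$) together with the basic observation that $\max$ and $\min$ swap under the map $x \mapsto 1-x$. First I would fix the demand $d^*_{st}$ and its NF set $F^*_{st}$ attaining the outer minimization on the failure side, so that the division by the constant $|F^*_{st}|!$ can be pulled outside all the remaining optimizations; this is the reason the statement is phrased with $F^*_{st}$ rather than a running $F_{st}$, and clarifying that point is the first thing to nail down.

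Next I would argue the algebraic core. For any collection of quantities $a_{f,d_{st},p_{st}} = \Pi_{i\in\Gamma(f)\cap p_{st}}\rho_i \in [0,1]$, we have the elementary identities $\max_x (1-x) = 1 - \min_x x$ and $\min_x(1-x) = 1-\max_x x$. Applying these successively from the innermost optimization outward: the inner $\max_{p_{st}}[1 - a]$ becomes $1 - \min_{p_{st}} a$; then $\min_{f,d_{st}}[1 - \min_{p_{st}} a]$ becomes $1 - \max_{f,d_{st}}\min_{p_{st}} a$; and finally $\max_{\Gamma(F)}[1 - \max_{f,d_{st}}\min_{p_{st}} a]$ becomes $1 - \min_{\Gamma(F)}\max_{f,d_{st}}\min_{p_{st}} a$. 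Dividing through by $|F^*_{st}|!$ as a fixed positive constant (justified by the first step) yields exactly the claimed equality. Essentially this is the SFC analogue of the chain of equalities already used to deduce Theorem~\ref{thm:evaluation} from Proposition~\ref{prop:reqFailProb}, so I would explicitly invoke Proposition~\ref{prop:reqFailProb} and Definition~\ref{def:sfcPrb} to shortcut the bookkeeping.

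The main obstacle I anticipate is not the algebra but the \emph{well-definedness of the constant factor}: strictly speaking, $|F_{st}|!$ depends on which demand $d_{st}$ is being considered inside the nested $\min_{f\in F_{st}, d_{st}\in\mathcal{D}}$, so one cannot naively factor it out of the optimization. The resolution — and the step that needs the most care — is to observe that on both sides the division by the factorial only ever affects the value at the \emph{optimal} demand, since scaling the objective at each demand by its own (fixed, demand-dependent) constant commutes with taking the outer $\min$/$\max$ as long as one tracks which demand is selected; then $F^*_{st}$ on the left and $F^*_{st}$ on the right refer to the same argmin-demand because the $1-(\cdot)$ transformation preserves the argmin/argmax structure. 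I would make this precise by first establishing the identity \emph{without} the factorial (pure $\max$/$\min$ swap on the products $\Pi\rho_i$), identifying $d^*_{st}$, and only then reintroducing the $1/|F^*_{st}|!$ scaling on both sides simultaneously. A short remark that $\Pi_{i\in\Gamma(f)\cap p_{st}}\rho_i\in[0,1]$ (so all quantities lie in $[0,1]$ and the transformation $x\mapsto 1-x$ is an order-reversing bijection of the relevant range) closes the argument.
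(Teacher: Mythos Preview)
Your approach is correct and matches the paper's (implicit) reasoning: the paper does not supply a proof for Proposition~\ref{prop:sfcReq} at all, treating it as the SFC analogue of Proposition~\ref{prop:reqFailProb}/Theorem~\ref{thm:evaluation}, i.e., an immediate consequence of the order-reversing map $x\mapsto 1-x$ swapping $\max$ and $\min$. Your additional care in isolating $d^{*}_{st}$ so that $|F^{*}_{st}|!$ becomes a fixed constant before factoring it through the nested optimizations is a point the paper glosses over entirely, so on that front your write-up is actually more complete than the source.
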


We introduce here an auxiliary variable $\omega_{st}$ which indicates whether $d_{st}\in \mathcal{D}$ is selected as the $d^{*}_{st}$.
By replacing routings from undirected to directed path set (i.e., $\eta_{st}\rightarrow p_{st}$) in constraints (\ref{fm:funServ}), (\ref{fm:yUp2}), (\ref{fm:pthSel}),  (\ref{fm:region}), we present the formulation for the robust SFC provisioning as follows.
\begin{align}
\min_{\lambda,\xi,\omega, \beta,y,x,z} & \lambda\nonumber\\
s.t.\; & \lambda \geq \omega_{st}, \quad\quad\quad\quad\quad\quad\quad\quad\quad\;\; d_{st}\in \mathcal{D} \label{fm:lowProbSFC}\\
     &\omega_{st} \geq \xi^{f}_{st} - \ln|F_{st}|!, \quad\quad f\in F, d_{st}\in \mathcal{D} \label{fm:stLowProbSFC}\\
     &\sum_{d_{st}\in \mathcal{D}}\beta_{st}=1\label{fm:selDmd}\\
     &\lambda \leq \omega_{st} + M(1-\beta_{st}),\quad\quad\quad\;\; d_{st}\in \mathcal{D} \label{fm:selDmdUpbd}\\
     &\lambda \geq \omega_{st} + M(\beta_{st} -1),\quad\quad\quad\;\; d_{st}\in \mathcal{D} \label{fm:selDmdLwbd}\\
     &\omega_{st}\geq 0, \beta_{st}\in \{0,1\}, d_{st}\in \mathcal{D}\label{fm:region2}\\
     &\text{Constraints (\ref{fm:nodeNumCon}) and (\ref{fm:cumProb})--(\ref{fm:region})}\nonumber
\end{align}

Constraint (\ref{fm:lowProbSFC}) is to guarantee the lower bound based on the $\mathcal{FP}$ (linearized). The newly introduced constraint (\ref{fm:stLowProbSFC}) is used to capture the corresponding SFC request $d_{st}\in \mathcal{D}$.
Constraint (\ref{fm:selDmd}) guarantees that exactly one demand $d_{st}\in \mathcal{D}$ should be selected as the $d^{*}_{st}$ which provides the $\mathcal{FP}(d_{st})$. Constraints (\ref{fm:selDmdUpbd}) and (\ref{fm:selDmdLwbd}) guarantee $\lambda = \omega_{st}$ for the selected $d^{*}_{st}$ (when $\beta_{st}=1$). %Combined with constraints (\ref{fm:selDmdUpbd}) and (\ref{fm:selDmdLwbd}), the minimal SFC requests among all demands is the same as a 

\section{Simulation Results}\label{sec:computation}
We design our experiments for robust NF provisioning problems in two parts, (1) provisioning with non-chained NFs, and (2) provisioning with SFCs. 

\subsection{Experiment Design}\label{subsec:expDesign}
\subsubsection{Design for Robust NF Provisioning with Non-chained NFs}\label{subsubsec:nonChained}
We select NSF network as the physical network illustrated in Fig.~\ref{fig:nsf}, which has 14 nodes and 21 links.
\begin{figure}[b]
\centering
\includegraphics[scale=0.32]{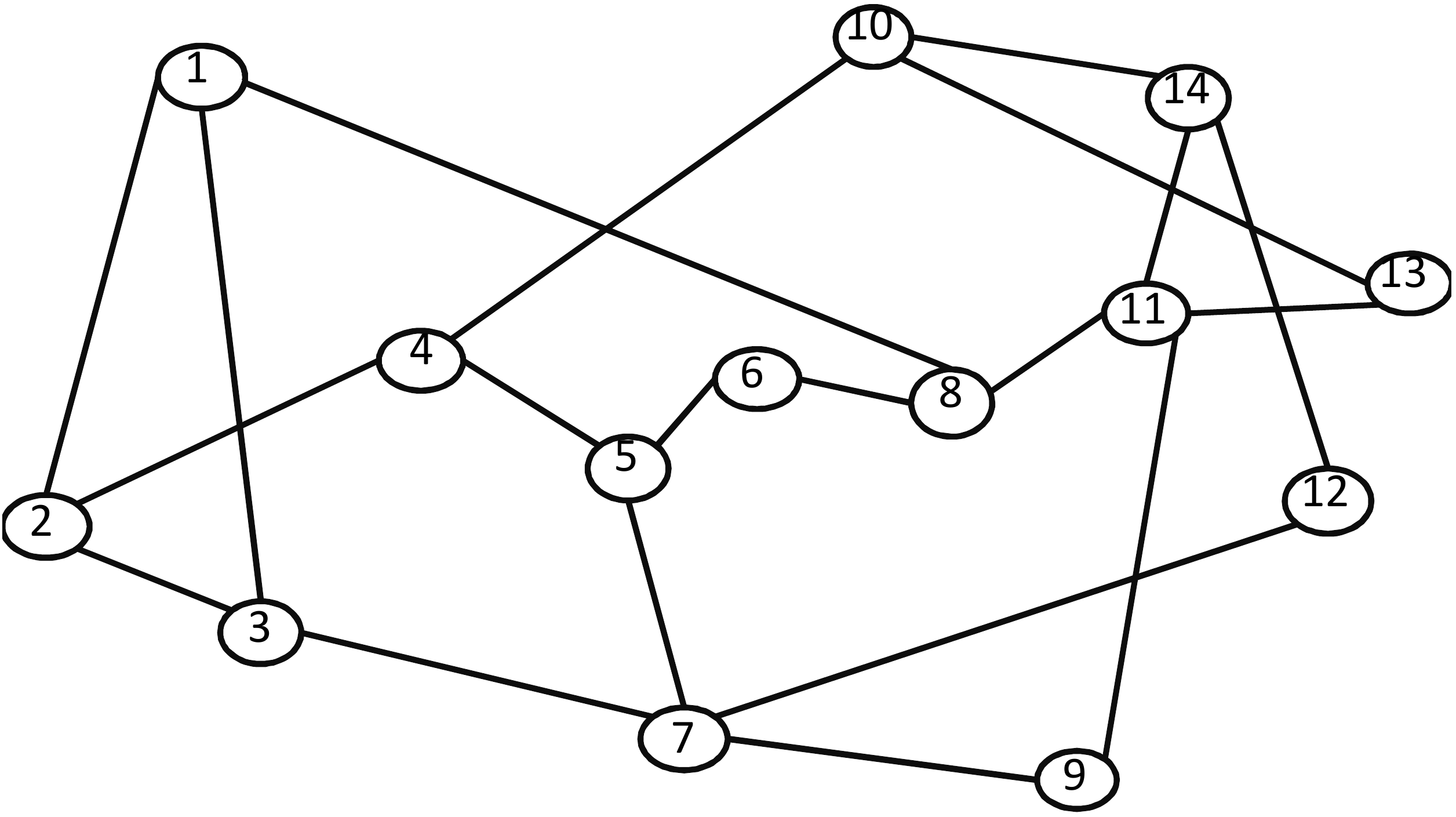}
\caption{NSF}
\label{fig:nsf}
\end{figure}
NF requests are based on node pairs whose mappings onto physical nodes are known a priori. Six pairs of NF requests are constructed and listed as follows: (1,2), (1,4), (2,3), (3,5), (4,7), and (6,7).
NF requests for logical arcs/links are randomly assigned with up to three NFs.

We consider that physical nodes are with random failure probabilities,  where the means of these probabilities are in the range of 1\% to 49\% and the variance is 0.001. For each of the failure probabilities, we generate 25 testing samples and report their average as the results.
For the simulations of the maximal reliable NF  deployment problem,  we first create testing cases  which restrict the number of NF-enable nodes to be 40\%, 50\%, and 60\% of the physical nodes.

Based on the settings above, two sets of testing cases are created.
The first testing cases for the maximal NF reliable deployment problem have (i) NSF as the physical network, (ii) demands with up to three randomly assigned NF requests, (iii) a given limitation on the number of NF deployed locations, and (iv) random node failure probability. The proposed setting is to verify that when the number of NF locations decreases, whether the NF service reliability also goes down corresponding. Meanwhile, when the node failure probability increases, whether the NF service reliability also decreases.

The second testing cases have (i) a fixed NF service reliability (90\%), and (ii) random physical node failure probability. The purpose of the setting is again to evaluate that with a fixed NF service reliability, whether extra NF-deployed nodes are required to fulfill the requirement of the service level when the node failure probability increases.
increases, whether extra NF-deployed nodes are required to fulfill the requirement of the service level.

\subsubsection{Design for Robust NF Provisioning with SFCs}\label{subsubsec:sfc}
We consider three different forwarding graphs, the single SFC (``1SFC''), rooted fork (``rFork''), and branched fork (``bFork''), and illustrate them in Fig~\ref{fig:frdGrp} to test the proposed robust evaluation metrics and approaches to calibrate their survivable probability.
\begin{figure}[!ht]
    \centering
    \includegraphics[width=6cm]{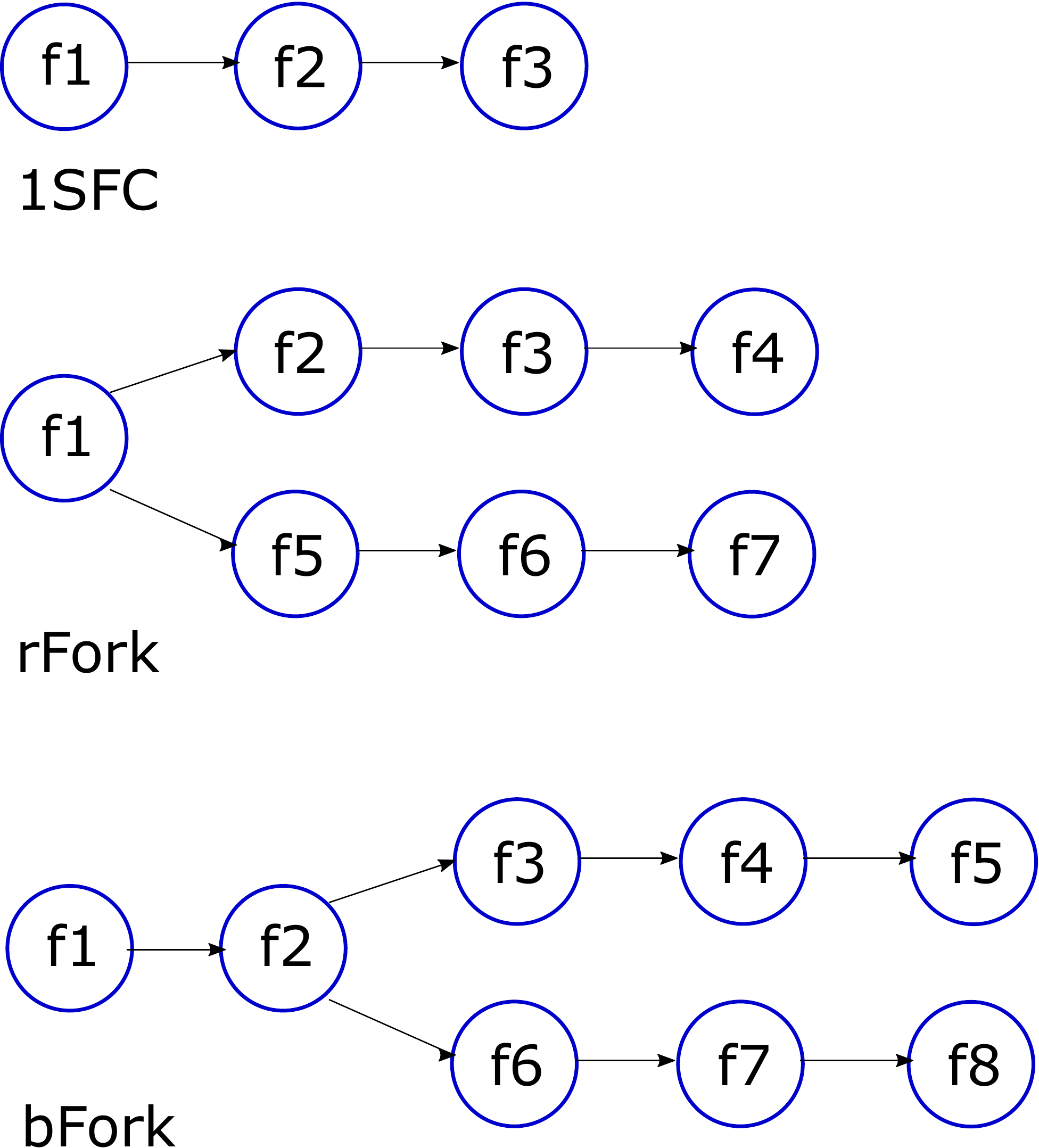}
    \caption{Forwarding graphs}
    \label{fig:frdGrp}
\end{figure}
We take CORONET network, illustrated in Fig.~\ref{fig:coronet}, as the physical network which has 75 nodes, 99 links, and an average nodal degree of 2.6.
\begin{figure}
    \centering
    \includegraphics[scale=0.4]{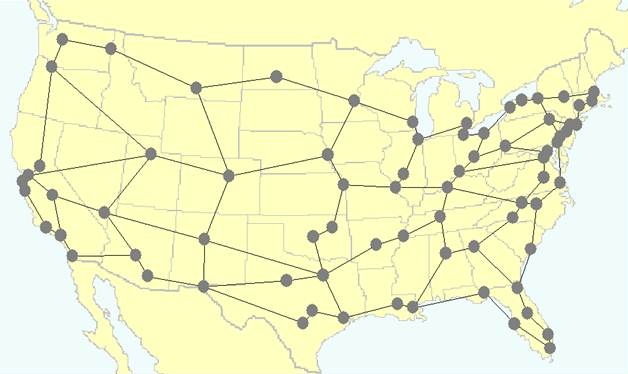}
    \caption{CORONET CONUS Network}
    \label{fig:coronet}
\end{figure}
We generate two sets of 6 demands and 10 demands randomly. For 1SFC as the forwarding graph, all demands require the SFC; for rFork and bFork, we randomly assign half of the demands (3 demands or 5 demands) with a branch of a fork, and the rest of demands require the second branch of forks.
Therefore, we have six testing cases, which are with three forwarding graphs and two demand pair setting correspondingly.

We also consider NF-enabled nodes to have failure probability from 1\% to 50\% with the variance be 0.1\%. The experiment is designed to test with how many NF-enabled nodes, all demand pairs would have positive survivable probability and what the numerical values are. We report the robust survivability probability of all testing cases with different NF-enabled node failure probability.

\subsection{Computational Results}\label{subsec:results}
\subsubsection{Computational Results for Robust NF Provisioning with Non-chained NFs}\label{subsubsec:cr_nonchained}
The simulation results for the maximal NF reliable deployment problem are presented in Fig.~\ref{fig:nfReqRlb}.
\begin{figure}[t]
\centering
\includegraphics[scale=0.42]{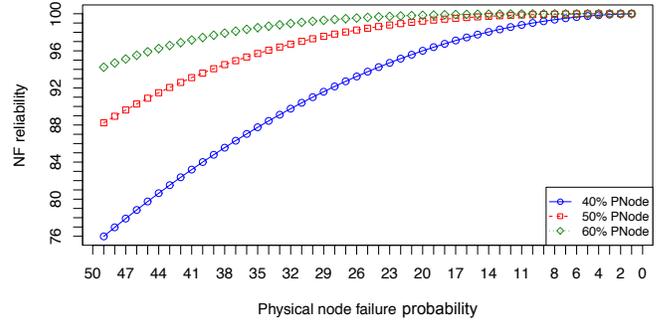}
\caption{NF service reliability}
\label{fig:nfReqRlb}
\end{figure}
The three lines in blue, red, and green colors represent the testing cases with 40\%, 50\%, and 60\% of NF-enabled physical nodes. The $x$-axis represents the physical node failure probability (in mean value) and the $y$-axis denotes the NF service reliability (in percentage). Each plotted node/dot in the figure presents the average NF service reliability for all testing samples. With up to 50\% failure probability of the NF-enabled nodes, the NF reliability reaches 75\%. When the number of NF-enabled nodes increases, the NF reliability increases to 87.5\%, and 93.7\%, respectively. We confirm our analysis that with the limitation on the number of NF-enabled nodes, the NF service reliability increases when physical node failure probability decreases. Also,  given the same physical node failure probabilities, we observe that when the number of NF-enabled nodes (in terms of the mean value) decreases, the reliability of the NF service decreases as well.

Figure~\ref{fig:nfRblnfDp} illustrates the number of NFs deployed to reach the required level of the NF service reliability (based on the maximal number of NF-enabled nodes in the testing cases) with single NF and multiple NFs (in our testing cases, three required NFs) in each demand. To reach the fixed (90\%) NF service reliability, the number of physical nodes deployed with NFs is only doubled when the number of required NFs for each demand goes from one to three even with high failure probability (10 -- 50\%) on physical nodes. The figure demonstrates a clear pattern between the number of nodes deployed with NFs and the NF service reliability.

In the simulation results, we observe that the NF service reliability is higher with more physical nodes deployed with the required NFs, and obviously, a lower average node failure probability leads to a higher NF service reliability under the failure(s) of physical nodes. %The percentage of NF deployed nodes increases twice, when the node failure probabilities reach to 0.32 and 0.47 (by mean value). The percentage of NF deployed nodes remains same for testing cases with physical node failure probability in intervals [0.01, 0.31] and [0.32, 0.46] (by mean value).
\begin{figure}[t]
\centering
\includegraphics[scale=0.42]{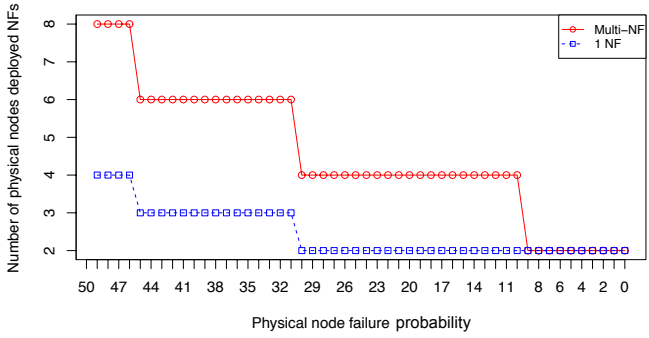}
\caption{NF service reliability vs. NF deployment}
\label{fig:nfRblnfDp}
\end{figure}
The observations on these simulations are as expected and demonstrate the relationship between the number of NF-deployed nodes (cost-related restriction) and NF service reliability (service level).

\subsubsection{Computational Results for Robust NF Provisioning with SFCs}\label{subsubsec:cr_sfc}
Following the experiment design in Section~\ref{subsubsec:sfc}, we test six SFC requests and ten SFC requests cases separately. 

Different from the robust survivable NF provisioning problem with non-chained NFs, the SFC requests need to visit required NFs in order. Therefore, the NF instance deployment on NF-enabled nodes is more restricted. We proceed with our testing in the larger-scale physical network, the CORONET network, to identify (1) how many NF-enabled nodes are needed, and (2) what are the corresponding survivable probability among all SFC requests to guarantee that all SFC requests are fulfilled with three types of SFC forwarding graphs. 

We first report the NF-enabled nodes required to support all demands with different NF forwarding graphs, and have positive survivability, in Figs.~\ref{fig:dm6nfNds} and \ref{fig:dm10nfNds} for six and ten SFC requests.

\begin{figure}[!ht]
    \centering
    \includegraphics[scale=0.4]{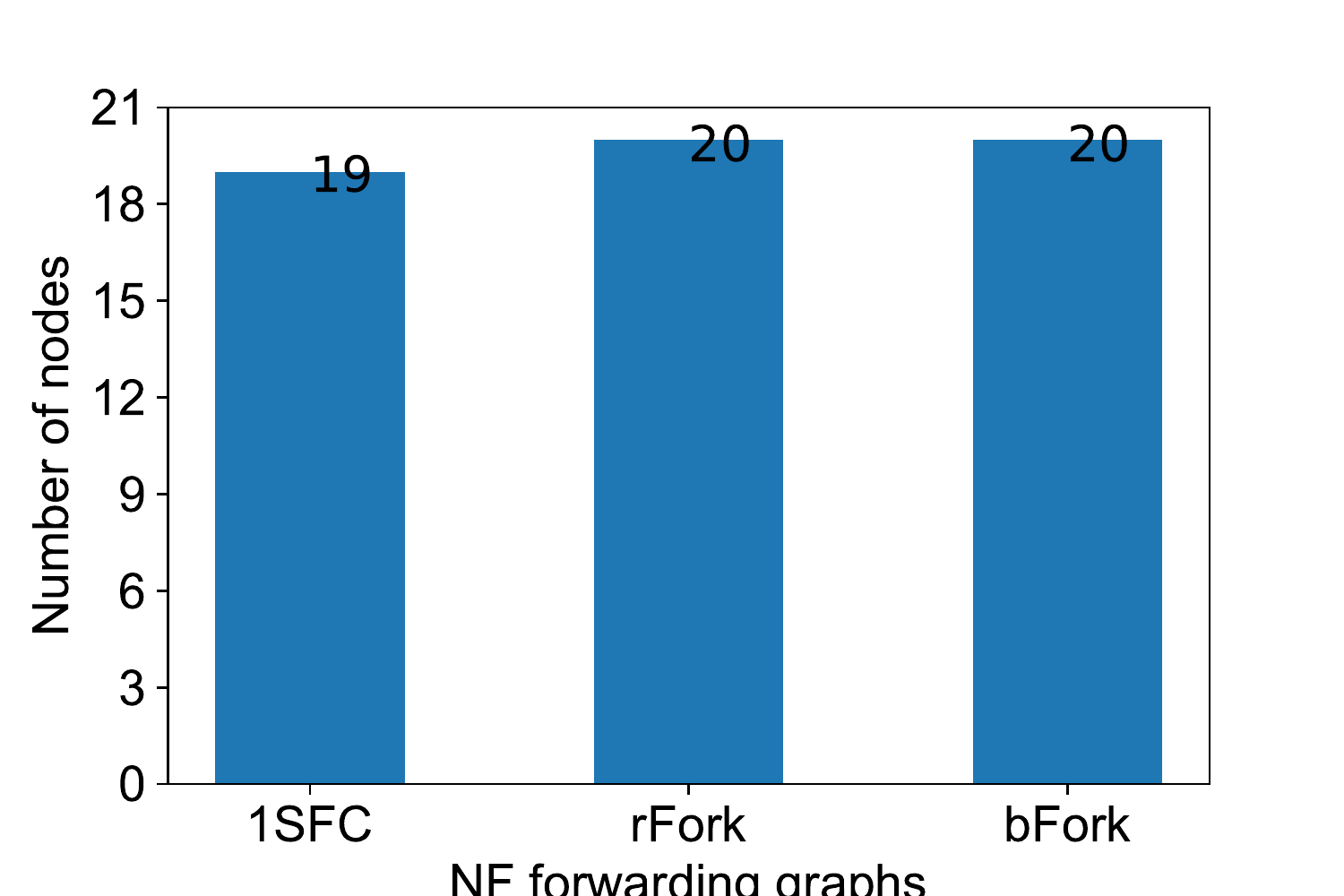}%{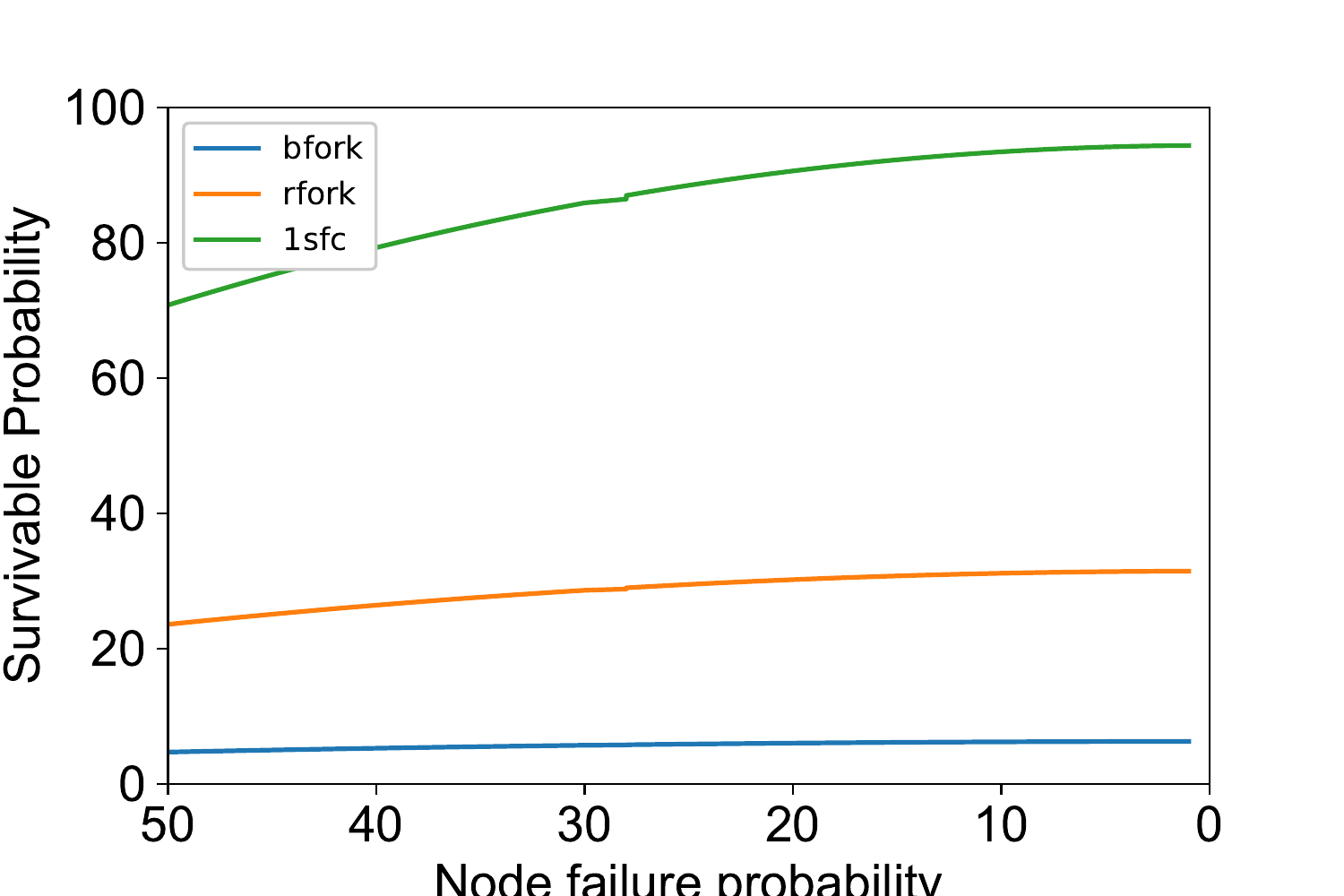}
    \caption{Number of NF-enabled nodes for six demand pairs}
    \label{fig:dm6nfNds}
\end{figure}
\begin{figure}[!ht]
    \centering
    \includegraphics[scale=0.4]{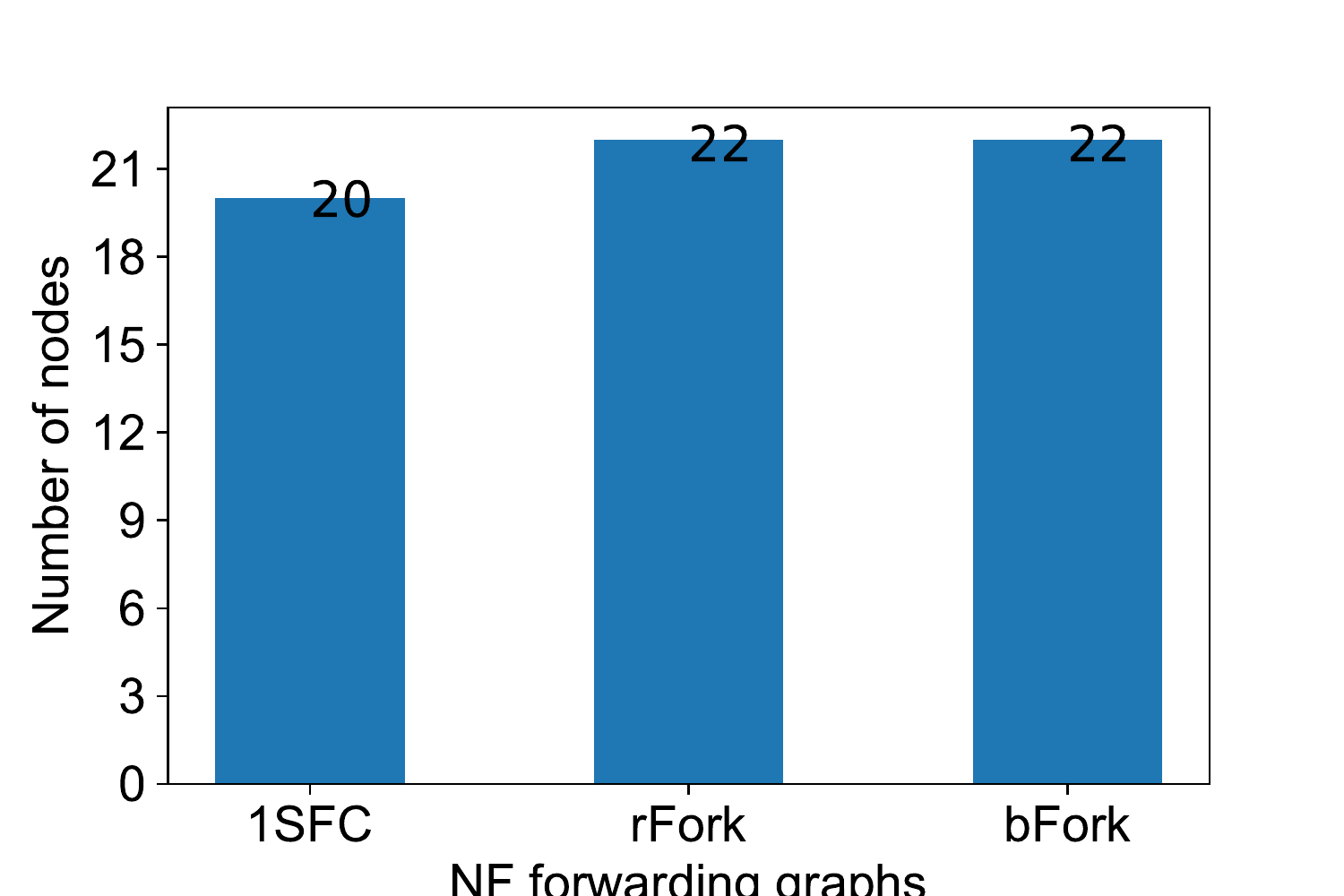}%{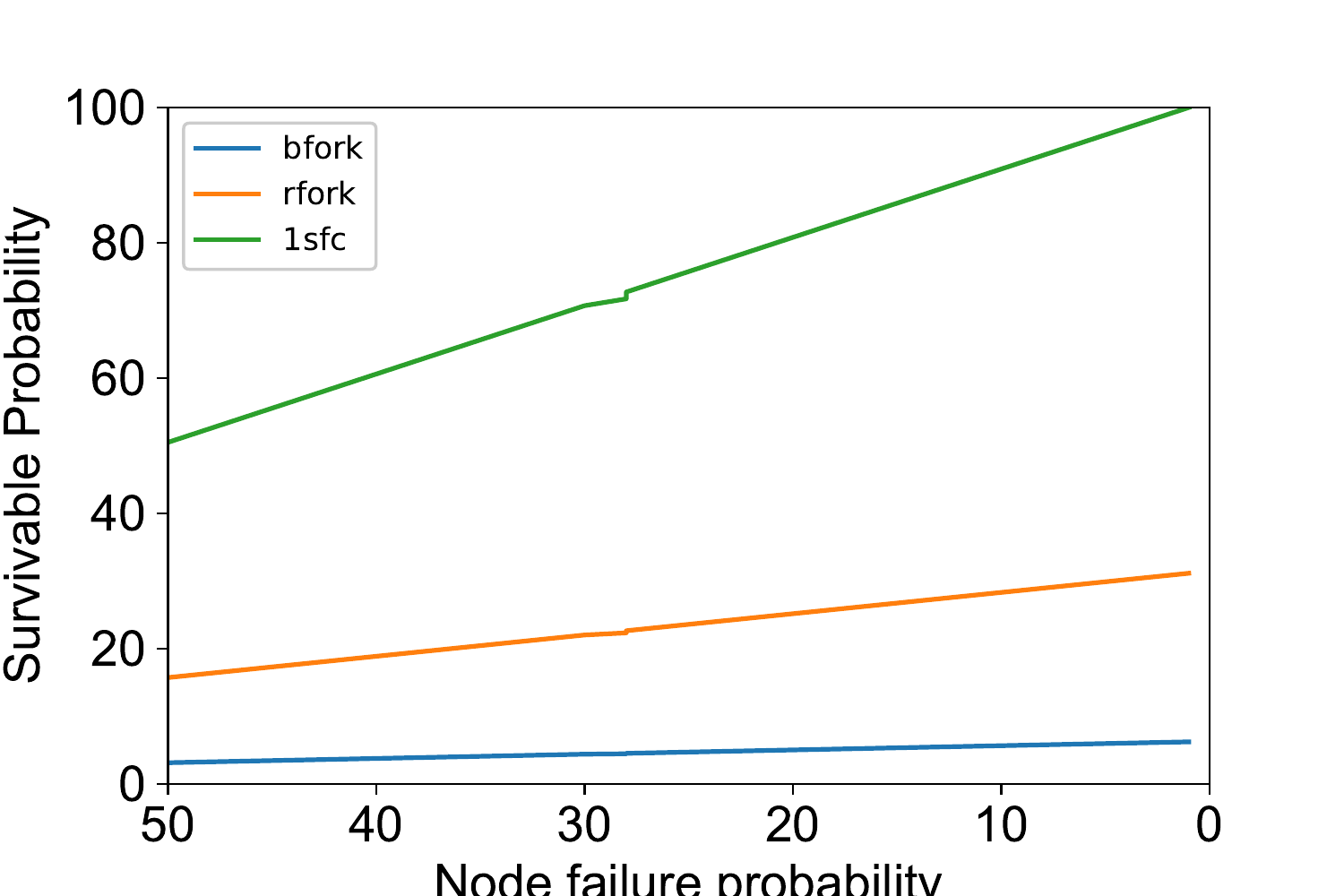}
    \caption{Number of NF-enabled nodes for ten demand pairs}
    \label{fig:dm10nfNds}
\end{figure}

The results meet our expectation that with the same requests, more NF-enabled nodes are required to guarantee that all SFC requests have a positive survivable probability. The more requests are added, the requirement of NF-enabled nodes increased. Note here that we did not observe too much difference on the number of NF-enabled nodes needed with rFork and bFork as the NF forwarding graphs for both six and ten SFC requests. 

Next, we present the robust survivable probability of all testing cases with NF-enabled nodes failure probability from 5\% to 50\% (with a fixed 5\% gap) for six and ten demands in Table\ref{tbl:svProbDm6_Dm10}. We let ``FP'', ``6D-1SFC'', ``6D-rFork'',  ``6D-bFork'', ``10D-1SFC'', ``10D-rFork'', ``10D-bFork'' represent the failure probability, survivable probability of six and ten demands with 1SFC, rFork and bFork as the forwarding graphs, respectively. Computational results show a very clear pattern that the higher failure probability, the lower the survivable probability for all SFC requests. Compared with rFork and bFork, 1SFC as forwarding graph has much higher survivable probability; and with bFork as the forwarding graph, the survivable probability is low. The highest we could reach is around 6.28\%.
\begin{table}[!ht]
\centering
\begin{tabular}{p{0.2cm}p{0.9cm}p{0.9cm}p{0.9cm}p{0.9cm}p{0.9cm}p{0.9cm}}
\hline
  FP &   6D-1SFC &  6D-rFork &  6D-bFork &   10D-1SFC &  10D-rFork &  10D-bFork \\
\midrule
          5 &  94.1613 &  31.3871 &  6.2774 &  95.9597 &   29.8925 &   5.9785 \\
         10 &  93.4533 &  31.1511 &  6.2302 &  90.9091 &   28.3192 &   5.6638 \\
         15 &  92.2734 &  30.7578 &  6.1516 &  85.8586 &   26.7459 &   5.3492 \\
         20 &  90.6216 &  30.2072 &  6.0414 &  80.8081 &   25.1726 &   5.0345 \\
         25 &  88.4976 &  29.4992 &  5.8998 &  75.7575 &   23.5993 &   4.7199 \\
         30 &  85.9017 &  28.6339 &  5.7268 &  70.7073 &   22.0261 &   4.4052 \\
         35 &  82.8336 &  27.6112 &  5.5222 &  65.6568 &   20.4528 &   4.0906 \\
         40 &  79.2939 &  26.4313 &  5.2863 &  60.6062 &   18.8795 &   3.7759 \\
         45 &  75.2820 &  25.0940 &  5.0188 &  55.5557 &   17.3062 &   3.4612 \\
         50 &  70.7979 &  23.5993 &  4.7199 &  50.5051 &   15.7329 &   3.1466 \\
\hline
\end{tabular}
\vspace{2pt}
\caption{Robust survivable probability of SFC requests}
\label{tbl:svProbDm6_Dm10}
\end{table}

To have a finer granularity of the robust failure probability and observe the patterns of changes for robust survivable probability, we plot the robust failure probability in terms of NF-enabled failure probability from 50\% to 1\% (with 1\% gap). With six SFC requests, the survivable probability is curved and convex. The trend line of the survivable probability has a smoother increase when thee failure probability is lower. With ten SFC requests, we observe that the changes to survivable probability are more linear in terms of the failure probability.

\begin{figure}[!ht]
    \centering
    \includegraphics[width=8.5cm]{}%{enabledNd_5Dmd.pdf}
    \caption{Six demand robust survivable probability}
    \label{fig:dm6RbSvPrb}
\end{figure}
\begin{figure}[!ht]
    \centering
    \includegraphics[width=8.5cm]{}%{enabledNd_10Dmd.pdf}
    \caption{Ten demand robust survivable probability}
    \label{fig:dm10RbSvPrb}
\end{figure}

\section{Conclusion}\label{sec:conclusion}
In this study, we address the practical challenges in NFV 5G implementation. We propose a new robust evaluation metric that quantifies the minimal reliability among all NFs for all demands considering the random NF-enabled node failure. We study three sets of problems in the robust NF provisioning, i.e., the SFC $s-t$ path problem, NFP-SFork, and NFP with the general NF forwarding graph. We introduce an auxiliary/augmented network layer and we develop pseudo-polynomial algorithms to solve the robust NF and SFC $s-t$ path problems. We present approximation algorithms for robust NFV with the SFC-Fork as the NF forwarding graph and adopt a two-step parameterized path reduction technique, which can serve in multiple types of approximation algorithms when the underlying network has the branching structure. Furthermore, we propose exact solution approaches via MILP with general forwarding graph structures. Computational results show that our proposed solution approaches are capable of managing robust NFP with non-chained NF and SFC requests in both small and large-size national-wide physical networks. 

In further research, we would like to consider physical node capacity and NF deployment costs. We are also interested in evaluating the costs to introduce more NF-enabled nodes in the physical network. Another line of investigation is on the scenarios of shared risk group failure(s) and physical link failure(s) and their impacts on NF service reliability. Last, but not least, another research direction is to relax the assumptions on independent node failures, the correlation among NF-enabled node failures, and study their impacts on NF service reliability.

\section*{Appendix: Proof of Theorem 10}\label{sec:bifactSFCproof} % ~\ref{thm:forkAPP}
\begin{lemma}\label{lm:forkCost}
Given a feasible solution $\varphi_{\text{1fl}}$ of the 1-level facility location problem based on NFP-SFork reduction its corresponding NFP-SFork solution $\varphi_{\text{sfc}}$, we have $F(\varphi_{\text{sfc}})+C(\varphi_{\text{sfc}}) \leq F(\varphi_{\text{1fl}}) +C(\varphi_{\text{1fl}})$.
\end{lemma}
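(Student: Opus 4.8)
The plan is to establish the inequality $F(\varphi_{\text{sfc}}) + C(\varphi_{\text{sfc}}) \le F(\varphi_{\text{1fl}}) + C(\varphi_{\text{1fl}})$ by carefully tracking how each cost term in the 1-level facility location instance (obtained from the NFP-SFork reduction) maps back to a cost term in the reconstructed NFP-SFork solution. The key observation is that the reduction was deliberately designed so that the 1-level facility opening cost $F(\varphi_{\text{1fl}})$ already \emph{includes} the deployment and connection cost of the shared SFC service subpaths $p(j, i_{f_1})$, while the 1-level connection cost $C(\varphi_{\text{1fl}})$ includes (i) the request-to-disjoint-subpath connection cost $C(d,t)$, (ii) the disjoint subpath deployment-plus-connection cost $F(p(t,i(f_{b+1}))) + C(p(t,i(f_{b+1})))$, and (iii) the linking cost $C(p(i_{f_{b+1}}, i_{f_1})) + C(i_{f_{b+1}}, i_{f_1})$ between a disjoint subpath's endpoint and the shared subpath's endpoint. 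The reconstructed solution $\varphi_{\text{sfc}}$ deploys NFs exactly on the nodes of the chosen shared and disjoint subpaths and routes each request through the concatenation of its disjoint subpath and the shared subpath it connects to.

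First I would write out $F(\varphi_{\text{sfc}}) + C(\varphi_{\text{sfc}})$ explicitly as a sum over (a) NF-deployment cost of all nodes appearing on the selected shared subpaths, (b) NF-deployment cost of all nodes on the selected disjoint subpaths, (c) connection/arc cost along the shared subpaths, (d) connection/arc cost along the disjoint subpaths, and (e) the request-to-endpoint connection costs. Then I would write $F(\varphi_{\text{1fl}}) + C(\varphi_{\text{1fl}})$ using the reduction formulas (the facility-opening cost of $p(j,i_{f_1})$ and the five-term connection-cost expression displayed just before Theorem~\ref{thm:forkAPP}). The second step is a term-by-term comparison: each term in (a)--(e) is matched against a corresponding summand in the 1-level expression, and the matching is an \emph{equality} except where the reconstruction may collapse shared structure. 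Specifically, if two requests from the same SFC $\lambda \in \Lambda$ connect to the same shared subpath, the forest structure from Lemma~\ref{lm:forkForest} guarantees their overlapping portion is counted once in $\varphi_{\text{sfc}}$ but may be counted once per request (or once per $f_{b+1}$-enabled node index) in $\varphi_{\text{1fl}}$; this is precisely where the inequality, rather than equality, arises, and it always goes in the favorable direction because $\varphi_{\text{sfc}}$ never double-counts a node or arc.

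The main obstacle I anticipate is the bookkeeping around the branching node $f_b$ and the shared subpath: one must verify that when several disjoint subpaths attach to the same shared subpath $p(j,i_{f_1})$, the NF-deployment cost of the shared nodes and the arc costs along $p(j,i_{f_1})$ are paid \emph{once} in $\varphi_{\text{sfc}}$ while the 1-level instance charged its full facility-opening cost $F(p(j,i_{f_1}))$ each time that facility is ``used'' — but since in a 1-level facility location solution an opened facility's cost is paid exactly once regardless of how many clients connect to it, this actually lines up as an equality on the shared part, and the strict slack (if any) comes only from the disjoint-subpath side via Lemma~\ref{lm:forkForest}. I would therefore conclude by invoking Lemma~\ref{lm:forkForest} to assert the forest (no-double-counting) structure on the disjoint subpaths of each $\lambda$, which upgrades the term-by-term equalities on the remaining pieces to the desired $\le$, completing the proof of Lemma~\ref{lm:forkCost} and hence inequality~(\ref{forkproof:1app}).
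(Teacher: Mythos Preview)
Your overall strategy---expand both sides using the reduction formulas and compare term by term---is exactly what the paper does, but you over-engineer it and misplace the source of the inequality.

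The paper's argument is a four-line bookkeeping check that does \emph{not} invoke Lemma~\ref{lm:forkForest}. It writes out
\[
F(\varphi_{\text{1fl}})=\sum_{i_{f_{b+1}}\in X_{b+1}} F\bigl(p(i_{f_{b+1}}, i_{f_1})\bigr),
\qquad
C(\varphi_{\text{1fl}})=\sum_{d\in D}[\,\cdots\,]+\sum_{t} F\bigl(p(t,i_{f_{b+1}})\bigr),
\]
and then observes that, by the reconstruction rule, $C(\varphi_{\text{sfc}})$ is given by the \emph{identical} expression as $C(\varphi_{\text{1fl}})$ (the disjoint-subpath deployment cost is bundled into the connection side in the paper's accounting via $O^2$), while $O(\varphi_{\text{sfc}})\le \sum_{i_{f_{b+1}}\in X_{b+1}} F(p(i_{f_{b+1}}, i_{f_1}))=F(\varphi_{\text{1fl}})$ simply because deploying each shared-subpath node once costs no more than summing the per-path opening costs. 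That is the entire proof.

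Two concrete issues with your write-up. First, your appeal to Lemma~\ref{lm:forkForest} is misplaced: that lemma transforms an arbitrary NFP-SFork solution $\chi_{\text{sfc}}$ into a forest-structured $\psi_{\text{sfc}}$, but here $\varphi_{\text{sfc}}$ is produced by a fixed reconstruction from $\varphi_{\text{1fl}}$ and you have neither license nor need to modify it; Lemma~\ref{lm:forkForest} enters only later, in the proof of Lemma~\ref{lm:fork13app}. Second, you locate the slack on the disjoint-subpath side, whereas in the paper's accounting the disjoint contribution sits inside $C(\cdot)$ and matches exactly; the only ``$\le$'' (rather than ``$=$'') is $O(\varphi_{\text{sfc}})\le F(\varphi_{\text{1fl}})$ on the \emph{shared}-subpath deployment. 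If you strip the appeal to Lemma~\ref{lm:forkForest} and correct which side carries the slack, your argument collapses to the paper's.
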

\begin{proof}
Given a feasible solution $\varphi_{\text{1fl}}$ of the 1-level facility location problem, we have
\begin{align}
&C(\varphi_{\text{1fl}})=\sum_{d\in D}\left[C(d,t)+C(p(t,i_{f_{b+1}})+C(p(i_{f_{b+1}},i_{f_1}))\right] +\nonumber\\
&\qquad\qquad \sum_{t=1,\cdots, |D|}F(p(t,i_{f_{b+1}})),\nonumber\\
&F(\varphi_{\text{1fl}})=\sum_{i_{f_{b+1}}\in X_{b+1}} F(p(i_{f_{b+1}}, i_{f_1})).
\end{align}
where $X_{b+1}$ contains NF-enabled nodes selected by paths $\cup_{t\in 1,\cdots, |D|}p(t,i_{f_{b+1}})$.
For the corresponding NFP-SFork solution $\varphi_{\text{sfc}}$, we have
\begin{align}
&O(\varphi_{\text{sfc}}) \leq \sum_{i_{f_{b+1}}\in X_{b+1}} F(p(i_{f_{b+1}}, i_{f_1}))
,\nonumber\\
&C(\varphi_{\text{sfc}})=\sum_{d\in D}\left[C(d,t)+C(p(t,i_{f_{b+1}}))+C(p(i_{f_{b+1}},i_{f_1}))\right]\nonumber\\
&\qquad\qquad +\sum_{t=1, \cdots, |D|}F(p(t,i_{f_{b+1}})).
\end{align}
Hence, the conclusion holds.
\end{proof}

\begin{lemma}\label{lm:forkForest}
Given a solution of NFP-SFork, $\chi_{\text{sfc}}$, then there exists another solution $\psi_{\text{sfc}}$ such that\\
(1) in solution $\chi_{\text{sfc}}$ and given $\lambda\in \Lambda$, paths $\rho_{1}=(i_{f_{1(\lambda)}}, i_{f_{2(\lambda)}},\cdots, i_{f_{\gamma(\lambda)}})$ and $\rho_{2}=(i'_{f_{1(\lambda)}}, i'_{f_{2(\lambda)}},\cdots, i'_{f_{\gamma(\lambda)}})$ with $\ell(\lambda)$ for some $i_{\ell(\lambda)}=i'_{\ell(\lambda)}$, then in solution $\psi_{\text{sfc}}$,  $i_{j(\lambda)}=i'_{j(\lambda)}$ with $1\leq j(\lambda)\leq \ell(\lambda)$, and \\
(2) in solution  $\chi_{\text{sfc}}$ with given $\lambda, \lambda'\in \Lambda$, paths $\rho_{1}=(i_{f_{1(\lambda)}}, i_{f_{2(\lambda)}},\cdots, i_{f_{|\lambda|}})$ and $\rho_{2}=(i'_{f_{1(\lambda')}}, i'_{f_{2(\lambda')}},\cdots, i'_{f_{|\lambda'|}})$ with $i_{f_{\ell(\lambda)}}=i'_{f_{\ell'(\lambda')}}$ for some $\ell(\lambda)$, $\ell'(\lambda')$, then, in solution $\psi_{\text{sfc}}$, $i_{j}=i'_{j}$, for all $1\leq j \leq \ell(\lambda)$; and\\
(3) $O(\psi_{\text{sfc}})\leq O(\chi_{\text{sfc}})$ and $C_{|\lambda|}(\psi_{\text{sfc}})= C_{|\lambda|}(\chi_{\text{sfc}})$, and
$\sum_{1\leq j\leq |\lambda|-1}C_{j}(\psi_{\text{sfc}})\leq \sum_{1\leq j\leq |\lambda|-1}C_{j}(\chi_{\text{sfc}})$
\end{lemma}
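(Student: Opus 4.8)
\textbf{Proof proposal for Lemma~\ref{lm:forkForest}.}

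The plan is to prove the three claims by an exchange argument, mirroring the structure of Lemma~\ref{lm:forest} but carried out twice because an SFC-Fork has two sources of "merging" that must be aligned: merges \emph{within} a single SFC chain $\lambda$ (claim~(1)) and merges \emph{across} two chains $\lambda,\lambda'$ that share the common prefix $(f_1,\dots,f_b)$ (claim~(2)). First I would fix a solution $\chi_{\text{sfc}}$ and, for each pair of service paths that coincide at some level, define a canonical rerouting: whenever $\rho_1$ and $\rho_2$ agree at level $\ell$ (i.e.\ $i_{f_{\ell(\lambda)}}=i'_{f_{\ell(\lambda)}}$), replace the earlier segment of $\rho_2$ by the corresponding earlier segment of $\rho_1$, so that the two paths become identical on levels $1,\dots,\ell$. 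Because $G_S(V_S,E_S)$ is at least two-connected (Proposition~\ref{prop:2connNFNet}) and the arc construction of Algorithm~\ref{alg:sfcArc} respects the SFC order (Proposition~\ref{prop:sfcorder}), the rerouted path is still a valid SFC service path visiting the required NF-enabled nodes in order; this is the feasibility check that makes $\psi_{\text{sfc}}$ well defined. Iterating this operation over all coinciding pairs, following the level order from $\ell$ downward, terminates and yields a solution in which the union of service paths of each chain $\lambda$ forms a forest rooted at the first NF-enabled node, which is exactly claims~(1) and~(2).

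Next I would verify claim~(3), the cost bookkeeping. For the \textbf{deployment cost}, after the rerouting the set of NF-enabled nodes used at each level is a subset of the original set (we only ever \emph{reuse} nodes already opened on $\rho_1$ and possibly \emph{drop} nodes that were unique to the old prefix of $\rho_2$), so $O(\psi_{\text{sfc}})\le O(\chi_{\text{sfc}})$. For the \textbf{connection cost at the last level} $|\lambda|$: the rerouting never touches the segment of any path beyond the coincidence level $\ell<|\lambda|$, hence the terminal arc incident to level $|\lambda|$ (and the request-to-path connection cost) is untouched, giving $C_{|\lambda|}(\psi_{\text{sfc}})=C_{|\lambda|}(\chi_{\text{sfc}})$. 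For the \textbf{remaining levels} $1\le j\le|\lambda|-1$: here the inequality can go the wrong way arc-by-arc, so instead I would argue globally — the total cost over levels $1,\dots,|\lambda|-1$ in $\psi_{\text{sfc}}$ counts each forest arc once, and every forest arc of $\psi_{\text{sfc}}$ is an arc that already appeared on $\rho_1$ in $\chi_{\text{sfc}}$, whereas $\chi_{\text{sfc}}$ additionally pays for the now-discarded prefix arcs of the rerouted paths; summing, $\sum_{j\le|\lambda|-1}C_j(\psi_{\text{sfc}})\le\sum_{j\le|\lambda|-1}C_j(\chi_{\text{sfc}})$. This is the same telescoping accounting as in Lemma~\ref{lm:forest}(2), now applied to the fork.

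The main obstacle I anticipate is in claim~(2): aligning paths \emph{across} two distinct chains $\lambda$ and $\lambda'$ is only legitimate when the shared node $i_{f_{\ell(\lambda)}}=i'_{f_{\ell'(\lambda')}}$ actually sits on the common prefix $(f_1,\dots,f_b)$ of the fork — if it were a node past the branching point $f_b$, merging the prefixes would force two different SFC chains through the same sequence of NF-enabled nodes, which need not be feasible and would corrupt the forest structure. So the careful step is to show that any cross-chain coincidence relevant to the reduction occurs at level $\le b$ (this is where Definition~\ref{def:fork} of the single branching node $f_b$ and the "no duplicate NF request" assumptions are used), after which the within-prefix two-connectivity from Proposition~\ref{prop:2connNFNet} supplies the needed rerouting and claim~(2) reduces to the same argument as claim~(1). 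Once the feasibility of the rerouting is pinned down, the cost inequalities in claim~(3) are routine.
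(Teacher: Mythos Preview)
Your proposal is correct and follows essentially the same exchange argument as the paper: whenever two service paths agree at some level $\ell$, replace the prefix of one by the prefix of the other, observe the result is still feasible, and read off the cost inequalities from the fact that nodes and prefix arcs are only reused or dropped while suffix arcs (in particular the level-$|\lambda|$ connections) are untouched. The paper's own proof is considerably terser---it simply writes down the altered $\rho_2$ for each of Claims~1 and~2 and dispatches Claim~3 in one line---whereas you add two pieces of caution the paper omits: invoking Propositions~\ref{prop:sfcorder} and~\ref{prop:2connNFNet} for feasibility, and worrying that a cross-chain coincidence in Claim~(2) must lie on the common prefix $(f_1,\dots,f_b)$. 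The first is harmless but unnecessary (the rerouted path is literally the concatenation of a prefix of $\rho_1$ and a suffix of $\rho_2$ glued at the shared node, so every arc already exists), and the second is a legitimate subtlety the paper sweeps under ``similar to the proof above''; neither changes the underlying approach.
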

\begin{proof}
Proof of Claim 1: Given a solution $\chi_{\text{sfc}}$ of NFP-SFork, we show how to obtain $\psi_{\text{sfc}}$ based on $\chi'_{\text{sfc}}$, a feasible solution for NFP-SFork.
In $\chi_{\text{sfc}}$, there exist paths $\rho_{1}=(i_{f_{1(\lambda)}}, i_{f_{2(\lambda)}},\cdots, i_{f_{\gamma(\lambda)}})$ and $\rho_{2}=(i'_{f_{1(\lambda)}}, i'_{f_{2(\lambda)}},\cdots, i'_{f_{\gamma(\lambda)}})$ with $\ell_{\lambda}$ for some $i_{f_{\ell(\lambda)}}=i'_{f_{\ell(\lambda)}}$, but for all $1\leq h(\lambda)\leq \ell_{\lambda}$, $\rho_{1}(f_{h(\lambda)})\neq \rho_{2}(f_{h(\lambda)})$.
We let $\rho_{2}(f_{\ell(\lambda)})=\rho_{1}(f_{\ell(\lambda)})$, and only deploy VNF instances on $\rho_{1}$ without the deployment on $\rho_{2}(f_{\ell(\lambda)})$,
which is still be a feasible solution for NFP-SFork. In $\psi_{\text{sfc}}$, $\rho_{1}=(i_{f_{1(\lambda)}}, i_{f_{2(\lambda)}},\cdots, i_{f_{\ell(\lambda)}}, \cdots, i_{f_{\gamma(\lambda)}})$ and $\rho_{2}=(i'_{f_{1(\lambda)}}, i'_{f_{2(\lambda)}},\cdots, i_{f_{\ell(\lambda)}}, \cdots, i'_{f_{\gamma(\lambda)}})$ with $\ell_{\lambda}$.

Proof of Claim 2: Similar to thee proof above, with solution $\chi_{\text{sfc}}$, we alter the SFC service path $\rho_{2}$ as 
\begin{align*}
\rho_{2}=(i_{f_{1(\lambda)}}, i_{f_{2(\lambda)}},\cdots, i_{f_{\ell(\lambda)}},\cdots, i'_{f_{|\lambda'|}})),
\end{align*}
which still provides a feasible solution for NFP-SFork. Solution $\psi_{\text{sfc}}$ takes $\rho_1$ and alters $\rho_2$ as the service paths.  \\ %Furthermore, following the path adjustment, $(\overline{y},\overline{\phi})$ could be adjusted as $(y,\phi)$.\\
With Claims 1 and  2, Claim 3 holds.
\end{proof}

\begin{lemma}\label{lm:fork13app}
$\alpha F(\psi_{\text{1fl}})+ \beta C(\psi_{\text{1fl}}) \leq  \alpha O(\psi_{\text{sfc}})+ 3\beta C(\psi_{\text{sfc}})$
\end{lemma}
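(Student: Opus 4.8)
\textbf{Proof proposal for Lemma~\ref{lm:fork13app}.}

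The plan is to mirror the structure of the reduction established in Theorem~\ref{thm:klevel} of~\cite{ageev2004improved} and adapt it to the fork topology. First, I would fix an optimal (or near-optimal) NFP-SFork solution $\psi_{\text{sfc}}$ and, invoking Lemma~\ref{lm:forkForest}, replace it (without loss of generality) by one whose joint SFC paths form a forest: all service paths sharing a prefix through $f_{b}$ coincide on that prefix, and any two paths that meet at a later NF-enabled node coincide from the first NF up to that node. This normalization is what makes the facility/connection-cost bookkeeping tractable, since it guarantees the NF deployment on the shared subpath is counted once. I would then push $\psi_{\text{sfc}}$ through the \textbf{NFP-SFork reduction to 1-level facility location} described before Theorem~\ref{thm:forkAPP}: the shared SFC service subpaths $p(j,i_{f_1})$ become facilities with setup cost $O^{1}$ (their deployment plus connection cost), and each $f_{b+1}$-enabled node with its aggregated disjoint subpaths and requests becomes a client, whose connection cost is $C(d,t)+C(p(t,i(f_{b+1})))+C(p(i_{f_{b+1}},i_{f_1}))+C(i_{f_{b+1}},i_{f_1})+F(p(t,i(f_{b+1})))$. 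Call the resulting 1FL solution $\psi_{\text{1fl}}$.

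Next I would account for the cost inflation incurred by the reduction. The facility cost $F(\psi_{\text{1fl}})$ equals the NF deployment cost $O^{1}$ of the shared subpaths, so $F(\psi_{\text{1fl}})=O^{1}(\psi_{\text{sfc}})\le O(\psi_{\text{sfc}})$, which handles the $\alpha$ term directly (with factor $1\le 3$ to spare, or more carefully, the $O^{2}$ portion of the deployment cost is absorbed into $C(\psi_{\text{1fl}})$). The connection cost $C(\psi_{\text{1fl}})$ decomposes into (i) the request-to-disjoint-subpath distances, (ii) the disjoint-to-shared splice distances, and (iii) the deployment-plus-connection cost of the disjoint subpaths. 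Terms (i) and (iii) map bijectively onto the corresponding pieces of $C(\psi_{\text{sfc}})$ (the tail levels $f_{b+1},\dots,\gamma(\lambda)$), contributing a factor $1$. The term (ii), the splice cost $C(p(i_{f_{b+1}},i_{f_1}))+C(i_{f_{b+1}},i_{f_1})$, is where the factor-$3$ slack is consumed exactly as in~\cite{ageev2004improved}: by the triangle inequality the distance from a request's source to the root of its shared subpath, through the chosen $f_{b+1}$-node, is at most the sum of the three legs, and the shared-subpath connection cost gets charged up to three times across the aggregation. Assembling these, $\alpha F(\psi_{\text{1fl}})+\beta C(\psi_{\text{1fl}})\le \alpha O(\psi_{\text{sfc}})+3\beta C(\psi_{\text{sfc}})$.

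The main obstacle I anticipate is not any single inequality but the careful separation of the deployment cost $O(\psi_{\text{sfc}})$ into the $O^{1}$ part (which becomes facility cost) and the $O^{2}$ part (which becomes connection cost), combined with making sure that an NF-enabled node lying on both a shared subpath and several disjoint subpaths is never double-charged — precisely the role of the forest structure from Lemma~\ref{lm:forkForest}. One must verify that after the normalization each NF-enabled node and each arc in the fork is unique, so that summing $O$ and $C$ over the reduced instance recovers at most three copies of the original connection cost and exactly one copy of the deployment cost. Once that bookkeeping is pinned down, the chain of inequalities~(\ref{forkproof:1app})--(\ref{forkproof:forkCost}) closes and Theorem~\ref{thm:forkAPP} follows.
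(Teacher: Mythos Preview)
Your overall architecture matches the paper's: normalize $\psi_{\text{sfc}}$ to a forest via Lemma~\ref{lm:forkForest}, then manufacture a feasible 1-level solution whose $(\alpha,\beta)$-weighted cost is controlled by $\alpha O(\psi_{\text{sfc}})+3\beta C(\psi_{\text{sfc}})$, with the factor~$3$ coming from a triangle-inequality rerouting as in~\cite{ageev2004improved}. That much is right.

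Where your sketch has a real gap is the phrase ``push $\psi_{\text{sfc}}$ through the reduction.'' The facilities in the reduced 1FL instance are \emph{not} the shared subpaths that $\psi_{\text{sfc}}$ happens to use; they are the specific parameterized paths $p(j,i_{f_1})=\arg\min_{p}\{\alpha O(p)+j\beta C(p)\}$ produced by Algorithm~\ref{alg:twoStepPathReduction}. So you cannot simply read off $F(\psi_{\text{1fl}})=O^{1}(\psi_{\text{sfc}})$; the shared subpath in $\psi_{\text{sfc}}$ need not be any $p(j,i_{f_1})$ at all. The paper closes this gap in two moves you omit. First, it groups requests by their $f_{b+1}$-node (sets $D(\mu)$) and groups $f_{b+1}$-nodes by their root $\nu\in X_1$ (sets $I(\nu)$), and builds an \emph{intermediate} rerouted solution $\varsigma_{\text{sfc}}$ in which every member of a group uses that group's cheapest path $p(\mu)$ or $p(\nu)$; the triangle inequality (Fig.~\ref{fig:tEq}) then gives $C_{\mathcal{P}_2}(\varsigma_{\text{sfc}})+C_\gamma(\varsigma_{\text{sfc}})\le 3C_{\mathcal{P}_2}(\psi_{\text{sfc}})+C_\gamma(\psi_{\text{sfc}})$ and the analogous bound for $\mathcal{P}_1$. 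Second, it opens in the 1FL instance the parameterized facilities $p(\nu,|I(\nu)|)$ and uses the defining $\arg\min$ property to get $\alpha O(p(\nu,|I(\nu)|))+\beta|I(\nu)|C(p(\nu,|I(\nu)|))\le \alpha O(p(\nu))+\beta|I(\nu)|C(p(\nu))$, which is exactly what links the 1FL cost back to $\varsigma_{\text{sfc}}$ (inequality~(\ref{fm:pathConstruction})). Without this pair --- the rerouting to a single representative path per group, and the invocation of the parameterized-path minimization --- your bookkeeping on the ``splice cost'' does not by itself produce a feasible 1FL solution whose cost you can compare. Once you insert these two steps, your outline becomes the paper's proof.
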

\begin{proof}
Given a solution $\chi_{\text{sfc}}$, we construct $\psi_{\text{sfc}}$. Without loss of generality, we assume that $\psi_{\text{sfc}}$ satisfies Lemma~\ref{lm:forest}. Thus, we have $O(\psi_{\text{sfc}}) \leq O(\chi_{\text{sfc}})$.

We let $X_{1}\in N(f_1)$ and $X_{b+1}\in N(f_{b+1})$ be two NF-enabled node sets to be deployed with NF instances. With Lemma~\ref{lm:forkForest}, we let $D(\mu)\subset D$ be the request set connecting to $\mu \in X_{b+1}$, where $p(\mu)=\arg\min_{d\in D(\mu)}C(p(d,i_{f_{b+1}}))$. We let $I(\nu)\in N(f_{b+1})$ be NF $b$-enabled nodes connecting to $\nu\in X_{1}$ following the solution $\chi_{\text{sfc}}$. We let 
\begin{align*}
p(\nu)=\arg\min_{i_{f_{b+1}\in I(f_{b+1}})}C(p(i_{f_{b+1}},i_{f_1})),
\end{align*} 
$\mathcal{P}_{1}$ be the SFC service subpath set in $\psi_{\text{sfc}}$ connecting NF 1 enabled nodes to the NF $b$ enabled nodes; and $\mathcal{P}_{2}$ be the SFC service subpath set in $\psi_{\text{sfc}}$ connecting NF ($b+1$) enabled node to NF $|\gamma(\lambda)|$ enabled node.
We create a new solution, $\varsigma_{\text{sfc}}$, for NFP-SFork where all demands in $D(\mu)$ are connected to $p(\mu)$, and $j\in I(f_{b+1})(\nu)$ are connected to $p(\nu)$.

Hence, we decompose the connection cost of NFP-SFork into four parts:
\begin{align*}
&C(\psi_{\text{sfc}})=C_{\gamma}(\psi_{\text{sfc}}) + C_{b}(\psi_{\text{sfc}})+ C_{\mathcal{P}_{1}}(\psi_{\text{sfc}})+C_{\mathcal{P}_{2}}(\psi_{\text{sfc}})\\
&\text{where}\\
&C_{\gamma}(\psi_{\text{sfc}})=\sum_{(d,i_{f_{\gamma(\lambda)}})\in \bigcup_{d\in D}p(d,i_{f_{b+1}})}  C(d,i_{f_{\gamma(\lambda)}}),\\
       &C_b(\psi_{\text{sfc}}) =\sum_{(i_{f_{b+1}},i_{f_{b}})\in \bigcup_{i_{f_{b+1}}\in X_{b+1}}p(i_{f_{b+1}},i_{f_{1}})}C(i_{f_{b+1}},i_{f_{b}}),\\
       &C_{\mathcal{P}_{1}}(\psi_{\text{sfc}})=\sum_{i_{f_{b+1}}\in X_{b+1}}C(p(i_{f_{b+1}}, i_{f_1})),\\
       &C_{\mathcal{P}_{2}}(\psi_{\text{sfc}})=\sum_{t= 1,\cdots, |D|}C(p(t,i_{f_{b+1}})).
\end{align*}
For $d\in D(\mu)$ with $\mu \in X_{b}$,
\begin{align*}
    C_{\mathcal{P}_{2}}(\varsigma_{\text{sfc}})+C_{\gamma}(\varsigma_{\text{sfc}})
\leq 3C_{\mathcal{P}_{2}}(\psi_{\text{sfc}})+C_{\gamma}(\psi_{\text{sfc}})
\end{align*}
With the fact that given a demand $d\in D(\mu)$, $C(p(u),\varsigma_{\text{sfc}})+C((d,p(u)),\varsigma_{\text{sfc}}) \leq 3C(p(d),\psi_{\text{sfc}})+C((d,p(u)),\psi_{\text{sfc}})$
follows the triangle inequality (as illustrated in Fig~\ref{fig:tEq}).
\begin{figure}
\centering
\includegraphics[scale=0.32]{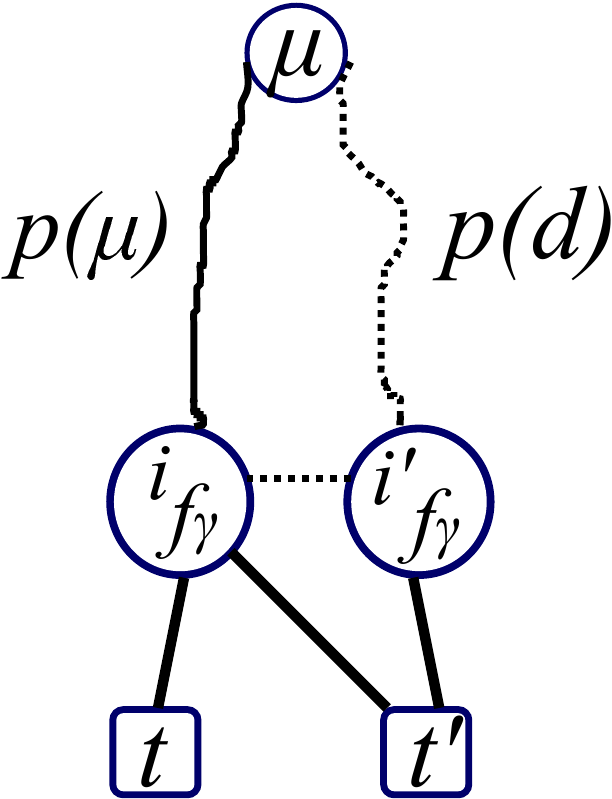}
\caption{Triangle inequality for connection costs}
\label{fig:tEq}
\end{figure}
Through the similar idea, we have
\begin{align*}
    C_{\mathcal{P}_{1}}(\varsigma_{\text{sfc}})+C_{(b,b+1)}(\varsigma_{\text{sfc}})
\leq 3C_{\mathcal{P}_{1}}(\psi_{\text{sfc}})+C_{(b,b+1)}(\psi_{\text{sfc}})
\end{align*}
We create the 1-level facility location solution, $\varsigma_{\text{1fl}}$, as follows:, all NF-enabled nodes on path $p(\nu,|I(\nu)|)$ and $p(j,|D_{\mu}|)$ with $j\in X_{b+1}$ are deployed with NFs, and $d$ is connected to path $p(j,|D_{\mu}|)$ where $d\in D_{\mu}$, and $p(j,|D(\mu)|)$ is connected to $p(\nu, |I(\nu)|)$, where $j\in I(\nu)$.
\begin{align}
&\alpha O(\varsigma_{\text{1fl}})+ \beta C(\varsigma_{\text{1fl}})\nonumber\\
=&\alpha \sum_{\nu\in X_1} O(p(\nu,|I(\nu)|))+ \beta \sum_{\nu\in X_{1}}\sum_{j\in I(\nu)}C(p(\nu, |I(\nu)|))\nonumber\\
&+\beta C_{b} + \beta \sum_{\mu\in X_{b+1}}\sum_{d\in D(\mu)}[C(p(d,\mu))+ O(p(d,\mu))]+\beta  C_{\gamma}\nonumber\\
\leq &\sum_{\nu\in X_1} \left[\alpha O(p(\nu,|I(\nu)|))+ \beta |I(\nu)|C(p(\nu, |I(\nu)|))\right]+\beta C_{b}\nonumber\\
&+ \sum_{\mu\in X_{b+1}}\beta |D(\mu)|[C(p(d,\mu))+ O(p(d,\mu))]+\beta  C_{\gamma}\nonumber\\
\leq &\sum_{\nu\in X_1}\left[\alpha O(p(\nu)) + \beta |I(\nu)|C(p(\nu))\right]+\beta C_{b}\nonumber\\
&+\sum_{\mu\in X_{b+1}}|D(\mu)|\left[\beta O(p(\mu))+ \beta C(p(\mu)) \right]+\beta C_{\gamma}\label{fm:pathConstruction}\\
\leq &\alpha O_{1}(\varsigma_{\text{sfc}})+\beta \left[C_{\mathcal{P}_{1}}(\varsigma_{\text{sfc}})
+ C_{b}(\varsigma_{\text{sfc}}) + C_{\mathcal{P}_{2}}(\varsigma_{\text{sfc}})\right.\nonumber\\
&\left.+ C_{\gamma}(\varsigma_{\text{sfc}})+O_{2}(\varsigma_{\text{sfc}})\right]\nonumber
\end{align}
where inequality (\ref{fm:pathConstruction}) holds with two-step parameterized path reduction.

Hence, the conclusion holds.
\end{proof}

\bibliographystyle{IEEEtran}
\bibliography{net19-req}

% Generated by IEEEtran.bst, version: 1.14 (2015/08/26)
\begin{thebibliography}{10}
\providecommand{\url}[1]{#1}
\csname url@samestyle\endcsname
\providecommand{\newblock}{\relax}
\providecommand{\bibinfo}[2]{#2}
\providecommand{\BIBentrySTDinterwordspacing}{\spaceskip=0pt\relax}
\providecommand{\BIBentryALTinterwordstretchfactor}{4}
\providecommand{\BIBentryALTinterwordspacing}{\spaceskip=\fontdimen2\font plus
\BIBentryALTinterwordstretchfactor\fontdimen3\font minus
  \fontdimen4\font\relax}
\providecommand{\BIBforeignlanguage}[2]{{%
\expandafter\ifx\csname l@#1\endcsname\relax
\typeout{** WARNING: IEEEtran.bst: No hyphenation pattern has been}%
\typeout{** loaded for the language `#1'. Using the pattern for}%
\typeout{** the default language instead.}%
\else
\language=\csname l@#1\endcsname
\fi
#2}}
\providecommand{\BIBdecl}{\relax}
\BIBdecl

\bibitem{alliance20155g}
\BIBentryALTinterwordspacing
N.~Alliance, ``{5G} white paper,'' Mar. 2015. [Online]. Available:
  \url{https://www.ngmn.org/5g-white-paper/5g-white-paper.html}
\BIBentrySTDinterwordspacing

\bibitem{osseiran2014scenarios}
A.~Osseiran, F.~Boccardi, V.~Braun, K.~Kusume, P.~Marsch, M.~Maternia,
  O.~Queseth, M.~Schellmann, H.~Schotten, H.~Taoka \emph{et~al.}, ``Scenarios
  for {5G} mobile and wireless communications: the vision of the {METIS}
  project,'' \emph{IEEE Communications Magazine}, vol.~52, no.~5, pp. 26--35,
  2014.

\bibitem{abdelwahab2016network}
S.~Abdelwahab, B.~Hamdaoui, M.~Guizani, and T.~Znati, ``Network function
  virtualization in {5G},'' \emph{IEEE Communications Magazine}, vol.~54,
  no.~4, pp. 84--91, 2016.

\bibitem{han2015network}
B.~Han, V.~Gopalakrishnan, L.~Ji, and S.~Lee, ``Network function
  virtualization: Challenges and opportunities for innovations,'' \emph{IEEE
  Communications Magazine}, vol.~53, no.~2, pp. 90--97, 2015.

\bibitem{peng2015fronthaul}
M.~Peng, C.~Wang, V.~Lau, and H.~V. Poor, ``Fronthaul-constrained cloud radio
  access networks: Insights and challenges,'' \emph{IEEE Wireless
  Communications}, vol.~22, no.~2, pp. 152--160, 2015.

\bibitem{mijumbi2016network}
R.~Mijumbi, J.~Serrat, J.-L. Gorricho, N.~Bouten, F.~De~Turck, and R.~Boutaba,
  ``Network function virtualization: State-of-the-art and research
  challenges,'' \emph{IEEE Communications Surveys \& Tutorials}, vol.~18,
  no.~1, pp. 236--262, 2016.

\bibitem{etsi20175G}
ETSI, ``Network functions virtualisation: White paper on {NFV} priorities for
  {5G},'' ETSI, Tech. Rep., 2017.

\bibitem{etsi15nfvi}
------, ``Network functions virtualisation ({NFV}): Infrastructure overview,''
  ETSI, Tech. Rep., 2015.

\bibitem{eramo2017approach}
V.~Eramo, E.~Miucci, M.~Ammar, and F.~G. Lavacca, ``An approach for service
  function chain routing and virtual function network instance migration in
  network function virtualization architectures,'' \emph{IEEE/ACM Transactions
  on Networking}, vol.~25, no.~4, pp. 2008--2025, 2017.

\bibitem{wang2017consistent}
W.~Wang, Y.~Liu, Y.~Li, H.~Song, Y.~Wang, and J.~Yuan, ``Consistent state
  updates for virtualized network function migration,'' \emph{IEEE Transactions
  on Services Computing}, 2017.

\bibitem{etsi2014nfvMO}
ETSI, ``Network functions virtualisation ({NFV}): Management and
  orchestration,'' ETSI, Tech. Rep. ETSI GS NFV-MAN 001, 2014.

\bibitem{xia2014VPool}
L.~Xia, Q.~Wu, D.~King, and H.~Yokota, ``{VNF} pool use cases,'' IETF, Tech.
  Rep. draft-king-vnfpool-mobile-use-case-02, 2014.

\bibitem{eramo2017migration}
V.~Eramo, M.~Ammar, and F.~G. Lavacca, ``Migration energy aware
  reconfigurations of virtual network function instances in {NFV}
  architectures,'' \emph{IEEE Access}, vol.~5, pp. 4927--4938, 2017.

\bibitem{hawilo2017orchestrating}
H.~Hawilo, M.~Jammal, and A.~Shami, ``Orchestrating network function
  virtualization platform: Migration or re-instantiation?'' in \emph{IEEE 6th
  International Conference on Cloud Networking (CloudNet)}.\hskip 1em plus
  0.5em minus 0.4em\relax IEEE, 2017, pp. 1--6.

\bibitem{han2017resiliency}
B.~Han, V.~Gopalakrishnan, G.~Kathirvel, and A.~Shaikh, ``On the resiliency of
  virtual network functions,'' \emph{IEEE Communications Magazine}, vol.~55,
  no.~7, pp. 152--157, 2017.

\bibitem{nobach2017statelet}
L.~Nobach, I.~Rimac, V.~Hilt, and D.~Hausheer, ``Statelet-based efficient and
  seamless {NFV} state transfer,'' \emph{IEEE Transactions on Network and
  Service Management}, vol.~14, no.~4, pp. 964--977, 2017.

\bibitem{wang2016transparent}
Y.~Wang, G.~Xie, Z.~Li, P.~He, and K.~Salamatian, ``Transparent flow migration
  for {NFV},'' in \emph{IEEE International Conference on Network Protocols
  (ICNP)}.\hskip 1em plus 0.5em minus 0.4em\relax IEEE, 2016, pp. 1--10.

\bibitem{RNDM18}
T.~Lin and Z.~Zhou, ``Robust virtual network function provisioning under random
  failures on network function enabled nodes,'' in \emph{Proc. of 10th
  International Workshop on Resilient Networks Design and Modeling (RNDM)},
  2018.

\bibitem{cheng2015enabling}
G.~Cheng, H.~Chen, H.~Hu, Z.~Wang, and J.~Lan, ``Enabling network function
  combination via service chain instantiation,'' \emph{Computer Networks},
  vol.~92, pp. 396--407, 2015.

\bibitem{jalalitabar2018service}
M.~Jalalitabar, ``Service function graph design and embedding in next
  generation internet,'' Ph.D. dissertation, Georgia State University, 2018.

\bibitem{nfv2014001}
G.~NFV-SWA, ``001-v1. 1.1-network functions virtualisation (nfv); virtual
  network functions architecture,'' 2014.

\bibitem{openSFC}
\BIBentryALTinterwordspacing
Openstack, ``Ietf sfc encapsulation,'' Feb. 2018. [Online]. Available:
  \url{https://docs.openstack.org/networking-sfc/latest/contributor/ietf_sfc_encapsulation.html}
\BIBentrySTDinterwordspacing

\bibitem{bonfim2018integrated}
M.~S. Bonfim, K.~L. Dias, and S.~F. Fernandes, ``Integrated {NFV/SDN}
  architectures: A systematic literature review,'' \emph{arXiv preprint
  arXiv:1801.01516}, 2018.

\bibitem{nguyen2017sdn}
V.-G. Nguyen, A.~Brunstrom, K.-J. Grinnemo, and J.~Taheri, ``{SDN/NFV}-based
  mobile packet core network architectures: A survey,'' \emph{IEEE
  Communications Surveys \& Tutorials}, vol.~19, no.~3, pp. 1567--1602, 2017.

\bibitem{nguyen2017resource}
N.~C. Nguyen, P.~Wang, D.~Niyato, Y.~Wen, and Z.~Han, ``Resource management in
  cloud networking using economic analysis and pricing models: A survey,''
  \emph{IEEE Communications Surveys \& Tutorials}, 2017.

\bibitem{xie2016service}
Y.~Xie, Z.~Liu, S.~Wang, and Y.~Wang, ``Service function chaining resource
  allocation: A survey,'' \emph{arXiv preprint arXiv:1608.00095}, 2016.

\bibitem{carpio2017vnf}
F.~Carpio, S.~Dhahri, and A.~Jukan, ``{VNF} placement with replication for load
  balancing in {NFV} networks,'' in \emph{IEEE International Conference on
  Communications (ICC)}.\hskip 1em plus 0.5em minus 0.4em\relax IEEE, 2017, pp.
  1--6.

\bibitem{zhang2018theory}
J.~Zhang, W.~Wu, and J.~Lui, ``On the theory of function placement and chaining
  for network function virtualization,'' in \emph{Proceedings of the Eighteenth
  ACM International Symposium on Mobile Ad Hoc Networking and Computing}.\hskip
  1em plus 0.5em minus 0.4em\relax ACM, 2018, pp. 91--100.

\bibitem{zeng2016orchestrating}
M.~Zeng, W.~Fang, and Z.~Zhu, ``Orchestrating tree-type {VNF} forwarding graphs
  in inter-{DC} elastic optical networks,'' \emph{Journal of Lightwave
  Technology}, vol.~34, no.~14, pp. 3330--3341, 2016.

\bibitem{d2017game}
S.~D'Oro, L.~Galluccio, S.~Palazzo, and G.~Schembra, ``A game theoretic
  approach for distributed resource allocation and orchestration of softwarized
  networks,'' \emph{IEEE Journal on Selected Areas in Communications}, vol.~35,
  no.~3, pp. 721--735, 2017.

\bibitem{alameddine2017interplay}
H.~A. Alameddine, S.~Sebbah, and C.~Assi, ``On the interplay between network
  function mapping and scheduling in {VNF}-based networks: A column generation
  approach,'' \emph{IEEE Transactions on Network and Service Management},
  vol.~14, no.~4, pp. 860--874, 2017.

\bibitem{sallam2018shortest}
G.~Sallam, G.~R. Gupta, B.~Li, and B.~Ji, ``Shortest path and maximum flow
  problems under service function chaining constraints,'' \emph{arXiv preprint
  arXiv:1801.05795}, 2018.

\bibitem{gupta2018scalable}
A.~Gupta, B.~Jaumard, M.~Tornatore, and B.~Mukherjee, ``A scalable approach for
  service chain mapping with multiple {SC} instances in a wide-area network,''
  \emph{IEEE Journal on Selected Areas in Communications}, vol.~36, no.~3, pp.
  529--541, 2018.

\bibitem{cohen2015near}
R.~Cohen, L.~Lewin-Eytan, J.~S. Naor, and D.~Raz, ``Near optimal placement of
  virtual network functions,'' in \emph{IEEE INFOCOM}.\hskip 1em plus 0.5em
  minus 0.4em\relax IEEE, 2015, pp. 1346--1354.

\bibitem{bari2015orchestrating}
M.~F. Bari, S.~R. Chowdhury, R.~Ahmed, and R.~Boutaba, ``On orchestrating
  virtual network functions,'' in \emph{11th International Conference on
  Network and Service Management (CNSM)}.\hskip 1em plus 0.5em minus
  0.4em\relax IEEE, 2015, pp. 50--56.

\bibitem{chowdhury2009virtual}
N.~M.~K. Chowdhury, M.~R. Rahman, and R.~Boutaba, ``Virtual network embedding
  with coordinated node and link mapping,'' in \emph{IEEE INFOCOM}.\hskip 1em
  plus 0.5em minus 0.4em\relax IEEE, 2009, pp. 783--791.

\bibitem{rost2016service}
M.~Rost and S.~Schmid, ``Service chain and virtual network embeddings:
  Approximations using randomized rounding,'' \emph{arXiv preprint
  arXiv:1604.02180}, 2016.

\bibitem{even2016approximation}
G.~Even, M.~Rost, and S.~Schmid, ``An approximation algorithm for path
  computation and function placement in sdns,'' in \emph{International
  Colloquium on Structural Information and Communication Complexity}.\hskip 1em
  plus 0.5em minus 0.4em\relax Springer, 2016, pp. 374--390.

\bibitem{lin2016demand}
T.~Lin, Z.~Zhou, M.~Tornatore, and B.~Mukherjee, ``Demand-aware network
  function placement,'' \emph{Journal of Lightwave Technology}, vol.~34,
  no.~11, pp. 2590--2600, 2016.

\bibitem{5gNormaQoS}
5G-NORMA, ``Definition and specification of connectivity and {QoE/QoS}
  management mechanisms,'' 5G NORMA, Tech. Rep. H2020-ICT-2014-2 5G NORMA,
  2017.

\bibitem{fan2015grep}
J.~Fan, Z.~Ye, C.~Guan, X.~Gao, K.~Ren, and C.~Qiao, ``{GREP}: Guaranteeing
  reliability with enhanced protection in {NFV},'' in \emph{ACM SIGCOMM
  Workshop on Hot Topics in Middleboxes and Network Function
  Virtualization}.\hskip 1em plus 0.5em minus 0.4em\relax ACM, 2015, pp.
  13--18.

\bibitem{qu2016reliability}
L.~Qu, C.~Assi, K.~Shaban, and M.~Khabbaz, ``Reliability-aware service
  provisioning in {NFV}-enabled enterprise datacenter networks,'' in
  \emph{International Conference on Network and Service Management
  (CNSM)}.\hskip 1em plus 0.5em minus 0.4em\relax IEEE, 2016, pp. 153--159.

\bibitem{guha1999greedy}
S.~Guha and S.~Khuller, ``Greedy strikes back: Improved facility location
  algorithms,'' \emph{Journal of algorithms}, vol.~31, no.~1, pp. 228--248,
  1999.

\bibitem{krishnaswamy2012inapproximability}
R.~Krishnaswamy and M.~Sviridenko, ``Inapproximability of the multi-level
  uncapacitated facility location problem,'' in \emph{Proceedings of the
  twenty-third annual ACM-SIAM symposium on Discrete algorithms}.\hskip 1em
  plus 0.5em minus 0.4em\relax Society for Industrial and Applied Mathematics,
  2012, pp. 718--734.

\bibitem{li20131}
S.~Li, ``A 1.488 approximation algorithm for the uncapacitated facility
  location problem,'' \emph{Information and Computation}, vol. 222, pp. 45--58,
  2013.

\bibitem{shmoys1997approximation}
D.~B. Shmoys, {\'E}.~Tardos, and K.~Aardal, ``Approximation algorithms for
  facility location problems,'' in \emph{Proceedings of the twenty-ninth annual
  ACM symposium on Theory of computing}.\hskip 1em plus 0.5em minus 0.4em\relax
  ACM, 1997, pp. 265--274.

\bibitem{aardal19993}
K.~Aardal, F.~A. Chudak, and D.~B. Shmoys, ``A 3-approximation algorithm for
  the k-level uncapacitated facility location problem,'' \emph{Information
  Processing Letters}, vol.~72, no. 5-6, pp. 161--167, 1999.

\bibitem{ageev2004improved}
A.~Ageev, Y.~Ye, and J.~Zhang, ``Improved combinatorial approximation
  algorithms for the k-level facility location problem,'' \emph{SIAM Journal on
  Discrete Mathematics}, vol.~18, no.~1, pp. 207--217, 2004.

\bibitem{zhang2006approximating}
J.~Zhang, ``Approximating the two-level facility location problem via a
  quasi-greedy approach,'' \emph{Mathematical Programming}, vol. 108, no.~1,
  pp. 159--176, 2006.

\bibitem{byrka2007optimal}
J.~Byrka, ``An optimal bifactor approximation algorithm for the metric
  uncapacitated facility location problem,'' in \emph{Approximation,
  Randomization, and Combinatorial Optimization. Algorithms and
  Techniques}.\hskip 1em plus 0.5em minus 0.4em\relax Princeton, NJ, USA:
  Springer, 2007, pp. 29--43.

\bibitem{byrka2016improved}
J.~Byrka, S.~Li, and B.~Rybicki, ``Improved approximation algorithm for k-level
  uncapacitated facility location problem (with penalties),'' \emph{Theory of
  Computing Systems}, vol.~58, no.~1, pp. 19--44, 2016.

\bibitem{chechik2009robust}
S.~Chechik and D.~Peleg, ``Robust fault tolerant uncapacitated facility
  location,'' \emph{arXiv preprint arXiv:0912.3188}, 2009.

\bibitem{byrka2010lp}
J.~Byrka, M.~Ghodsi, and A.~Srinivasan, ``Lp-rounding algorithms for
  facility-location problems,'' \emph{arXiv preprint arXiv:1007.3611}, 2010.

\bibitem{ETSI2016Reliable}
\BIBentryALTinterwordspacing
ETSI, ``Network functions virtualisation ({NFV}); reliability; report on models
  and features for end-to-end reliability,'' ETSI, Tech. Rep. V1.1.1, 2016.
  [Online]. Available:
  \url{http://www.etsi.org/deliver/etsi_gs/NFV-REL/001_099/003/01.01.01_60/gs_nfv-rel003v010101p.pdf}
\BIBentrySTDinterwordspacing

\bibitem{vardhan2009finding}
H.~Vardhan, S.~Billenahalli, W.~Huang, M.~Razo, A.~Sivasankaran, L.~Tang,
  P.~Monti, M.~Tacca, and A.~Fumagalli, ``Finding a simple path with multiple
  must-include nodes,'' in \emph{2009 IEEE International Symposium on Modeling,
  Analysis \& Simulation of Computer and Telecommunication Systems}.\hskip 1em
  plus 0.5em minus 0.4em\relax IEEE, 2009, pp. 1--3.

\bibitem{chechik2014robust}
S.~Chechik and D.~Peleg, ``Robust fault tolerant uncapacitated facility
  location,'' \emph{Theoretical Computer Science}, vol. 543, pp. 9--23, 2014.

\end{thebibliography}
\end{document}